\let\savedbbchi\bbchi
\let\bbchi\savedbbchi
\newcommand{\ie}{{\it i.e.}}
\newcommand{\bvm}{\mbox{\sf bvM}_{k, \theta} }
\newcommand{\bvmm}{\mbox{\sf bvM}}
\newcommand{\bvmone}{\mbox{\sf bvM}_{k, e_1} }
\newcommand{\bvmtwo}{\mbox{\sf bvM}_{k, e_2} }
\newcommand{\ep}{\varepsilon}
\newsavebox{\astrutbox}
\sbox{\astrutbox}{\rule[-5pt]{0pt}{20pt}}
\newtheorem{theorem}{Theorem}[section]
\newtheorem{lemma}{Lemma}
\title{Direction-Dependent Turning Leads to\\
Anisotropic Diffusion and Persistence}
\author{Nadia Loy, \footnote{DISMA, Politecnico di Torino, Torino, Italy, \texttt{nadia.loy@polito.it}} 
        Thomas Hillen, \footnote{University of Alberta, \texttt{thillen@ualberta.ca}} 
        Kevin J. Painter\footnote{DIST, Politecnico di Torino, Torino, Italy, \texttt{kevin.painter@polito.it}} 
}
\date{\today}
\begin{document}

\label{firstpage}
\maketitle

\begin{abstract}
Cells and organisms follow aligned structures in their environment, a process that can generate persistent migration paths. Kinetic transport equations are a popular modelling tool for describing biological movements at the mesoscopic level, yet their formulations usually assume a constant turning rate. Here we relax this simplification, extending to include a turning rate that varies according to the anisotropy of a heterogeneous environment. We extend known methods of parabolic and hyperbolic scaling and apply the results to cell movement on micro-patterned domains. We show that inclusion of orientation dependence in the turning rate can lead to persistence of motion in an otherwise fully symmetric environment, and generate enhanced diffusion in structured domains.   
\end{abstract}

\textbf{Keywords} Cell migration, Boltzmann equation, persistence, direction dependent turning rate, macroscopic limits.

\textbf{Subject class (MSC 2020)}
35Q92 (Primary); 92C17, 46N60, 35Q20

\section{Introduction}

Movement of cells through tissues is critical during both healthy and pathological processes. Embryonic development relies on cells migrating from origin to final tissue destination, repair processes necessitate movement of fibroblasts and macrophages into the wound site, and migration of cancerous cells, unhappily, leads to tumour invasion and metastasis dissemination. Consequently, there is clear reason to understand the factors that guide cells with one such process, {\it contact guidance}, defining the movement of cells along linear/aligned tissue features, for example blood vessels, white matter brain fibres, or the collagen fibres of connective tissue.    

The influence of contact guidance on cell movement has been considered via a variety of mathematical approaches \cite{Dickinson_Tranquillo.93,Dickinson,Scianna_Preziosi.13}, with kinetic transport equations proving particularly popular \cite{Hillen.05,Painter.09,Othmer_Hillen.00,Chalub_Markowich_Perthame_Schmeiser.04}. Transport equations account for the microscopic features of movement, describing a migration path according to its statistical properties (turning rate, movement speed and movement direction), and such models for contact guidance have been successfully applied to, for example, glioma invasion \cite{Painter2013,Hillen_Painter.13,Swan.17}. Yet these studies have been simplified through taking turning rates to be independent of orientation, whereas 
experiments indicate considerably more complexity (e.g. \cite{riching2014,Ray2017,ray2018}.

Here we extend the known theory of scaling limits for transport equations in biology to cases where the turning rate is direction-dependent. In the physical context, our model can be seen as a non-homogeneous linear Boltzmann equation  with a micro-reversible process in which the \textit{cross-section} is factorised into a \textit{turning kernel} and a direction-dependent \textit{turning rate} \cite{Petterson, DegondGoudonPoupaud}. The direction dependence of the turning rate augmentation presents mathematical challenges, requiring reflection on how the involved function spaces should be modified for a Fredholm alternative argument to be constructed. We obtain expressions for the macroscopic diffusion and advection that incorporates the influence of sophisticated turning rate choices. We apply the model to the movement data of cells on oriented microfabricated surfaces generated by Doyle et al. \cite{doyle2009}, showing that strong alignment can lead to persistence of movement and, at a macroscopic level, enhanced diffusion. 

The outline of this paper is as follows. We use the remainder of this introduction to provide background on contact guidance and detail experimental investigations of cell movements on micro-pattern domains. We also review pertinent modelling literature, particularly using kinetic transport equations. In Section \ref{model_dir_dep}, we introduce the model, explain the basic assumptions, and introduce statistical meaningful quantities such as mean velocity, runtimes along fibres, directional variance, and persistence.  In Section \ref{mod-turn_macro} we consider two scaling limits, the {\it parabolic limit} (Theorem \ref{t:paralim})  and the {\it hyperbolic limit}. In particular, we generalise the technique proposed in \cite{Hillen.05} to the case of a direction-dependent turning rate and obtain a macroscopic diffusive limit with a distinctive structure. Section \ref{mod_turn_diff_eq} is used to discuss pertinent special cases, with Section \ref{mod_turn_num_tests} illustrating how the new dynamics can result from a direction-dependent turning rate. Extending the analysis to a particularly relevant form,
Section \ref{s:Doyle} is used to demonstrate the utility of the model for describing cell migration paths on microfabricated surfaces. We close with a discussion in Section \ref{s:discussion}.

\subsection{Background}

The extracellular matrix (ECM) is a fundamental ingredient of connective tissues and constitutes the major non-cellular component of tissues and organs. Cell migration through ECM can occur individually or collectively, with individual further classified into amoeboid and mesenchymal forms \cite{boekhorst2016}. Mesenchymal migration is typically slower, with a cell secreting degrading enzymes (e.g. MMPs) that create space for movement. Thus, mesenchymal migration can significantly alter the local ECM structure. Amoeboid migration is often faster, with frequent turns and shape changes allowing a cell to squeeze through matrix gaps; contacts are fleeting, leading to moderate and transient changes to the ECM architecture. Cells may switch between migration modes, for example in response to the biomechanical resistance of the ECM. This {\em mesenchymal-amoeboid transition} \cite{Wolf_Friedl.03} may potentially optimise tumour invasion \cite{Tum_Inv_Opt} in complex heterogeneous micro-environments \cite{Am_Mes}.  

Regardless of migration type, the architecture of the ECM is a major determinant of movement. ECM is formed
from various proteins, with collagen often the principal constituent \cite{alberts2014}. Individual collagen 
proteins are organised into cable-like fibres, collectively creating a network. Adhesive attachments between cells and matrix-binding sites anchor the cell and provide the focal points for exerting the forces needed for forward propulsion. Consequently, by protruding and pulling itself along fibres, cells follow the local topology of the matrix ({\em contact guidance}) \cite{dunn1976}. The mesh formed from collagen therefore offers an example of a {\em bidirectional anisotropic} network, bidirectional in the sense that movement preferentially follows fibres but no specific direction is favoured. Anisotropic bidirectional tissues extend to other environments, a particular relevant example being the brain's white matter. Here it is the long and bundled neuronal axons that generate the network and its arrangement is believed to be a key determinant in the anisotropic invasion of gliomas \cite{giese1996a,giese1996b}.

The question of how an anisotropic environment influences cell migration is highly suited to modelling and 
various approaches have been developed. Agent-based models that incorporate contact guidance include lattice-free 
particle approaches (e.g. \cite{dallon99,mcdougall06,schluter12}) and those based on the Cellular Potts Model (e.g. \cite{Scianna_Preziosi.13.2,Scianna_Preziosi.13,Scianna_Preziosi.14}) and other automata (e.g. \cite{Am_Mes}); the individual-level description is clearly advantageous for incorporating microscopic structure. Continuous models, though, have also been developed, for example the anisotropic biphasic theory (ABT) developed in \cite{Dickinson_Tranquillo.93} and transport equations studied in \cite{Dickinson}. A transport equation developed in \cite{Hillen.05} describes the contact-guided migration of mesenchymal (and amoeboid) cells in evolving anisotropic networks of unidirectional or bidirectional type, with this model extended and subjected to numerical exploration in \cite{Painter.09}. Transport models have a `stepping-stone' nature, lying at a point between an individual and macroscopic model: they sit at a mesoscopic level,
describing the statistical distribution of the individual microscopic velocities and positions through density distribution functions.
Subsequent up-scaling 
can generate a fully macroscopic model, typically of drift-diffusion nature, and capable of capturing 
movement at a large-tissue level. In the study of \cite{Hillen.05} the author employed such scaling techniques to 
recover macroscopic limits. 

\begin{figure}[t!]
    \centering
    \includegraphics[width=0.66\textwidth]{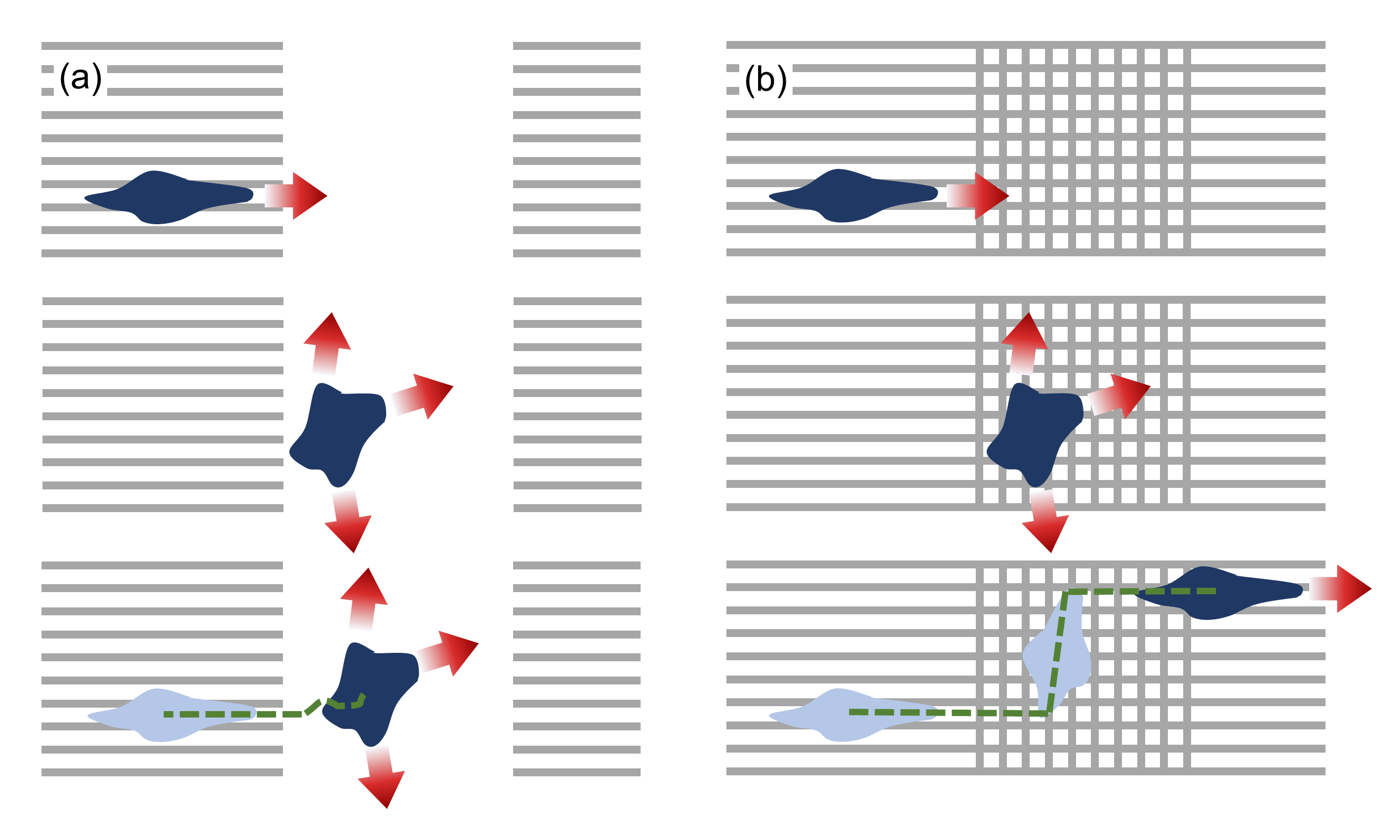}
    \caption{Micropatterning surfaces with fibronectin stripes allow construction of controlled anisotropic environments, for example in the above quasi one-dimensional arrangement of parallel stripes/fibres interrupted by either (a) a completely isotropic region, or (b) a region of criss-crossing perpendicular stripes, see \cite{doyle2009}. For a cell whose turning frequency drops when migrating in a direction of dominating orientations decreases, two very distinct behaviours are conceivable as a cell crosses from the quasi-1D to 2D region.} \label{doyleschema}
\end{figure}

The transport equation in \cite{Hillen.05} is predicated on an underlying stochastic velocity-jump model of migration \cite{Alt.88}, i.e. fixed-velocity runs interspersed with velocity changes. The transitional probability for switching velocities (from a pre-reorientation to post-reorientation velocity) can be decomposed into two elements: a {\em turning rate} function that dictates the rate at which switches occur, and a {\em turning kernel} that describes the selection of the new velocity/direction. The latter was taken to be an angular distribution (potentially space and time varying) that encodes the oriented ECM network structure. Thus, contact-guided migration was included through an increased likelihood of a cell choosing the dominating local fibre orientation. Consequently, the model captures the anisotropic spread of a population in an aligned bidirectional network and simulations in \cite{Painter.09} demonstrate that the environment can substantially impact on spatial structuring, for example trapping populations inside or outside regions of high anisotropy or dictating pathways of invasion. Various real world applications have been considered, including predicting the spatial spread of glioma (e.g. \cite{Painter2013}) or wolf movement along seismic lines (e.g. \cite{Hillen_Painter.13}).

The turning rate in \cite{Hillen.05}, however, was taken to be independent of orientation. To understand a consequence of this simplification, consider the migration paths of cells subjected to manufactured environments, such as surfaces subjected to micropatterning (e.g. \cite{doyle2009,thery2010}, see Figure \ref{doyleschema}) or constructed anisotropic collagen networks \cite{riching2014,Ray2017}. In \cite{doyle2009} the precise micropatterning of fibronectin on a two-dimensional surface enabled fabrication of controlled anisotropic environments, with the schematics in Figure \ref{doyleschema} (a--b) demonstrating two such arrangements. Here, cells can move from an effectively 1D region (stripes, replicating highly aligned parallel fibres) to a 2D region where the 2D regions are both isotropic, but either (a) {\em uniformly isotropic}, or (b) featuring {\em criss-crossing} perpendicular stripes. As we will explicitly show in Section \ref{s:Doyle}, the earlier transport model of \cite{Hillen.05} is unable to discriminate between these scenarios. 

If, instead, cells modulate their turning frequency by turning infrequently when moving along fibres, we can expect very distinct behaviours. Under the uniformly isotropic case, cells would be expected to meander significantly in the uniformly isotropic region, adopting short runs in any orientation. On the other hand, criss-crossing stripes could allow significant translations in either of the two dominating axial directions, hastening rediscovery of the quasi-1D regions. Indeed, evidence is found of this in the experiments of \cite{doyle2009}, {\em cf.} Supplementary Movie 7. We will explicitly show in Section \ref{s:Doyle} that direction-dependent turning can lead to enhanced effective diffusion.

\section{Model with direction-dependent turning rate}\label{model_dir_dep}

\subsection{Model formulation}

Let $p = p(t,x,w)$ denote the cell density distribution, defined at time $t\geq0$, position $x \in \Omega \subseteq \mathbb{R}^d$ and velocity $w \in V$. We typically assume $V$ to be a compact set as in \cite{Othmer_Hillen.00}, and in particular consider $V=[s_1,s_2]\times\mathbb{S}^{d-1} $ (\cite{Hillen.05}), where $\mathbb{S}^{d-1}$ is the set of all possible directions $\hat{w} \in \mathbb{R}^d$ and the hat-symbol indicates unit vectors. The limits $s_1,s_2$ denote the minimal and maximal speed ($|w|$) of the cells, with $0\leq s_1 \le s_2 <\infty$. Note that if the speed is approximately constant we simply set $V=s\mathbb{S}^{d-1}$.

The governing transport equation for describing cell movement is 
\begin{equation}\label{transport.general}
\frac{\partial p(t,x,w)}{\partial t} + w\cdot \nabla p(t,x,w) = \mathcal{L} p(t,x,w) \, ,
\end{equation}
where the operator $\nabla$ denotes the spatial gradient. The {\it turning operator} 
$\mathcal{L}$ is a linear operator that models the change in velocity of individuals per unit of time at $(x,w)$ that is not due to the free particle drift.   $\mathcal{L}$ is generally defined as an integral operator on $L^2$ spaces  \cite{Othmer_Hillen.00}:
\[
\begin{array}{lr}
\displaystyle \mathcal{L} : \displaystyle L^2(V) \longmapsto  L^2(V) \, , \\[10pt] 
\displaystyle p(t,x,w)\longmapsto \mathcal{L}p(t,x,w) \, ,
\end{array}
\]
where $(t,x)$ are independent parameters and 
\begin{equation}\label{turning.operator.mu}
\mathcal{L}p(t,x,w)=-\mu(t,x,w) p(t,x,w)+\displaystyle \int_{V} \mu(t,x,w') q(t,x,w,w') p(t,x,w') dw' \, .
\end{equation}
This operator describes the velocity scattering. As noted earlier, the key  determinants of the migration path are the {\it turning rate} function, $\mu(t,x,w)$, and the {\it turning kernel}, $q(t,x,w,w')$. Viewed in this light, the first term on the right hand side of \eqref{turning.operator.mu} models particles switching away from velocity $w$ and the second one takes into account the particles switching into velocity $w$ from all other velocities.

The turning kernel, $q(t,x,w,w')$, denotes the probability measure of switching velocity from $w'$ to $w$, given that a 
turn occurs at location $x$ and time $t$. Here we adopt the same reasoning as in \cite{Hillen.05}, by assuming reorientation is dominated by the fibrous/anisotropic environmental structure, such as collagen matrix fibres or white matter tracts. Then, the choice of new direction is derived from this structure, rather than the incoming velocity, and for $q$ we assume: 
\begin{itemize}
\item[{\bf A1}] $q(t,x,w,w')=q(t,x,w)$ depends only on the post-turning orientation;
\item[{\bf A2}] $q(t,x,w) \geq 0 \quad \forall w \in V, \quad a.e.\quad x \in \Omega, \quad \forall t \geq 0$;
\item[{\bf A3}] $q(t,x,\cdot) \in L^1(V)$ and $\displaystyle \int_{V} q(t,x,w) dw =1 \quad a.e. \quad x \in \Omega, \quad \forall t \geq 0$.
\end{itemize}
The simple assumption adopted in \cite{Hillen.05} was to directly link a probability measure describing the directional distribution of fibres, $\tilde{q}(t,x,\hat{w})$, defined on $\mathbb{R}_+\times \Omega\times \mathbb{S}^{d-1}$ and satisfying $\tilde{q}\geq 0, \displaystyle \int_{\mathbb{S}^{d-1}}\tilde{q}d\hat{w}=1$, to the turning kernel
\[
q(t,x,w)=\dfrac{\tilde{q}(t,x,\hat{w})}{\omega}, \qquad \omega=\int_V \tilde{q}(t,x,\hat{w}) dw.
\] 
Note that $q(t,x,w)$ assumes $w\in V$ while $\tilde q(t,x,\hat w)$ is defined for unit vectors only, but their only difference lies in a constant scaling factor that accounts for the difference between $V$ and $\mathbb{S}^{d-1}$. Given this, we will interchangeably call $q$ the turning kernel or the distribution of fibre orientations. As a consequence of {\bf A1}-{\bf A3}, the operator \eqref{turning.operator.mu} simplifies to
\begin{equation}\label{operator}
\mathcal{L}p(t,x,w)=-\mu(t,x,w) p(t,x,w)+q(t,x,w)\displaystyle \int_{V} \mu(t,x,w')  p(t,x,w') \, dw' \, .
\end{equation}

The turning rate function, $\mu(t,x,w)$, gives the rate at which velocity switches are made for a particle located at $x$ at time $t$, moving in direction $w$. It is at this point where we substantially diverge from \cite{Hillen.05}, 
lifting the assumptions on $\mu$ and allowing $w$-, $x$-, and $t$-dependence. Significantly, this allows the turning rate $\mu$ to depend directly on the fibre orientation $q$, for example allowing a cell to continue movement with the same velocity if it is moving in the direction of highly aligned fibres. Note that as $\mu=\mu(t,x,w)$ is a turning rate, $1/\mu(t,x,w)$ defines the mean time spent by a cell running along a linear tract with velocity $w$ between two consecutive turns performed at time $t$, location $x$. We assume:
\begin{itemize}
    \item[{\bf M1}] $\mu(t,x,\cdot) \in L^1(V), \, \forall t>0, x\in \Omega$\,;
    \item[{\bf M2}] $\dfrac{q(t,x,\cdot)}{\mu(t,x,.)}\in L^1(V), \, \forall t>0, x\in \Omega\,.$
\end{itemize}

\subsection{Statistical properties}

To analyse (\ref{transport.general}) under (\ref{operator}) we make use of a number of statistical properties of the corresponding fibre and turning distributions, such as expectations and variances. A summary of these expressions is given in Table \ref{t:distributions}.

\begin{enumerate}
\item {\bf Distribution of new directions.} We consider the distribution of newly chosen directions, $q$, with expectation 
\begin{equation}\label{mean.dir}
{\bf E}_{q}(t,x)=\displaystyle \int_{V} q(t,x,w)w \,dw. 
\end{equation}
This is also the {\it the mean new velocity after a turn} and has the variance-covariance matrix
\begin{equation}\label{orientation.tensor}
\mathbb{V}_q(t,x) = \displaystyle \int_{V} q(t,x,w)\,
(w-{\bf E}_q)\otimes(w-{\bf E}_q)\, dw\,.
\end{equation}

\item {\bf Cell mean velocity and variance.}
We introduce similar macroscopic quantities for the cell population, although we stress $p$ is not itself a probability measure. First we define the {\it macroscopic density} of the population $p$ at time $t$ and position $x$ as 
\begin{equation}\label{rho}
\bar{p}(t,x)=\displaystyle \int_{V}p(t,x,w) dw \,,
\end{equation}
and the {\it total mass} of the population in $\Omega$, 
\[
m(t)=\int_{\Omega}\bar{p}(t,x)dx\,.
\]
Note that, with no population kinetics and assuming suitably lossless boundary conditions, the total mass will be 
conserved in time. With these definitions in place we can introduce the moments of the normalized cell distribution $\hat p(t,x,w)=\dfrac{p(t,x,w)}{\bar p(t,x)}$, which will be a probability measure for all $t$ and $x$. In particular we can introduce 
\noindent the expectation
\[
{\bf E}_{\hat{p}}(t,x)=\int_V \hat{p}(t,x,w)w dw\,,
\]
which is the {\it mean velocity} of the normalized population, and
 the variance (variance-covariance matrix)
\[
\mathbb{V}_{\hat{p}}(t,x) = \displaystyle \int_{V} \hat{p}(t,x,w)\,
(w-{\bf E}_{\hat{p}})\otimes(w-{\bf E}_{\hat{p}})\, dw\,.
\]
The latter provides information on the width of the distribution $\hat{p}$ in different directions. This tensor is symmetric, but can be anisotropic, i.e. the level sets of $\hat w\mapsto \hat w^T \mathbb{V}_{\hat p} \hat w$  are ellipsoids. With this we can identify the {\it  mean velocity of the cell population} as 
\noindent 
\[
\displaystyle \int_{V}  p(t,x,w) w dw =\bar{p}(t,x){\bf E}_{\hat{p}}(t,x)
\]
and the variance-covariance of the population velocity as 
\[
\displaystyle \int_{V} p(t,x,w)\,
(w-{\bf E}_{p})\otimes(w-{\bf E}_{p})\, dw = \bar{p}(t,x)\mathbb{V}_{\hat{p}}(t,x).
\]
\item{\bf Turning part of the population.} The turning operator definition (\ref{operator}) reveals a new macroscopic quantity,  
\begin{equation}\label{pmu}
p_\mu (t,x,w) = \mu(t,x,w) p(t,x,w). 
\end{equation}
$p_\mu$ can be interpreted as the part of the cell population that moves in direction $w$ {\it and is currently turning}. Then, the total turning population per unit time is 
\begin{equation}\label{p_mu}
\bar{p}_{\mu}(t,x)=\displaystyle \int_V \mu(t,x,w) p(t,x,w) dw,
\end{equation}
which is the expression in (\ref{operator}). The turning operator \eqref{operator} can then be re-written as
\begin{equation}\label{operator.2}
\mathcal{L}p(t,x,w)=\bar{p}_{\mu}(t,x) q(x,w) -\mu(t,x,w)p(t,x,w).
\end{equation}
By normalising $p_\mu$ we can define the {\it mean incoming velocity of the turning population} as
\[
{\bf E}_{p_{\mu}}(t,x)=\dfrac{\displaystyle \int_V \mu(t,x,w) p(t,x,w) w dw}{\bar{p}_{\mu}(t,x) },
\] 
and its variance-covariance matrix accordingly.

\item{\bf Run times along the fibres.} We discover later that the stationary distributions are proportional to the ratio $q/\mu$. In fact, the corresponding distribution arises in many forthcoming calculations. Hence, we introduce 
\[ C(t,x) = \int_V\frac{q(t,x,w)}{\mu(t,x,w)} dw \]
 and the normalised distribution 
 \begin{equation}\label{Tdef}
 T(t,x,w) = \frac{1}{C(t,x)} \frac{q(t,x,w)}{\mu(t,x,w)}.
 \end{equation}
Since $\mu(t,x,w)$ is a turning rate, the function 
 \[ \tau(t,x,w)=\frac{1}{\mu(t,x,w)} \]
is the mean time spent moving in direction $w$. Then, $T(t,x,w)$ is the {\it distribution of run times along the fibres of the network} and its expectation,
\begin{equation}\label{mean.dir.fr}
{\bf E}_{T}(t,x)=\displaystyle \dfrac{1}{C(t,x)} \int_{V} \dfrac{q(t,x,w)}{\mu(t,x,w)}w \,dw \,,
\end{equation}
is the average velocity along the fibre distribution. Note that this quantity discriminates between the choice of a parabolic scaling (leading to a diffusive limit, for ${\bf E}_T\approx 0$) or a hyperbolic scaling (for ${\bf E}_T\neq 0$).
With this interpretation we can regard the following normalisation constant
\[ C(t,x) = \int_V \tau(t,x,w) q(t,x,w) dw \] 
as the mean run time between turns. We can further define the displacement vector and the mean displacement vector, respectively  
 \begin{equation}\label{chichi}
 \chi(t,x,w) = w\tau(t,x,w)\,, \qquad \bar \chi(t,x)= \int_V \chi(t,x,w) q(t,x,w) dw\,,  
 \end{equation} 
 such that 
\[ {\bf E}_{T}(t,x) = \frac{\bar \chi(t,x)}{C(t,x)} \]
becomes the ratio of the 
mean displacement vector over the mean run time on the fibre network.

\item{\bf Persistence.} {\it Persistence}, $\psi_d$, is a measure of a random walker's tendency to maintain direction during directional changes. It has values $\psi_d\in [-1,1]$, where $\psi_d=1$ denotes perfect persistence (continuing with the previous direction), $\psi_d=0$ denotes uniform turning and $\psi_d=-1$ indicates a switch into the opposite direction \cite{Alt.88, Othmer_Hillen.02}. Persistence is often computed as the {\it mean cosine} along a particle trajectory, however in our abstract framework we define it as the mean velocity of the equilibrium distribution 
$T$, i.e. 
\begin{equation}\label{adj.per.gen}
\psi_d(w)=\frac{{\bf E}_T \cdot w}{|{\bf E}_T| |w|}\,.
\end{equation}
Explicit use of the vector product shows that $\psi_d$ does indeed arise as a mean cosine, as 
\[ \psi_d = \cos(\sphericalangle({\bf E}_T, w)),\]
where $\sphericalangle({\bf E}_T, w)$ denotes the angle between ${\bf E}_T$ and $w$. 
\color{black}
We have not yet shown that $T$ is the equilibrium distribution, but do so in the next Section. Observe that if $\mu$ does not depend on $w$, Eq. \eqref{adj.per.gen} becomes
\begin{equation}\label{adj.per.gen_old}
\psi_d(w)=\frac{{\bf E}_q \cdot w}{|{\bf E}_q| |w|},
\end{equation}
as defined in \cite{Alt.88}. Hence, for turning rates that do not depend on the direction, the persistence will vanish for a bi-directional tissue ($\ie$ ${\bf E}_q=0$). Our extension to $w$-dependence in $\mu$ lifts this limitation, allowing 
non-vanishing persistence even for a bi-directional tissue with ${\bf E}_T\neq 0$. In other words, Eq. \eqref{transport.general} with \eqref{operator} is capable of generating a persistent random walk even under a 
fully symmetric configuration.
\end{enumerate}

\noindent We return to the turning operator (\ref{operator.2}). As expected, due to assumption {\bf A2}, we observe that the total cell density during turning will be conserved: 
\begin{eqnarray*}
\int_{V} \mathcal{L}p(t,x,w) dw  &=&  \int_{V} \bar p_{\mu}(t,x)  q(t,x,w)  dw  -  \int_{V} \mu(t,x,w) p(t,x,w) dw\\
&=& \bar p_\mu(t,x) - \bar p_{\mu}(t,x) \\
&=& 0.
\end{eqnarray*}
The average outgoing velocity of the total turning population, meanwhile, changes: 
\begin{eqnarray}\label{mean.vel.dyn}
\int_{V} \mathcal{L}p(t,x,w) w \, dw  
&=& \int_V\bar p_\mu(t,x) q(t,x,w) w \, dw - \int_{V} \mu(t,x,w) p(t,x,w) w \, dw  \nonumber \\
&=& \bar{p}_{\mu}(t,x){\bf E}_q(t,x) - \bar{p}_{\mu}(t,x){\bf E}_{p_{\mu}}(t,x) \nonumber\\ 
&=& \bar{p}_{\mu}(t,x)\big( {\bf E}_q(t,x)-{\bf E}_{p_{\mu}}(t,x)\big).
\end{eqnarray}
Therefore, the mean velocity of the turning population, ${\bf E}_{p_{\mu}}$, relaxes towards the average post-turning velocity, ${\bf E}_q$, imposed by the fibre network. 

\begin{table}
\footnotesize
\caption{Summary of the key probability distributions used during our analysis}.\label{t:distributions}
  \begin{minipage}{\textwidth}
    \tabcolsep=1pt
    \begin{tabular}{llll}
    \hline\hline
      distribution& meaning & expectation & variance\\
      \hline
     $\tilde q(t,x,\hat w)$ & distribution of fibre orientations &  & \\
$q(t,x,w)$ & distribution of newly chosen directions & ${\bf E}_q(t,x)$ & $ \mathbb{V}_q(t,x)$ \\
$\bar p(t,x) = \int p(t,x,w) dw$ & macroscopic cell density & & \\
$\hat p(t,x,w) = \frac{p(t,x,w)}{\bar p(t,x)}$ & normalized cell density & ${\bf E}_{\hat p}(t,x)$ & $ \mathbb{V}_{\hat p}(t,x)$\\
$ p_\mu(t,x,w)\! = \!\mu(t,x,w) p(t,x,w)  $ & turning part of the population & ${\bf E}_{p_\mu}(t,x)$ & $ \mathbb{V}_{p_\mu}(t,x)$\\
$T(t,x,w)=\frac{1}{C(t,x)}\frac{q(t,x,w)}{\mu(t,x,w)}  $ & distribution of run times along the fibres & ${\bf E}_T(t,x)$ & $ \mathbb{V}_T(t,x)$ \\
    \hline\hline
    \end{tabular}
  \end{minipage}
  \end{table}

For much of the analysis we assume fibre orientations and turning rates are time independent, i.e. $q(x,w)$ and $\mu(x,w)$. Similar arguments apply for the time dependent case, yet the notation becomes clumsier. 

\subsection{Equilibrium state of the transport equation}

As a first step towards finding equilibrium distributions, we compute the kernel of the turing operator $\mathcal{L}$.  A function $\phi(w)$ belongs to $\mbox{ker}(\mathcal{L})$ if and only if it satisfies
\[
\begin{array}{lr}
\mathcal{L}\phi(w)=0\\[8pt]
\Longleftrightarrow -\mu(x,w) \phi(w)+q(x,w)\displaystyle \int_{V} \mu(x,w')  \phi(w') dw'=0\\[8pt]
\Longleftrightarrow \mu(x,w) \phi(w)= q(x,w)\displaystyle \int_{V} \mu(x,w')  \phi(w') dw'.
\end{array}
\]
We can write
\[
\phi(w)=\frac{q(x,w)}{\mu(x,w')}  \bar{\phi}_{\mu}(x)\,,
\]
where $\bar{\phi}_{\mu}$ is a $w$-independent function 
\[
\bar{\phi}_{\mu}(x)=\displaystyle \int_{V} \mu(x,w) \phi(w) dw, 
\]
which is the fraction of turning cells of the population $\bar{\phi}$. The $w$-dependence of elements of $\mbox{ker}(\mathcal{L})$ is given by the ratio $q(x,w)/\mu(x,w)$, i.e. it is given by the run-time distribution along the fibres, $T(x,w)$, which we introduced in (\ref{Tdef}). Then 
\[ \phi(x,w) = \bar\phi(x) T(x,w), \qquad \mbox{and} \qquad \mbox{ker}(\mathcal{L}) = \langle T (x,w) \rangle. \]
Hence a stationary state of the equation \eqref{transport.general}-\eqref{operator} has the form
\begin{equation}\label{Maxwellian}
M(x,w)= \beta(x) T(x,w) 
\end{equation}
which we call the Maxwellian.   
We remark that \eqref{transport.general} with \eqref{operator} is the linear Boltzmann equation. In particular, the 
quantity  
\[
\sigma(t,x,w,w')=q(t,x,w)\mu(t,x,w')
\]
is the \textit{cross section} and the equilibrium probability density (normalized to 1) given by $\mathcal{L}(T)=0$ is \eqref{Tdef}. The function $T$ will be nonnegative, since $q$ and $\mu$ are nonnegative, and it is an $L^1(V)$ function via assumption ${\bf M2}$. Therefore, the stochastic process ruled by $\sigma$ satisfies micro-reversibility, i.e.
\[
\sigma(x,w,w')T(x,w')=\sigma(x,w',w)T(x,w).
\]
Consequently, existence and uniqueness of a nonnegative solution $p\in L^1(\Omega\times V)$ of \eqref{transport.general},\eqref{operator} with initial condition $p^0 \in L^1(\Omega \times V)$ and  non-absorbing boundary conditions \cite{Cercignani} is a classical result of kinetic theory (see, for example, \cite{Petterson1983}). We in particular consider a special class of non-absorbing boundary conditions, given by no-flux boundary conditions \cite{Plaza}.

Arguing as in \cite{Petterson,Bisi.Carrillo.Lods}, we can prove a linear version of the classical H-Theorem for the linear Boltzmann equation \eqref{transport.general}, \eqref{operator} with initial condition $p^0=p(0,x,w) \in \, L^1(\Omega \times V)$. Let us introduce the {\it entropy} $S=-H$ where, for any given convex function $\Phi:\mathbb{R}_+\rightarrow \mathbb{R}_+$,
\[
\ H_{\Phi}[p|M](t)= \int_\Omega \int_V  p(t,x,w) \Phi\left( \dfrac{p(t,x,w)}{M(x,w)}\right) \, dw \, dx , \qquad p(t,\cdot, \cdot) \in L^1(\Omega \times V)\,.
\]
In the above, $M$ is the Maxwellian satisfying
\[
\displaystyle \int_{\Omega \times V} M(x,w) \, dx dw = \displaystyle \int_{\Omega} \bar{p}\,^0(x) dx
\]
where we assume that $p^0$ has finite mass $\displaystyle \int_{\Omega} \bar{p}\,^0 \, dx$  and entropy $H_{\Phi}[p^0|M](0)$. Hence, $\beta(x)$ is such that $\displaystyle \int_{\Omega} \beta(x) dx =\displaystyle \int_{\Omega} \bar{p}^0(x) \, dx $ and, in particular, if a stationary state $p^{\infty}(x,w)$ exists, then $\beta(x)=\bar{p}^{\infty}(x)$, so that
\[
p^{\infty}(x,w)=\bar{p}^{\infty}(x) T(x,w).
\]
Under non-absorbing boundary conditions, via exactly the same procedure followed in \cite{Petterson} it is possible to prove that
\[
\frac{dH_{\Phi}[p|M]}{dt}(t) \leq 0 \quad \mbox{for} \quad t\ge 0 \qquad
\]
and
\[
\frac{dH_{\Phi}[p|M]}{dt} (t) =0, \,  \qquad \mbox{iff} \qquad p(t,x,w)=M(x,w),
\]
where $p(t,x,w)$ is the unique $L^1$ solution to \eqref{transport.general},\eqref{operator} with initial condition $p^0$ and non-absorbing boundary conditions. Furthermore, continuing to argue as in \cite{Petterson}, it can be proved that provided $\displaystyle \int_{\Omega}\int_{V}\big( 1+w^2 +|\log p_0|\big) \, dw dx < \infty$, then
\[
\lim_{t\rightarrow \infty} \displaystyle \int_{\Omega} \int_V |p(t,x,w)-M(x,w)| \, dw \, dx =0.
\]

\section{Macroscopic limits}\label{mod-turn_macro}

In this section we consider two distinct scalings for the transport equation: $i)$ the {\em parabolic scaling}, and $ii)$ the {\em hyperbolic scaling}. The former applies in a diffusion-dominated case while the latter corresponds to the drift-dominated case. The cases differ through the relative scaling of time and space in a suitably small parameter $\ep$. 

In the parabolic scaling we consider a small parameter, $\ep \ll 1$, and assume macroscopic time and space scales $(\tau, X)$ that scale according to 
\[
\tau=\ep^2 t,\qquad  X=\ep x .
\]
A paradigm for scaling in this manner can be found by comparing the microscopic scales of run and tumble movements 
of {\em E. coli} bacteria and the experimental scales at which population-level phenomena form, such as
travelling bands \cite{adler} or cellular aggregates \cite{budrene1991}. Runs are characterised with a movement speed typically around 10-20 $\mu m/s$ with a tumble taking place every second or so. Large scale patterning phenomena typically arise after a few hours, i.e. $O(10^4)$ seconds. Hence, $\ep^2$ = mean run time/experimental time $= 10^{-4}$ and, in turn, $\ep = 10^{-2}$. Given the micron spatial scale of individual movement, the corresponding macroscopic scale is $10^2 \times 10 \mu m = 1 mm$, which is the scale of the cell aggregates that typically form.   

This scaling therefore demands a sufficiently slow time scale over which diffusion can begin to dominate or, 
equivalently, when cells have large speeds and turning rates. Rescaling (\ref{transport.general}),(\ref{operator}) 
gives
\begin{equation}\label{eq.par}
\begin{array}{lr}
\ep^2\dfrac{\partial p}{\partial \tau}(\tau,X,w) + \ep w\cdot \nabla p(\tau,X,w) =\\[8pt]
 -\mu(X,w) p(\tau,X,w)+q(X,w)\displaystyle \int_{V} \mu(X,w')  p(\tau,X,w') dw' \, ,
\end{array}
\end{equation}
where the gradient $\nabla$ is now applied with respect to $X$. We assume that the rescaled turning frequency and kernel are such that $\mu(X,w), q(X,w) \sim \mathcal{O}(1)$. Thanks to classical results (see e.g. \cite{Petterson1983}), we have existence and uniqueness of the solution of \eqref{eq.par} with initial condition $p^0 \in L^1(\Omega \times V)$ and non-absorbing boundary conditions. We now consider an asymptotic expansion of $p$ in orders of $\ep$, 
\begin{equation}\label{expansion}
p(\tau,X,w)=p_0(\tau,X,w)+\ep p_1(\tau,X,w)+\ep^2 p_2(\tau,X,w)+\mathcal{O}(\ep^3),
\end{equation}
and we are particularly interested in the leading order term $p_0$. 

\subsection{Diffusion-Dominated Case}

In the diffusion-dominated case we assume that the macroscopic drift term ${\bf E}_T=0$. We consider this case first to introduce the necessary technical notations. 

\begin{theorem}\label{t:paralim}
Let assumptions {\bf A1}-{\bf A3} be satisfied and assume 
\begin{equation}\label{nec.diff.cond.}
{\bf E}_T=0.
\end{equation}
Consider equation (\ref{eq.par}) with expansion (\ref{expansion}). The leading order term, $p_0(\tau,X,w)$, of equation (\ref{expansion}) satisfies
\[
p_0(\tau,X,w)=\bar p_0(\tau,X) \, T(X,w),\]
where $\bar p_0(\tau,X)$ solves the macroscopic anisotropic diffusion equation
\begin{equation}\label{paralim}
\frac{\partial}{\partial \tau} \bar p_0(\tau,X) = \nabla \cdot \int_V \frac{1}{\mu(X,w)}\nabla\cdot\Bigl[\bar p_0(\tau,X) 
 \mathbb{D}_{T}(X,w)\Bigr] dw\,,
 \end{equation}
with microscopic anisotropic diffusion tensor 
\[
\mathbb{D}_{T} (X,w)= w \otimes w\, T(X,w).
\]
\end{theorem}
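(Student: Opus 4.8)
The plan is to run a Hilbert (Chapman--Enskog) expansion: substitute the ansatz \eqref{expansion} into the rescaled equation \eqref{eq.par}, collect equal powers of $\ep$, and solve the resulting hierarchy order by order using a Fredholm alternative for $\mathcal{L}$. With $\mathcal{L}$ written as in \eqref{operator}, the three leading balances are: at $\mathcal{O}(\ep^0)$, $\mathcal{L}p_0=0$; at $\mathcal{O}(\ep)$, $\mathcal{L}p_1=w\cdot\nabla p_0$; and at $\mathcal{O}(\ep^2)$, $\mathcal{L}p_2=\partial_\tau p_0+w\cdot\nabla p_1$. The first balance says $p_0\in\ker\mathcal{L}$, and since the kernel computation preceding \eqref{Maxwellian} already gives $\ker\mathcal{L}=\langle T\rangle$, I immediately obtain $p_0(\tau,X,w)=\bar p_0(\tau,X)\,T(X,w)$, which is the first assertion of the theorem.

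Next I would set up the solvability theory. Computing the formal $L^2(V)$ adjoint of $\mathcal{L}$ yields $\mathcal{L}^*g(w)=\mu(X,w)\bigl(\int_V q(X,w')g(w')\,dw'-g(w)\bigr)$, whose kernel (for $\mu>0$) is exactly the constants, $\ker\mathcal{L}^*=\langle 1\rangle$. Hence $\mathcal{L}f=h$ is solvable iff $\int_V h\,dw=0$. At $\mathcal{O}(\ep)$ this reads $\int_V w\cdot\nabla p_0\,dw=\nabla\cdot(\bar p_0\,{\bf E}_T)=0$, which holds precisely by the diffusion-dominated hypothesis \eqref{nec.diff.cond.}. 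Solving $\mathcal{L}p_1=w\cdot\nabla p_0$ directly from \eqref{operator} gives the particular solution $p_1=-\tfrac{1}{\mu}\,w\cdot\nabla p_0$ (self-consistent, since its $\mu$-weighted integral vanishes by the same condition), determined up to a gauge element $\kappa(\tau,X)\,T\in\ker\mathcal{L}$.

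I would then impose the $\mathcal{O}(\ep^2)$ solvability condition $\int_V(\partial_\tau p_0+w\cdot\nabla p_1)\,dw=0$. Using $\int_V T\,dw=1$ the first term gives $\partial_\tau\bar p_0$, and the second gives $\nabla\cdot\int_V w\,p_1\,dw$. Substituting $p_1$, the undetermined kernel part contributes only through $\kappa\,{\bf E}_T$, which again vanishes by \eqref{nec.diff.cond.}, so the macroscopic equation is gauge-independent. This produces $\partial_\tau\bar p_0=\nabla\cdot\int_V\tfrac{w\otimes w}{\mu}\,\nabla(\bar p_0 T)\,dw$. Finally, since $w$ is independent of $X$, one has $\nabla\cdot[\bar p_0\,w\otimes w\,T]=(w\otimes w)\nabla(\bar p_0 T)$, which rewrites the right-hand side in the stated divergence form with $\mathbb{D}_T=w\otimes w\,T$, giving \eqref{paralim}.

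The hard part, and the issue flagged in the introduction, is making the Fredholm alternative rigorous when $\mu=\mu(X,w)$ is direction-dependent. For constant $\mu$ the operator $\mathcal{L}$ is self-adjoint in the weighted space $L^2(V;dw/q)$; for $w$-dependent $\mu$ the symmetrising weight is instead fixed by the detailed-balance relation $\sigma(w,w')T(w')=\sigma(w',w)T(w)$, giving self-adjointness in $L^2(V;dw/T)$ (weight $\mu/q$). One must therefore verify that $\mathcal{L}$ is a Fredholm operator of index zero on this space, using assumptions {\bf M1}--{\bf M2} to ensure $q/\mu\in L^1(V)$ and that the pseudo-inverse $h\mapsto-h/\mu$ stays in the space, with extra care where $\mu$ is large or near the low-speed boundary of $V$. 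Once the decomposition $\ker\mathcal{L}=\langle T\rangle$, $\operatorname{ran}\mathcal{L}=\{h:\int_V h\,dw=0\}$ is secured in the correct space, the remaining steps are the routine moment computations above.
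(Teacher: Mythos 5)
Your proposal is correct and its computational core coincides with the paper's proof: the same Hilbert expansion, the identification $p_0=\bar p_0 T$ from $\ker\mathcal{L}=\langle T\rangle$, the solvability condition at order $\ep$ enforced by \eqref{nec.diff.cond.}, the explicit inversion $p_1=-\frac{1}{\mu}\,w\cdot\nabla p_0$, and the closure at order $\ep^2$ yielding \eqref{paralim}. Where you genuinely differ is in the functional-analytic dressing of the Fredholm alternative. The paper works in a single weighted space $L^2_\eta(V)$ with the weight $\eta=\mu/T$ defined in \eqref{defeta}; this weight does \emph{not} symmetrise $\mathcal{L}$, but it is chosen so that $\langle T\rangle^{\perp}=\{\phi:\int_V\phi\,\mu\,dw=0\}$ and $\mathrm{Range}(\mathcal{L})=\langle\eta^{-1}\rangle^{\perp}$, which makes the pseudo-inverse \eqref{Linverse}, $\psi\mapsto-\psi/\mu$, map the range into the kernel's orthocomplement. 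You instead obtain solvability from the unweighted $L^2$ adjoint ($\ker\mathcal{L}^*=\langle 1\rangle$, hence the right-hand side must integrate to zero) and, for self-adjointness, you correctly identify the detailed-balance weight $1/T\propto\mu/q$ — note this is \emph{not} the paper's $\eta=\mu/T$; in $L^2_\eta$ the operator is not self-adjoint, so your symmetrised framework is arguably the cleaner one, since kernel and range become orthocomplements of each other under a single inner product. Both routes deliver the identical solvability criterion, so the macroscopic equation is the same. Two further small points in your favour: your sign for $p_1$ is the internally consistent one (the paper's displayed formula \eqref{p1} omits the minus sign produced by its own pseudo-inverse \eqref{Linverse}, a typo that silently cancels when the term is moved to the right-hand side in the $\ep^2$ step), and you make explicit the gauge freedom $\kappa(\tau,X)T$ in $p_1$ and verify it drops out of the macroscopic equation via ${\bf E}_T=0$, whereas the paper fixes this gauge implicitly by always applying the pseudo-inverse.
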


\begin{proof}

We have seen that 
\[ \mbox{ker}(\mathcal{L}) = \langle T\rangle. \]
In order to invert $\mathcal{L}$ on the orthogonal complement of its kernel, we use a specific weight for the inner product in $L^2_\eta(V)$, with 
\begin{equation}\label{defeta}
\eta(X,w) = \frac{\mu(X,w)}{T(X,w)},
\end{equation}
$\ie$ given $f,g \in L^2$ the weighted scalar product is defined by
\[
(f,g)_{\eta}=\int_V f(w)g(w)\eta(X,w) \, dw.
\]
We observe that $\phi(X,w)\in \langle T \rangle^\perp$ if and only if 
\[ 0= \int_V \phi(X,w) T(X,w) \frac{\mu(X,w)}{T(X,w)}dw = \int_V \phi(X,w) \mu(X,w) dw = \bar\phi_\mu(X). \]
The range of $\mathcal{L}$ is given by all functions that integrate to zero, since 
\[ \int_V \mathcal{L} \phi(X,w) dw =\overline{\mathcal{L}\phi} =0\]
from mass conservation. Then, this range, as subset of $L^2_\eta(V)$, can be written as 
\[ \mbox{Range}(\mathcal{L}) = \langle \eta^{-1}\rangle^\perp,\]
since 
\[ 0= \int_V \phi(X,w) dw = \int_V\phi(X,w) \eta^{-1}(X,w) (\eta(X,w) dw) = (\phi, \eta^{-1})_\eta.\] 
Then 
\[ \mathcal{L}^\perp: \langle T\rangle^\perp \to \langle \eta^{-1}\rangle^\perp, \qquad \mbox{with} \quad \mathcal{L}^\perp =\mathcal{L}|_{\langle T\rangle^\perp}\]
and this restricted operator $\mathcal{L}^\perp$ is the operator we would like to invert. 
 Given $\psi\in\langle \eta^{-1}\rangle^\perp$, we need to find $\phi \in \langle T\rangle^\perp$ such that  $\mathcal{L}\phi = \psi$. Applying $\mathcal{L}$ from (\ref{operator.2}) we obtain (ignoring arguments for clarity of presentation) 
\[
\mathcal{L} \phi = \bar \phi_\mu q - \mu \phi = \psi.
\]
Since on $\langle T\rangle^\perp$ we have $\bar \phi_\mu=0$, we can solve for $\phi$ as 
\[
\phi(X,w) = -\frac{1}{\mu(X,w)} \psi(X,w) .
\]
Hence we can write 
\begin{equation}\label{Linverse}
\left(\mathcal{L}^\perp \right)^{-1}:\langle\eta^{-1}\rangle^\perp \to \langle T \rangle^\perp, \quad \psi\mapsto   -\frac{1}{\mu} \psi.
\end{equation}

\noindent We may observe that the pseudo-inverse depends on the microscopic velocity and on the macroscopic space coordinate through $\mu(X,w)$. \\

\noindent We now substitute expansion (\ref{expansion}) into equation (\ref{eq.par}) and match orders of $\ep$.

\begin{itemize}
    \item  For  $\ep^{0}$ we find 
\[
\mathcal{L}p_{0}(\tau,X,w)=0,
\label{eps0q.mu}
\]
hence $p_0\in \mbox{ker}(\mathcal{L})$ and we can write 
\begin{equation}\label{leadingorder}
p_0(\tau, X, w) = \bar{p}_0(\tau, X)\; T(X, w) .
\end{equation}

\item At order $\ep^{1}$ we have  
\begin{equation}\label{firstorder}
\nabla\cdot\big( w\, p_{0}(\tau,X, w)\big)= \mathcal{L} p_1(\tau,X,w)\,.
\end{equation}
To solve this equation for the next order correction term $p_1$, we need to invert $\mathcal{L}$. We saw earlier in (\ref{Linverse}) that $\mathcal{L}$ is invertible on $\langle \eta^{-1}\rangle^\perp $. Hence we check the solvability condition
\[ \nabla\cdot(wp_0(\tau,X,w)) \in \langle\eta^{-1} \rangle^\perp.\]
Using (\ref{leadingorder}) this condition becomes
\begin{eqnarray*}
0 &=& (\nabla\cdot(wp_0) , \eta^{-1})_\eta \\
&=& \int_V \nabla\cdot(w\bar p_0(\tau,X) T(X, w)) \eta(X,w)^{-1} \eta(X,w)  dw \\
&=& \nabla \cdot \left[ \int_V w T(X,w) dw \,\bar p_0(\tau, X)\right] \\
&=& \nabla\cdot \left[ \mathbb{E}_T(X) \bar p_0(\tau,X)\right]\,.
\end{eqnarray*}
Hence in this step it is necessary to assume
\begin{equation}\label{ETzero} 
\mathbb{E}_T(X) = 0.
\end{equation}
In words, we assume the distribution of run times along the fibre network has no dominant direction.  In this case we can invert (\ref{firstorder}) and find 
\begin{equation}\label{p1}
p_1(\tau,X, w) = \frac{1}{\mu(X, w)} \nabla\cdot(w\, \bar p_0(\tau,X) \, T(X,w))\,.
\end{equation}

\item In $\ep^{2}$:
\[
\dfrac{\partial }{\partial \tau}p_{0}(\tau,X, w)+\nabla\cdot\big(p_{1}(\tau,X, w)w \big)=\mathcal{L} p_2(\tau,X,w)\,.
\] 
Integrating this equation over $V$ we obtain 
\[ \int_V \frac{\partial }{\partial\tau}p_0(\tau, X, w) dw +\nabla\cdot \int_V w p_1(\tau,X, w) dw =0.\]
Using the expressions (\ref{leadingorder}) for $p_0$ and (\ref{p1}) for $p_1$, we obtain 
\begin{eqnarray*}
\dfrac{\partial }{\partial \tau}\bar p_0(\tau,X) \underbrace{\int_V T(X,w) dw}_{=1} &=& \nabla \cdot \int_V \frac{1}{\mu(X,w)} w\otimes w  \nabla \Bigl[\bar p_0(\tau, X) T(X, w) \Bigr] dw \\
 &=& \nabla \cdot\int_V \frac{1}{\mu(X,w)}\nabla\cdot\Bigl[\bar p_0(\tau,X) w\otimes w T(X, w)\Bigr] dw \\
 &=& \nabla \cdot \int_V \frac{1}{\mu(X,w)}\nabla\cdot\Bigl[\bar p_0(\tau,X) 
 \mathbb{D}_{T}(X,w)\Bigr] dw\,,
 \end{eqnarray*}
with an anisotropic diffusion tensor 
\[
\mathbb{D}_{T} (X,w)= w \otimes w\, T(X,w)\,.
\]
$\mathbb{D}_T$ is a macroscopic scale diffusion tensor, where 
\[\mathbb{V}_T(X) = \int_V \mathbb{D}_T(X,w) dw \]
describes the variance of the run time distribution along the network fibres. 
\end{itemize}

\end{proof}

\begin{lemma}\label{l:paralim}
The above parabolic limit equation (\ref{paralim}) can be written as an anisotropic drift-diffusion model 
\begin{equation}\label{FAAD.xv}
\dfrac{\partial \bar{p}_0}{\partial \tau}(\tau,X) +\nabla \cdot \big(\mathbf{a}(X)\bar{p}(\tau,X)\big)= \nabla \cdot \nabla \cdot \big(\mathbb{D}(X)\bar{p_0}(\tau,X)\big) ,
\end{equation}
with macroscopic diffusion tensor, $\mathbb{D}$, and advection speed, $\mathbf{a}$, given by 
\begin{eqnarray}
\mathbb{D}(X) &=& \int_V w\otimes w\, \frac{T(X,w)}{\mu(X,w)} dw\label{macroDp} \\
\mathbf{a}(X) &=& - \int_V w\otimes w\, \frac{\nabla\mu(X,w)}{\mu(X,w)}\frac{T(X,w)}{\mu(X,w)} dw.\label{macroap}
\end{eqnarray}
\end{lemma}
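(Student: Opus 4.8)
The plan is to prove the identity by a direct computation that starts from the divergence form (\ref{paralim}) and redistributes the scalar weight $1/\mu(X,w)$ by means of the product rule, thereby separating a genuine second-order (diffusive) part from a first-order (advective) part. The only analytic input needed beyond elementary calculus is that the velocity domain $V$ is fixed and independent of the macroscopic variable $X$, which licenses differentiation under the integral sign and the interchange of the $X$-derivatives with the $w$-integration.

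First I would work on the inner integrand of (\ref{paralim}), namely $\frac{1}{\mu(X,w)}\nabla\cdot[\bar p_0(\tau,X)\,\mathbb{D}_T(X,w)]$, writing it in index notation with $\mathbb{D}_T = w\otimes w\,T$ so that the inner divergence reads $\partial_j(\bar p_0\, w_iw_jT)$. Applying the product rule in reverse gives
\[
\frac{1}{\mu}\,\partial_j(\bar p_0\, w_iw_jT) = \partial_j\!\left(\frac{\bar p_0\, w_iw_jT}{\mu}\right) - \bar p_0\, w_iw_jT\,\partial_j\!\left(\frac{1}{\mu}\right),
\]
and substituting $\partial_j(1/\mu) = -\partial_j\mu/\mu^2$ turns the second term into $+\bar p_0\, w_iw_jT\,\partial_j\mu/\mu^2$. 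Integrating over $V$ and pulling the $X$-derivatives outside the fixed domain, the first term becomes $\partial_j(\bar p_0\int_V w_iw_j T/\mu\,dw)$ while the second becomes $\bar p_0\int_V w_iw_jT\,\partial_j\mu/\mu^2\,dw$.

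Next I would recognise the two resulting velocity integrals as exactly the macroscopic coefficients: the first is $\partial_j(\bar p_0\,\mathbb{D}_{ij}(X))$ with $\mathbb{D}$ as in (\ref{macroDp}), and the second is $-\bar p_0\,\mathbf{a}_i(X)$ with $\mathbf{a}$ as in (\ref{macroap}), the minus sign originating from $\nabla(1/\mu)=-\nabla\mu/\mu^2$. Applying the outer divergence $\partial_i$ to the whole expression then yields $\nabla\cdot\nabla\cdot(\mathbb{D}\bar p_0) - \nabla\cdot(\mathbf{a}\bar p_0)$, and moving the advective term to the left-hand side reproduces (\ref{FAAD.xv}).

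The step I expect to require the most care is purely a matter of tensor bookkeeping rather than a genuine obstacle: one must fix a convention for the divergence of a matrix field, keep track of which index is contracted by the inner as opposed to the outer divergence, and check that the symmetry $\mathbb{D}_{ij}=\mathbb{D}_{ji}$ makes the ordering immaterial. It is worth emphasising that the drift term is generated precisely by the spatial dependence of $\mu$ through $\nabla\mu$; were $\mu$ independent of $X$ the weight $1/\mu$ would commute trivially with the outer integral and no advection would appear, in agreement with the form of $\mathbf{a}$ in (\ref{macroap}).
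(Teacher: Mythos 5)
Your proof is correct and follows essentially the same route as the paper: the paper's own proof is a one-line application of the quotient rule, expanding $\nabla\cdot\nabla\cdot\bigl(\bar p_0\int_V w\otimes w\,\frac{T}{\mu}\,dw\bigr)$ into the right-hand side of (\ref{paralim}) plus the drift term, which is exactly your product-rule redistribution of $1/\mu$ read in the opposite direction. Your added care with index bookkeeping and differentiation under the integral sign merely makes explicit what the paper leaves implicit.
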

\begin{proof} The proof relies on straightforward application of the quotient rule. Omitting arguments for readability\,,  
\[ \nabla\cdot\nabla\cdot \left(\bar p_0 \int_V w\otimes w \frac{T}{\mu} dw\right) = \nabla \cdot \int_V \frac{1}{\mu} \nabla\cdot (\bar p_0 w\otimes w T) dw  - \nabla\cdot \left(\bar p_0 \int_V \frac{\nabla \mu}{\mu^2} w\otimes w T dw \right). \]
\end{proof}

\noindent This representation allows some interesting physicobiological interpretations.
\begin{itemize}
    \item In (\ref{defeta}) earlier we defined the weight function $\eta=\frac{\mu}{T}$. We can write the above macroscopic quantities in terms of $\eta$ as 
  \begin{eqnarray*}
\mathbb{D}(X) &=& \int_V w\otimes w\, \frac{dw}{\eta(X,w)}\,, \\
\mathbf{a}(X) &=& - \int_V w\otimes w\, \nabla\ln(\mu(X,w))\frac{dw}{\eta(X,w)}\,.
\end{eqnarray*}  
$\mathbb{D}$ then appears as an anisotropy matrix related to the measure $\eta^{-1}dw $, while $\mathbf{a}$ measures the anisotropic logarithmic gradient with the same measure $\eta^{-1} dw$.
\item We can also relate the terms back to the original network structure given by $q(X,w)$. Recall that $T=\frac{q}{C\mu}$, where $C(X)$ was a normalisation constant.  By considering the distance travelled in direction $w$ we can write 
\begin{eqnarray*}
\mathbb{D}(X) &=&\frac{1}{C(X)} \int_V \chi(X,w) \otimes \chi(X,w) q(X,w)\, dw\,, \\
\mathbf{a}(X) &=& - \frac{1}{C(X)} \int_V \chi(X,w) \otimes \chi(X,w)\, \nabla\ln(\mu(X,w)) \, q(X,w)  dw\,,
\end{eqnarray*}
using the definition of $\chi$ from (\ref{chichi}). 
$\mathbb{D}$ is then the scaled variance-covariance matrix of the directed mean run-time along the fibre network and $\mathbf{a}$ is the scaled mean logarithmic derivative of the turning rate, weighted by the mean run times along the fibre network. 
\item The equations simplify drastically when the turning rate $\mu$ does not depend on space. In this case $\nabla\mu =0$ and hence $\mathbf{a}=0$, i.e., we get a pure anisotropic diffusion equation. Viewed this way, the advection velocity $\mathbf{a}$ clearly results from spatial variation in the turning rate, $\mu(X,w)$, and relates to an anisotropic taxis term measuring the drift due to the gradient of $\mu$. Spatial dependence in $\mu(X,w)$ generates an advection pushing cells towards decreasing values of the turning frequency. 
\end{itemize}

Due to the fact that $V$ and $\Omega$ are compact, we can directly apply a convergence result of \cite{DegondGoudonPoupaud}, giving us
\begin{lemma}
Suppose {\bf A1}-{\bf A3} and {\bf M1}-{\bf M2} hold and assume \eqref{nec.diff.cond.}. Let us also suppose that $\exists \, C_1,C_2 \ge0$ such that
\[
|w\cdot\nabla q(x,w)| \le C_1\mu(x,w), \quad a.e. \quad \textit{in} \quad \Omega\times V
\]
and 
\[
\displaystyle\int_V w^2 \dfrac{q(x,w)}{\mu(x,w)^2} dw \le C_2 \quad a.e. \quad in \quad \Omega.
\]
Let the initial condition $p^{0}_{\epsilon}(x,w)$ satisfy $\displaystyle \int_{\Omega \times V} \dfrac{p_{{\epsilon}^0}(x,w)^2}{T(x,w)}\, dx dw < \infty$ and $\bar{p}_{\epsilon}^0 \rightharpoonup \bar{p}^0$ in $H^{-1}(\Omega)$. Let  $p_{\epsilon}$ be the solution to \eqref{eq.par}. Then there exists a subsequence  $\bar{p}_{\epsilon} \rightharpoonup \bar{p}$ in $L^2((0,T)\times \Omega)$ where $\bar{p}$ satisfies \eqref{FAAD.xv} with initial condition  $\bar{p}_0(x)$.
\end{lemma}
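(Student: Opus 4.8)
The plan is to recast the rescaled kinetic equation \eqref{eq.par} into the abstract diffusion-limit framework of \cite{DegondGoudonPoupaud} and then verify, one by one, that the structural and quantitative hypotheses of their theorem are satisfied in our setting. The formal asymptotic computation already carried out in Theorem \ref{t:paralim} plays the role of a blueprint: it identifies the candidate equilibrium $T$, the candidate limit equation \eqref{FAAD.xv}, and the precise objects (diffusion tensor, advective flux, solvability condition) that the rigorous argument must control. Thus the work is not to rediscover the limit but to supply the functional-analytic estimates guaranteeing genuine weak convergence to it.

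First I would record the algebraic structure of $\mathcal{L}$ established earlier. Its kernel is the one-dimensional span $\langle T\rangle$ with $T$ given by \eqref{Tdef}; the equilibrium $T$ is nonnegative, normalised, and belongs to $L^1(V)$ by \textbf{M2}; and the process is micro-reversible, as verified in Section~\ref{model_dir_dep}. The decisive point is the choice of the $w$-weighted inner product $(\cdot,\cdot)_\eta$ with weight $\eta=\mu/T$ from \eqref{defeta}, for which $\langle T\rangle^\perp$ coincides with the zero-turning-flux functions and $\mathrm{Range}(\mathcal{L})=\langle\eta^{-1}\rangle^\perp$ with the mass-zero functions. The explicit pseudo-inverse \eqref{Linverse}, $\psi\mapsto -\psi/\mu$, then exhibits $\mathcal{L}$ restricted to $\langle T\rangle^\perp$ as boundedly invertible, which together with the compactness of $V$ yields the coercivity (spectral-gap) estimate that \cite{DegondGoudonPoupaud} require. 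This is the step in which the direction dependence of $\mu$ truly enters, since the correct setting is the weighted space $L^2_\eta(V)$ rather than the unweighted $L^2(V)$.

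Next I would check the two quantitative hypotheses of the lemma, which are exactly the conditions making the limiting coefficients well-defined. The bound $\int_V w^2\,q/\mu^2\,dw\le C_2$ guarantees, via $T=q/(C\mu)$ and the identity $\int_V w\otimes w\,T/\mu\,dw = C^{-1}\int_V w\otimes w\,q/\mu^2\,dw$, that the diffusion tensor $\mathbb{D}$ of \eqref{macroDp} is finite. The bound $|w\cdot\nabla q|\le C_1\mu$ controls the spatial variation of the equilibrium, hence the first-order corrector $p_1$ of \eqref{p1} and the advective flux \eqref{macroap}, ensuring the drift $\mathbf{a}$ is bounded and that the fluxes pass to the limit. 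The solvability condition ${\bf E}_T=0$ of \eqref{nec.diff.cond.} is precisely the requirement \eqref{ETzero} that the equilibrium carry no $\mathcal{O}(1/\epsilon)$ drift, so that the scaling is genuinely parabolic and the first-order problem \eqref{firstorder} is solvable.

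Finally I would assemble the a priori estimates and pass to the limit. The weighted energy $\int_{\Omega\times V} p_\epsilon^2/T\,dx\,dw$, finite at $t=0$ by assumption, is the quadratic instance $\Phi(r)=r^2$ of the relative entropy $H_\Phi[p|M]$ for which the linear H-theorem of Section~\ref{model_dir_dep} was proved; its dissipation delivers bounds, uniform in $\epsilon$, on $\bar p_\epsilon$ in $L^2((0,T)\times\Omega)$ and on the fluctuation $(p_\epsilon-\bar p_\epsilon T)/\epsilon$. Weak compactness then extracts a subsequence $\bar p_\epsilon\rightharpoonup\bar p$, and the abstract theorem of \cite{DegondGoudonPoupaud}, whose hypotheses we have verified, identifies the limit as the solution of \eqref{FAAD.xv} with initial datum $\bar p^0$ inherited from the $H^{-1}(\Omega)$ convergence of $\bar p_\epsilon^0$. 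I expect the main obstacle to be the coercivity step: confirming that the $w$-dependent weighted space $L^2_\eta$ fits the abstract hypotheses and that the spectral-gap constant is controlled uniformly, since the DGP framework is ordinarily presented for turning rates independent of the post-turning velocity and here $\mu=\mu(X,w)$ reshapes the weight $\eta$.
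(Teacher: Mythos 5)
Your proposal is correct and follows essentially the same route as the paper: the paper's own ``proof'' is a one-line appeal to the convergence theorem of \cite{DegondGoudonPoupaud}, justified by the compactness of $V$ and $\Omega$, with the lemma's two quantitative hypotheses being precisely the conditions required by that theorem. Your expanded verification (kernel structure and pseudo-inverse in $L^2_\eta$, the bounds yielding finiteness of $\mathbb{D}$ and $\mathbf{a}$, the weighted-energy a priori estimate, and weak compactness) is exactly the checking the paper leaves implicit.
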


\subsection{Drift-Diffusion Case}

The authors of \cite{GoudonMellet} study the parabolic scaling also in the case where the macroscopic drift ${\bf E}_T \ne 0$. The key lies in a transformation to moving spatial coordinates, shifting the solution in the direction of ${\bf E}_T$ as $Z=X-{\bf E}_T t$. This method also works here and we introduce 
\[ p(\tau,X,w) := u(\tau, X-{\bf E}_T \tau, w). \]
Then, the transport equation \eqref{transport.general} transforms to
\[ \frac{\partial}{\partial \tau}u +\nabla\cdot [ (w-{\bf E}_T) u] = \mathcal{L} u.\]
Therefore, the scaled transport equation (\ref{eq.par})  
\[\ep^2 \frac{d}{d\tau} p +\ep \nabla \cdot (w p) = \mathcal{L} p \]
transforms to 
\[ \ep^2 \frac{\partial}{\partial \tau}u +\ep \nabla\cdot [ (w-{\bf E}_T) u] = \mathcal{L} u.\]
For this modified transport equation we perform the same scaling analysis as before. We consider 
\[ u(\tau,Z,w) = u_0(\tau, Z, w) +\ep u_1(\tau,Z, w) +\ep^2 u_2(\tau, Z, w) + \cdots. \]
Upon comparing orders of $\ep$ we find, to leading order, that 
\[ \mathcal{L} u_0 = 0.\]
Hence, $u_0$ is in the kernel of $\mathcal{L}$ and we can write 
\[ u_0(\tau,Z, w) = \bar u_0(\tau,Z) \; T(X, w). \]
The order $\ep$ terms are 
\[ \nabla\cdot\big[(w-{\bf E}_T) \bar u_0 T\big] = \mathcal{L} u_1.\]
To solve this equation for $u_1$ we require the solvability condition 
\[
\nabla\cdot \left[\int_V (w-{\bf E}_T) T dw \; \bar u_0 \right] =0\,,\]
which is true since 
\[ \int_V(w-{\bf E}_T) T dw = {\bf E}_T - {\bf E}_T =0.\]
Then 
\[ u_1 = -\frac{1}{\mu} \nabla\cdot\big[(w-{\bf E}_T) \bar u_0 T\big].\]
The terms of order $\ep^2$ are 
\[\frac{\partial}{\partial \tau} (\bar u_0 T) + \nabla\cdot \big[(w-{\bf E}_T) u_1\big] = \mathcal{L} u_2.\]
We integrate this equation over $V$ to obtain 
\begin{equation}\label{intermediate1}
0= \frac{\partial}{\partial \tau}\bar u_{0} - \nabla\cdot \underbrace{\int_V (w-{\bf E}_T) \left(\frac{1}{\mu}\right) \nabla\cdot \big[(w-{\bf E}_T) \bar u_0 T\big] dw}_{(I)}. 
\end{equation} 
Generalising the previous definitions of the diffusion tensor (\ref{macroDp}) and the drift velocity (\ref{macroap}) we define now a macroscopic diffusion tensor $\mathbb{D}_h$ and macroscopic drift velocity $\mathbf{a}_h$ as 
\begin{eqnarray}
\mathbb{D}_h(X) &=& \int_V (w-{\bf E}_T)\otimes (w-{\bf E}_T)\, \frac{T}{\mu} dw\,,\label{macroDh} \\
\mathbf{a}_h(X) &=&  \int_V (w-{\bf E}_T) \nabla \cdot\left[\frac{w-{\bf E}_T}{\mu}\right] T\; dw\,.\label{macroah}
\end{eqnarray}
Note that for ${\bf E}_T=0$ we return to the previous definitions (\ref{macroDp}) and (\ref{macroap}) of the diffusion-dominated case. 

Therefore we have that
\begin{eqnarray}
\nabla \cdot (\mathbb{D}_h \bar u_0) &=& \int_V (-\nabla \cdot {\bf E}_T) (w-{\bf E}_T) \frac{T}{\mu} dw\; \bar u_0  +\underbrace{\int_V(w-{\bf E}_T) \frac{1}{\mu} \nabla \cdot\big[(w-{\bf E}_T) \bar u_0 T\big] \, dw}_{(I)}\nonumber \\
&&+ \int_V (w-{\bf E}_T) \otimes (w-{\bf E}_T) T \nabla\left(\frac{1}{\mu}\right) dw\; \bar u_0\nonumber\\
&=& (I) + \int_V(w-{\bf E}_T) \left[\nabla\cdot(w-{\bf E}_T) \frac{1}{\mu} + (w-{\bf E}_T)  \cdot\nabla\left(\frac{1}{\mu}\right) \right] Tdw \;\bar u_0 \nonumber \\
&=& (I) + {\bf a}_h \bar u_0\,. \label{IDa}
\end{eqnarray}
With these definitions we obtain from (\ref{intermediate1}) a fully anisotropic drift-diffusion model for $\bar u_0$:
\begin{equation}
    \frac{\partial}{\partial \tau}\bar u_{0} +\nabla\cdot ({\bf a}_h \bar u_0) = \nabla\cdot\nabla\cdot (\mathbb{D}_h \bar u_0) \,.
\end{equation}
Finally, transforming back to $\bar p_0 (\tau, X) = \bar u_0(\tau, X-{\bf E}_T \tau)$ we find 
\begin{equation}\label{eq:par_drift}
    \frac{\partial}{\partial \tau}\bar p_{0} +\nabla\cdot (({\bf a}_h+{\bf E}_T) \bar p_0) = \nabla\cdot\nabla\cdot (\mathbb{D}_h \bar p_0) \,.
\end{equation}
If ${\bf E}_T=0$ we get the same result as already shown in Theorem \ref{t:paralim} and equation (\ref{FAAD.xv}) in Lemma \ref{l:paralim}.

\subsection{Hyperbolic limit}

In drift-dominated phenomena we expect that nondimensionalisation leads to a scaling in which the macroscopic time and space scales are of the form $\tau=\ep t, X=\ep x$, $\ep \ll 1$. Effectively, cells do not have a large turning frequency, drift dominates and we can perform a hyperbolic limit. The rescaled equation is
\begin{equation}\label{drift.resc}
\ep \dfrac{\partial p}{\partial \tau} + \ep w\cdot \nabla p =\mathcal{L} p \,. 
\end{equation}
We consider the following expansion
\begin{equation}\label{hyp.p}
p= p_0 + \ep g +\mathcal{O}(\ep^2),
\end{equation}
where we assume that the correction term $g$ carries zero mass ($\bar g =0$) and, therefore, $\bar p_0=\bar p$.  
Substituting (\ref{hyp.p}) into ($\ref{drift.resc}$) we find that the leading order terms ($\ep^{0}$) are $\mathcal{L} p_0= 0$, hence 
\begin{equation}\label{hypone}
p_0(\tau,X, w) = \bar p(\tau,X)\, T(X,w)\,.
\end{equation}
With this choice of $p_0$ the remaining terms in (\ref{drift.resc}) are
\begin{equation}\label{hypfour}
\dfrac{\partial p}{\partial \tau} + \ep g_\tau + w\cdot \nabla p + \ep w\cdot \nabla g = \mathcal{L} g + O(\ep^2).
\end{equation}
We integrate this equation over $V$, use the form of $p_0$ from above (\ref{hypone}), and assume the $O(\ep^2)$-terms are negligible. Then 
\[ \dfrac{\partial \bar p}{\partial \tau} +\ep\int_V g_\tau dw + \nabla\cdot \int_V w \bar p T dw + \ep \nabla\cdot \int_V w gdw = 0 .\]
Since $g$ carries no mass, we have $\frac{\partial}{\partial \tau} \bar g =0$. Using the definition of the expectation of $T$ from (\ref{mean.dir.fr}) we obtain \begin{equation}\label{macro.drift.1}
 \dfrac{\partial \bar p}{\partial \tau}+\nabla \cdot ( {\bf E}_T \bar{p}) + \ep \nabla\cdot \int_{V} w g dw =0 \,,
\end{equation}
which, to leading order, becomes the pure drift model 
\begin{equation}\label{hypthree}
\dfrac{\partial \bar p}{\partial \tau}(\tau, X)  +\nabla\cdot \Bigl({\bf E}_T(X) \bar p(\tau,X)\Bigr) =0 .
\end{equation}

The macroscopic drift velocity ${\bf E}_T$ is the expected movement direction based on the average time spent on the fibres. If ${\bf E}_T$ is of order one then (\ref{hypthree}) is the leading order model. However, for small or zero expectation ${\bf E}_T$, we can compute the next-order correction term $g$. Specifically, we use  $\displaystyle \int_V wg \, dw= \displaystyle \int_V (w-{\bf E}_T)g \, dw$ and rewrite equation \eqref{macro.drift.1} as
\begin{equation}\label{nextordercorrection}
\dfrac{\partial \bar{p}}{\partial \tau}+\nabla \cdot \big( {\bf E}_T\bar p \big)\\
= -\ep \nabla \cdot\Bigl[\int_V(w-\mathbf{E}_T) g dw\Bigr]. 
\end{equation}
From the previous equation (\ref{hypfour}) we find to leading order that 
\begin{equation}\label{Lequation}
\mathcal{L} g = \dfrac{\partial p}{\partial \tau} +w\cdot \nabla p 
 = \dfrac{\partial \bar{p}}{\partial \tau} T + w\cdot \nabla (\bar p T) 
 = -\nabla\cdot({\bf E}_T\bar p) T + w\cdot \nabla(\bar p T), 
\end{equation}
where we used the drift model (\ref{hypthree}) in the last step. To solve for $g$ we need to satisfy the solvability condition 
\begin{eqnarray}\label{intcond} 
0 &=&  \int_V -\nabla \cdot (\mathbf{E}_T \bar p) T + w\cdot \nabla (\bar p T) dw \nonumber\\
&=& -\nabla\cdot({\bf E}_T \bar p) +\nabla \cdot\left(\int_V w T dw \bar p\right)\nonumber \\
&=&  -\nabla\cdot({\bf E}_T \bar p)  +\nabla\cdot({\bf E}_T \bar p) \,,
\end{eqnarray}
which is true since $\int T dw = 1$. 
Hence the right hand side of equation (\ref{Lequation}) is in the range of $\mathcal{L}$. To solve an equation of the form $\mathcal{L} g = R$ with $R\in \mbox{Range}(\mathcal{L})$, we can use the pseudoinverse of $\mathcal{L}^\perp$ and add a term which lies in the kernel of $\mathcal{L}$, i.e. we write 
\[ g(\tau,X,w) = -\frac{1}{\mu(X,w)} R(\tau,X,w) + \alpha(\tau,X) T(X,w), \]
where $\alpha(\tau,X)$ is independent of $w$. Here we assumed the correction term has no mass, $\bar g =0$, and hence we choose $\alpha(\tau, X)$ in such a way that $\bar g =0$. 
We solve (\ref{Lequation}) for the correction term, $g$, to obtain
\[
g = -\frac{1}{\mu} \Bigl[-\nabla \cdot({\bf E}_T \bar p) T + w \cdot \nabla (\bar p T)   \Bigr] + \alpha T 
\]
with 
\begin{equation}\label{alpha_hyp}
\alpha(\tau,X) = -\int_V \frac{1}{\mu} \Bigl[-\nabla \cdot({\bf E}_T \bar p) T + w \cdot \nabla (\bar p T)   \Bigr]dw.
\end{equation}
Note that if the turning rate is independent of velocity $w$ (i.e.  $\mu(X,w)=\mu(X)$), then from (\ref{intcond}) we note $\alpha=0$ and return to the standard case discussed in the introduction (see \cite{Hillen_Painter.08}). We write $g$ in the following form, more convenient for later manipulation:
\begin{eqnarray*}
&g=& -\frac{1}{\mu}\Big[ T(w-{\bf E}_T ) \cdot \nabla \bar p + \bar p(w\cdot \nabla T) - \bar p_0 ((\nabla\cdot {\bf E}_T) T ) \Big] + \alpha T \\
&=& -\frac{1}{\mu}\Big[ T(w-{\bf E}_T) \cdot \nabla \bar p + \nabla T \cdot (w-{\bf E}_T) \bar p  + {\bf E}_T \cdot \nabla T\bar p  + \big[\nabla\cdot(w-{\bf E}_T) \big]\bar p T\Big] + \alpha T \\
&=& -\frac{1}{\mu} \Big[\nabla\cdot\big[(w-{\bf E}_T) \bar p T \Big] + {\bf E}_T\cdot \nabla T \bar p \Big] + \alpha T.
\end{eqnarray*}
Then, the term in the square brackets of (\ref{nextordercorrection}) becomes 
\begin{eqnarray}
 \int_V(w-\mathbf{E}_T) g dw &=& -\underbrace{\int_{V}(w-{\bf E}_T)\frac{1}{\mu}\nabla\big[(w-{\bf E}_T) \bar p T\big] dw}_{(I)}\\
 && 
 - \int_V \frac{1}{\mu} (w-{\bf E}_T) {\bf E}_T \cdot \nabla T \, dw\; \bar p + \int_V(w-{\bf E}_T) \alpha T \, dw \\
&=& -\nabla \cdot(\mathbb{D}_h \bar p ) + {\bf a}_h \bar p - \int_V  (w-{\bf E}_T) {\bf E}_T \cdot \frac{\nabla T}{\mu} \, dw\; \bar p\,.
\end{eqnarray}
In the above we used the relation (\ref{IDa}) between the integral $(I)$ and the diffusion and drift terms $\mathbb{D}_h$ and ${\bf a}_h$, respectively, as well as the fact that 
\[ \int_V(w-{\bf E}_T) \alpha T dw = \alpha \left(\int_V w T dw - {\bf E}_T \int_V T dw \right) =0. \]
Combining these calculations with the macroscopic limit equation (\ref{nextordercorrection}), we obtain the hyperbolic limit equation with correction term as
\begin{equation}\label{macro.drift.3}
\dfrac{\partial \bar{p}}{\partial \tau}+\nabla \cdot \big( ({\bf E}_T +\ep {\bf a}_h)  \bar{p}\big)=\ep \nabla \cdot \nabla \cdot \Big( \mathbb{D}_h\bar{p}\Big)+\ep\nabla \cdot
\left(\int_V  (w-{\bf E}_T) {\bf E}_T \cdot \frac{\nabla T}{\mu} \, dw\; \bar p\right).
\end{equation}
where $\mathbb{D}_h$ and $\mathbf{a}_h$ are given by (\ref{macroDh}) and (\ref{macroah}), respectively. 

A particularly pleasing aspect of the above hyperbolic limit lies in its generalisation of the earlier parabolic limit. 
Specifically, for the case ${\bf E}_T=0$ we obtain the same macroscopic quantities: (\ref{macroDh}) coincides with (\ref{macroDp}) and (\ref{macroah}) coincides with (\ref{macroap}). Moreover, for ${\bf E}_T=0$, the rather clumsy 
integral correction term vanishes and we obtain a rescaled version of the parabolic limit equation (\ref{FAAD.xv}): 
\begin{equation}
\dfrac{\partial \bar{p}}{\partial \tau}+\ep \nabla \cdot \big( ( {\bf a}_h)  \bar{p}\big)=\ep \nabla \cdot \nabla \cdot \Big( \mathbb{D}_h\bar{p}\Big)\,.
\end{equation}
A further scaling of time with $\ep$ then reproduces the parabolic limit (\ref{FAAD.xv}).

\subsection{Time-dependent turning distribution}

For time-varying tissues, $q$, $\mu$ and $C$ will all depend on time. Rescaling \eqref{transport.general},\eqref{operator} with $X=\ep x$, regardless of using time scale $\tau=\ep^2 t $ or $\tau= \ep t$,  and comparing equal orders of $\ep$ allows us to obtain the leading order function of \eqref{expansion}, that is 
\[
p_0(\tau,X,w)=\bar{p}(\tau,X)T(\tau,X,w)
\]
where the equilibrium distribution is now also time dependent, $T(\tau,X,w)=\dfrac{q(\tau,X,w)}{\mu(\tau,X,w)}$.

The diffusive limit does not change, but the hyperbolic limit has different correction terms. Proceeding as in 
the previous section, we find that the correction $g$ is a solution to
\begin{equation}\label{Lg_tdep}
\mathcal{L}g=\dfrac{\partial}{\partial \tau} (\bar{p}T)+w\cdot\nabla \Big( \bar{p}T\Big) +\mathcal{O}(\ep).
\end{equation}
Let us suppose again that $g$ is of the form
\[
g=-\dfrac{R(\tau,X,w)}{\mu(\tau,X,w)}+\alpha(\tau,X)T(\tau,X,w)\,
\]
where again $\alpha(\tau,X)$ does not depend on $w$. 
Inverting \eqref{Lg_tdep}, we find 
\[
g=-\dfrac{1}{\mu}\dfrac{\partial \bar{p}}{\partial \tau} T+\bar{p}\dfrac{\partial T}{\partial \tau}  +w\cdot\nabla \Big( \bar{p}T\Big)+\alpha(\tau,X) T(\tau,X,w)
\]
and then
\[
g=-\dfrac{1}{\mu}\Big[\Big(w T-{\bf E}_T T\Big)\cdot \nabla \bar{p}+\Big( w\cdot\nabla T-\nabla \cdot{\bf E}_T T+\dfrac{\partial T}{\partial \tau} \Big)\bar{p} \Big]+\alpha(\tau,X) T(\tau,X,w) .
\]
Since we assumed $g$ carries no mass and have $\dfrac{\partial}{\partial \tau}\bar{T}=0$, we obtain the same 
$\alpha$ as in \eqref{alpha_hyp}. Concluding, equation \eqref{macro.drift.3} becomes in this case
\begin{equation}\label{macro.drift.4}
\dfrac{\partial \bar{p}}{\partial \tau}+\nabla \cdot \big( \bar{p}({\bf E}_T +\epsilon a_h) \big)=\ep \nabla \cdot \nabla \cdot \Big( \mathbb{D}_h\bar{p}\Big)+\ep\nabla \cdot
\left(\int_V  (w-{\bf E}_T) {\bf E}_T \cdot \frac{\nabla T}{\mu} \, dw\; \bar p\right) +\ep \nabla \cdot \Big(\dfrac{\partial}{\partial \tau}{\bf E}_T \bar{p}\Big).
\end{equation}

\section{Diffusion equations for biological particles}\label{mod_turn_diff_eq}

To appreciate the power of the analysis, we use this section to consider special cases relevant to various biological scenarios. First, consider the parabolic limit equation (\ref{FAAD.xv}). Applications often only consider the macroscopic model for $\bar p_0$, and we simplify the notation by using $u(t,x)$ instead of the cumbersome $\bar p_0(\tau,X)$. We rewrite the parabolic limit (\ref{FAAD.xv}) in a more standard Fickian form 
\begin{equation}\label{FAAD.xv.new}
\dfrac{\partial u}{\partial t }(t,x)= \nabla \cdot \big[\mathbb{D}(x)\nabla u(t,x)\big] +\nabla \cdot \big[({\bf a}(x)+\nabla \cdot \mathbb{D}(x))u(t,x)\big].
\end{equation}
Therefore, the fully anisotropic advection-diffusion process described by the Fokker-Planck equation \eqref{FAAD.xv} is revealed as a standard anisotropic Fickian diffusion with an advective component given by the combination of drift velocity ${\bf a}(x)$ and the gradient of the diffusion tensor  $\nabla\cdot \mathbb{D}(x)$. We consider some special cases and simplify notation by omitting dependencies when they are obvious.  

\subsection{Dependences of turning rates and turning kernels}

\begin{enumerate}
    \item 
Suppose $\mu$ does not depend on $w$. The run time distribution, $T$, then becomes 
\[ T(x,w) = \frac{q(x,w)}{C(x)\mu(x)}, \quad \mbox{with }\quad  C(x) = \int_V \frac{q(x,w) }{\mu(x)} dw = \frac{1}{\mu(x)}\,. \]
Hence, in this case 
\[ T(x,w) = q(x,w)\]
and the model reduces to previously studied cases (e.g. see \cite{Hillen.05,Painter_Hillen.19}). The  
diffusion matrix here is given by the variance-covariance matrix of the fibre network distribution, $q(x,w)$, 
divided by the turning rate 
\[
\mathbb{D}(x) =\int_V w\otimes w\, \frac{T(x,w)}{\mu(x)} dw = \frac{1}{C(x)\mu(x)^2} 
\int_V w\otimes w q(x,w) dw  = \dfrac{\mathbb{V}_q(x)}{\mu(x)}.
\]
The drift velocity (\ref{macroap}), meanwhile, is
\begin{equation}\label{adw.new}
{\bf a}(x)=- \dfrac{\nabla \mu(x)}{\mu^2(x)}\mathbb{V}_q(x).
\end{equation}
In this case, if $q$ is even and then ${\bf E}_T$ vanishes, we obtain a parabolic limit equation (\ref{FAAD.xv}) as 
\begin{equation}\label{diff.3}
\dfrac{\partial u}{\partial t} - \nabla \cdot \left[\mathbb{V}_q \dfrac{\nabla\mu}{\mu^2} u  \right] = \nabla \cdot \nabla \cdot \left(\dfrac{\mathbb{V}_q}{\mu} u\right).
\end{equation}
This can again be written in Fickian form, where we obtain 
\begin{equation}\label{diff.3.fick}
\dfrac{\partial u}{\partial t}= \nabla \cdot \left(\dfrac{\mathbb{V}_q}{\mu}\nabla u \right)+\nabla \cdot \left[ \dfrac{\nabla \cdot \mathbb{V}_q}{\mu} u\right]\,.
\end{equation}

When $\mu$ is spatially heterogeneous, the diffusion process \eqref{diff.3} is a fully anisotropic process with advection given by the taxis term \eqref{adw.new}. This leads the dynamics towards decreasing values of $\mu$. The turning rate $\mu$ also scales the diffusivity, with large diffusivity for small turning rates and vice versa. Hence, frequently turning particles will not show large diffusive spread compared to those turning less often. The drift in the direction of the negative gradient of $\mu$ indicates a drift away from regions of low diffusion towards regions of high diffusion. 

\item If we further assume $\mu$ is constant and consider $q=q(x,w)$, the Fokker-Planck equation (\ref{FAAD.xv}) leads to
\begin{equation}\label{diff.2}
\dfrac{\partial u}{\partial t}= \frac{1}{\mu} \nabla \cdot \nabla \cdot(\mathbb{V}_q u).
\end{equation}
When the oriented habitat is not spatially homogeneous, \eqref{diff.2} describes a fully anisotropic process where diffusion is again linked to an anisotropic environment (which determines the direction of anisotropy) and to the frequency of reorientations (which determines the intensity of diffusion).

\item Suppose now $\mu$ is constant and further assume $q=q(w)$, i.e. the network is spatially homogeneous but can 
still describe an oriented environment. Then the diffusion equation \eqref{diff.2} becomes
\begin{equation}\label{diff.1}
\dfrac{\partial u}{\partial t} = \frac{1}{\mu} \nabla \cdot (\mathbb{V}_q \nabla u)\,. 
\end{equation}
Equation \eqref{diff.1} describes diffusion in a spatially homogeneous environment, where anisotropy is due to the presence of a dominant alignment in the environment. 

\item Suppose both turning rates and turning kernels depend on $w$ but not $x$, i.e. $\mu = \mu (w)$ and $q=q(w)$. If we further assume ${\bf E}_T=0$, in this case $\nabla\mu(X,w)=0$, the macroscopic drift from (\ref{macroap} ) satisfies ${\bf a}=0$ and the macroscopic diffusion (\ref{macroDp}) is 
\begin{equation}\label{macroDp2}
\mathbb{D} = \int_V w\otimes w \frac{T(X,w)}{\mu(w)} dw .
\end{equation}
We use this case later to analyse the movement patterns of cells migrating on fabricated anisotropic surfaces, such as the fibronectin stripe arrangements devised in \cite{doyle2009} (see the schematic in Figure \ref{doyleschema} and the simulations in Figure \ref{fig:Doyle}).

\item Finally, suppose a constant fibre distribution, i.e. 
\begin{equation}\label{q.cost}
q=1/2\pi,
\end{equation}
but $\mu = \mu (w,x,t)$, i.e. the turning rate can depend on the orientation of the environment. If we suppose ${\bf E}_T=0$, we are again in the previous case, but the eventual anisotropy will be the direction orthogonal to the dominant direction of $\mu$, as it is the orientation of cells that tend to turn more frequently.
\end{enumerate}

\subsection{Isotropy and anisotropy}\label{iso.an}

Generally we observe an anisotropic diffusion process (\ref{macroDp}), a consequence of both cells orienting with respect to the network alignment via $q(x,w)$ and the direction-dependent turning frequency $\mu(x,w)$. Specifically, diffusive anisotropy is encapsulated through the second moment of the distribution 
\[ \frac{T(x,w)}{\mu(x,w)} = \frac{q(x,w)}{C(x) \mu^2(x,w)}. \]
The nominator/denominator localisations of $q$ and $\mu$ naturally suggest that turning direction and turning rate might have opposing impact on macroscopic dynamics. 
\begin{enumerate}
\item For example, suppose that 
\begin{equation}\label{q.mu2}
q(x,w) \sim \mu^2(x,w)\,.
\end{equation}
Here we have $T(x,w)\sim \mu(x,w)$ and obtain a macroscopic drift term of the form ${\bf E}_T(x) \sim  \int_V w\mu(x,w)  dw.$ The macroscopic diffusion tensor (\ref{macroDp}) in this case will be isotropic 
\[
\mathbb{D} \sim \displaystyle \int_V  w\otimes w  dw = c I. 
\]
Hence, for $\int w\mu(x,w)dw \neq 0$,  the macroscopic regime will be isotropic and drift-dominated. 

\item Suppose instead
\begin{equation}\label{mu.q}
q(x,w) \sim \mu(x,w)\,.
\end{equation} 
Here, $T(x,w)$ is constant and the macroscopic regime will always be diffusive with
${\bf E}_T=0$. The diffusion might be anisotropic, since
\[ \mathbb{D}\sim \displaystyle \int_V w\otimes w \dfrac{1}{q(x,w)} dw.
\]
\item It is also relevant to consider the case in which 
\begin{equation}\label{mu.invq}
q(x,w) \sim \frac{1}{\mu(x,w)}\,,
\end{equation} 
because in many biological cases cells are likely to both choose the fibre direction and turn less frequently when moving in the direction of dominating alignment. In this case we have both a drift-driven phenomenon, with ${\bf E}_T=\int w\frac{1}{\mu^2(x,w)}dw$, and the diffusion can be anisotropic according to
\[ \mathbb{D}\sim \displaystyle \int_V w\otimes w q^3(x,w) dw.
\]
\end{enumerate}

 \subsection{Comparison to diffusion of passive particles} 


 The somewhat curious diffusive terms derived here arise due to the study of {\em active movers}: biological particles, such as cells and organisms, generate their own movement energy and are therefore unconstrained by energy and momentum conservation as demanded by classical physics. For passive movers (movers simply transported by a fluid environment or some other stream) the situation is different: in that context, an equation of the form  \eqref{diff.1} corresponds to the equation describing diffusion in an anisotropic but homogeneous environment, i.e. without spatial variation; equation \eqref{diff.2} corresponds to the equation derived in \cite{Chapman}, describing particle diffusion in a heterogeneous environment; equation \eqref{diff.3} corresponds to the case in which $\mu$ depends on the position, and it is then the only case in which the macroscopic equation has that particular mixed structure. Within the diffusion theory of physical particles, this mixed structure arises when describing a so-called thermal effect, $\ie$ when there is a temperature gradient \cite{Wereide}. Here, the variable turning rate can be viewed somewhat analogously to temperature in a system of physical particles\footnote{In a biological context, this `temperature' should not be thought of in its everyday sense, but the macroscopic temperature that arises from the movement of particles viewed as a mechanical multi-particle system, like in the thermodynamic theory of gases.}, and the two factors influencing the dynamics are the spatial heterogeneity and the adaptability to environmental heterogeneity.

 An example of adaptability in a biological context is provided in \cite{Kim.18, Kim.2013.1}, addressing starvation-driven diffusion. Here, starvation induces the organism to increase motility and find a better environment, even if it is not known where that may be, $\ie$ the migration is unbiased. We further note that the three forms of diffusion equation  \eqref{diff.3},\eqref{diff.2},\eqref{diff.1} can be derived from space-jump processes, where variation in jumping depends on environment assessment of the current location, between locations or the destination  \cite{Othmer_Stevens.97,Kim.18,LutscherHillen}, with spatially homogeneous jumping times. When both the jumping time and length are spatially heterogeneous, a mixed structure similar to \eqref{FAAD.xv} arises.

\section{Numerical examples}\label{mod_turn_num_tests}

We present simulations to illustrate key model features,  numerically integrating the kinetic transport equation (\ref{transport.general}),\eqref{operator.2} to approximate the density distribution $p$ and in turn the corresponding macroscopic density via \eqref{rho}. For computational convenience we restrict to a 2D spatial setting, a rectangular 2D region $\Omega = [0,L_x] \times [0,L_y]$, and restrict the velocity space to $V=\mathbb{S}^1$, so that $w=\hat{w}$ and the speed is set at unit value. In particular, we integrate Eq. \eqref{transport.general},\eqref{operator.2} as in \cite{Filbet}, with the only difference lying in integrating the relaxation step semi-implicitly, as $\bar{p}_{\mu}$ has to be computed from the $p$ obtained from the current time step. 

Initially the population macroscopic density, $\bar{p}_0(x)=\bar{p}(0,x)$, is described by a tightly-concentrated Gaussian distribution centred at location $(x_0, y_0)$, e.g. see Figure \ref{Test1}(a), while cell orientations are uniformly distributed over $\mathbb{S}^1$. At the boundaries we set diffusive boundary conditions \cite{Lods}, yielding no-flux boundary conditions at the macroscopic level \cite{Plaza,loy2019JMB} both in the parabolic and hyperbolic limit.

The remainder of this section is divided into three core tests, designed to illustrate how $q$ and $\mu$ alter the macroscopic dynamics. Test 1 explores the extent to which anisotropic or isotropic behaviour emerges with the relationship between $q$ and $\mu$. In Test 2 we demonstrate the tactic effect induced by the spatial gradient of $\mu$. 
Test 3 extends the analysis to more complicated network structures.

\subsection{Arrangements}

We specify $q$ and $\mu$ using bimodal von Mises distributions, a standard circular distribution with known analytical forms for the first and second moments (e.g. see \cite{HMPS}). For notational convenience we introduce the following short-hand for a bimodal von-Mises distribution, with a given concentration parameter $k>0$ and a given unit vector $\theta$:
\begin{equation}\label{bvm} 
\bvm (w) = \frac{1}{4\pi I_0(k)} \left(e^{k w\cdot\theta} + e^{-k w\cdot \theta} \right).
\end{equation}
To further simplify notation we identify a unit vector by its angle, i.e. writing $\theta= (\cos\theta, \sin\theta)^T$. 

Fibres are arranged in two principal patterns, schematised in Fig. \ref{cross}. We base $q$ and $\mu$ on \eqref{bvm}, stipulating functions $\theta_q$ and $k_q$ for $q$ and $\theta_\mu$ and $k_\mu$ for $\mu$. $k_q = 0$ therefore corresponds to unaligned fibres and large $k_q$ generates highly aligned fibres along the axis $\theta_q, \theta_q+\pi$. Tests are devised as follows.
\begin{itemize}
\item {\bf Test 1}. Here we set $\theta_q = \mu_q = \pi/4$, $k_{q}=50e^{-0.25((x-5)^2+(y-5)^2)}$ and $\mu$ as one of $\mu \sim constant, \sim q, \sim q^2, \sim 1/q.$
\item{\bf Test 2}. We set $\theta_q$, $\mu_q$ and $k_q$ as in Test 1, but shift $k_{\mu}$ to $50e^{-0.25((x-4.5)^2+(y-4.5)^2)}$.
\item {\bf Test 3}. Setting $\Omega_{XY}=\Omega_X \cap \Omega_Y$, where $\Omega_X=\lbrace (x,y): 4 \leq x \leq 6\rbrace$, $\Omega_Y \lbrace (x,y): 4\leq y \leq 6 \rbrace$, we consider
\begin{equation}\label{q.cross}
q(x,w)\!\!=\!\!
\begin{cases}
\dfrac{1}{2\pi} \quad \rm{on} \quad \Omega-\Omega_{XY},\\[8pt]
\mbox{\sf bvM}_{k_q,\theta_q} \quad \rm{on} \quad \Omega_X,\\[8pt]
\mbox{\sf bvM}_{k_q,\theta_q^{\bot}} \quad \rm{on} \quad \Omega_Y,\\[8pt]
\dfrac{1}{2}\Big(\mbox{\sf bvM}_{k_q,\theta_q}+\mbox{\sf bvM}_{k_q,\theta_q^{\bot}}\Big) \, \rm{on} \, \Omega_{XY},
\end{cases}\!\!\!\!\!\!\!\!\mu(x,w)\!\!=\!\!\begin{cases}
\dfrac{1}{2\pi} \quad \rm{on} \quad \Omega-\Omega_{XY},\\[8pt]
\mbox{\sf bvM}_{k_{\mu},\theta_\mu} \quad \rm{on} \quad \Omega_X,\\[8pt]
\mbox{\sf bvM}_{k_{\mu},\theta_\mu^{\bot}} \quad \rm{on} \quad \Omega_Y,\\[8pt]
\dfrac{1}{2}\Big(\mbox{\sf bvM}_{k_{\mu},\theta_{\mu}}+\mbox{\sf bvM}_{k_{\mu},\theta_{\mu}^{\bot}}\Big) \, \rm{on} \, \Omega_{XY}.
\end{cases}
\end{equation}
Specifically, we will set $\theta_q = \pi$ and $k_q = 50$ to form the cross configuration on the right of Fig. \ref{cross}, where away from the cross fibres are isotropic and on the horizontal (vertical) arm fibres are aligned in the horizontal (vertical) direction. At the centre, fibres cross. For $\mu$ we consider related but subtly distinct forms, allowing distinct turning rate to turning distribution behaviour.
\end{itemize}

\begin{figure}[!t]
    \centering
    \includegraphics[width=0.8\textwidth]{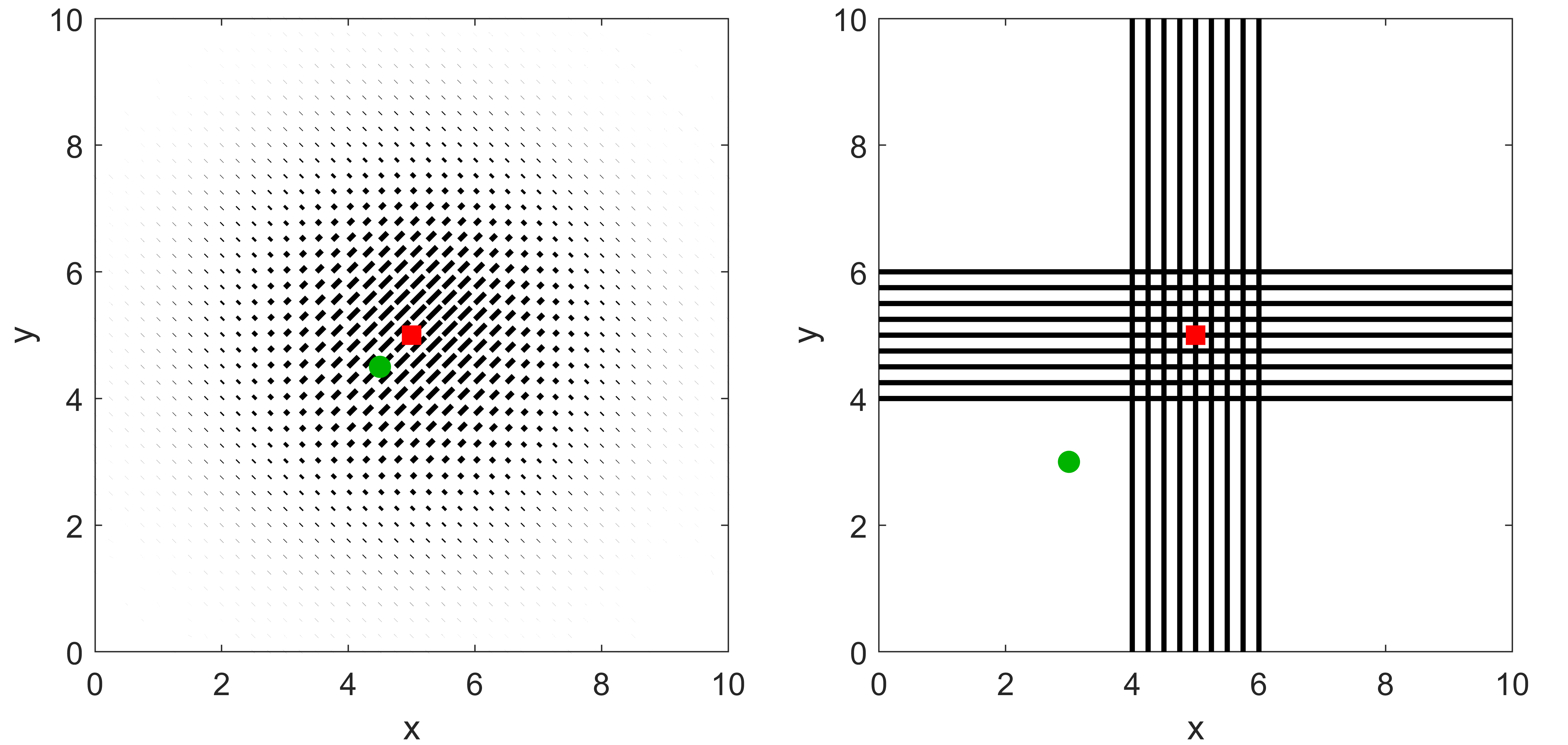}\hspace*{5mm}
    \caption{Left: Arrangement and intensity of fibre orientation ($q$) for {\bf Tests 1 \& 2}. The maximum of $\mu$ is positioned at the red square for Test 1 and the green circle for Test 2. Right: Arrangement and intensity of fibre orientation ($q$) for {\bf Test 3}. According to the precise simulation, we again shift the position of the maximum of $\mu$ from the red square to the green circle.}
    \label{cross}
\end{figure}

\subsection{Simulations}

{\bf Test 1} was designed to explore the extent to which isotropic or anisotropic behaviour arises with the choices of $q$ and $\mu$, for example related as in Eq. \eqref{q.mu2}, \eqref{mu.q} or \eqref{mu.invq}. Initialising according to Fig. \ref{Test1}(a), we plot the macroscopic density at $T=7.5$ in Fig. \ref{Test1}(b-e) for (b) $\mu \sim constant = 1$, (c) $\mu \sim q$, (d) $\mu \sim q^2$, (e) $\mu \sim 1/q$. In (f) we set $q$ constant but maintain anisotropy through $k_\mu = 50e^{-0.25((x-5)^2+(y-5)^2)}$ and $\theta_\mu=\pi/4$.

In particular, for Fig. \ref{Test1}(b) we recover the original model of \cite{Hillen.05} and observe anisotropic diffusion according to the dominating fibre alignment. In Fig. \ref{Test1}(c) we set $\mu=q$, i.e. situation (\ref{mu.q}). There is no drift, but anisotropy arises through the directional bias in $q$. This generates a conflict in which cells preferentially choose the dominating fibre alignment but, when facing those directions, turn more frequently. Consequently the anisotropy becomes orthogonal to the dominating fibre alignment. In Fig. \ref{Test1}(d) we consider case (\ref{q.mu2}) by choosing $\mu=\sqrt{q}$. Again,
preferential movement along an axis is countered by increased turning, but the weighting now generates
isotropic diffusion. Further, there is a non-trivial drift ${\bf E}_T$. The dynamics are similar to those in Figure \ref{Test1}(c), yet the pattern is more symmetric due to the isotropic diffusion. In Fig. \ref{Test1}(e) we consider case (\ref{mu.invq}), i.e. where $\mu=1/q$. Cells now turn less frequently along the directions of preferential alignment and, intuitively, we should expect enhanced anisotropic diffusion along certain axes. Simulations confirm this, with the cells remaining even more tightly aligned along the dominating fibre orientation as compared to Fig. \ref{Test1}(b). Finally, in Fig. \ref{Test1}(f) the fibre network does not impact on orientation, but cells oriented along the axis ($\pi/4, 5\pi/4$) turn more frequently. Diffusion remains anisotropic as predicted by \eqref{q.cost}.

\begin{figure}[!t]
\centering
\subfigure[]{\includegraphics[scale=0.07]{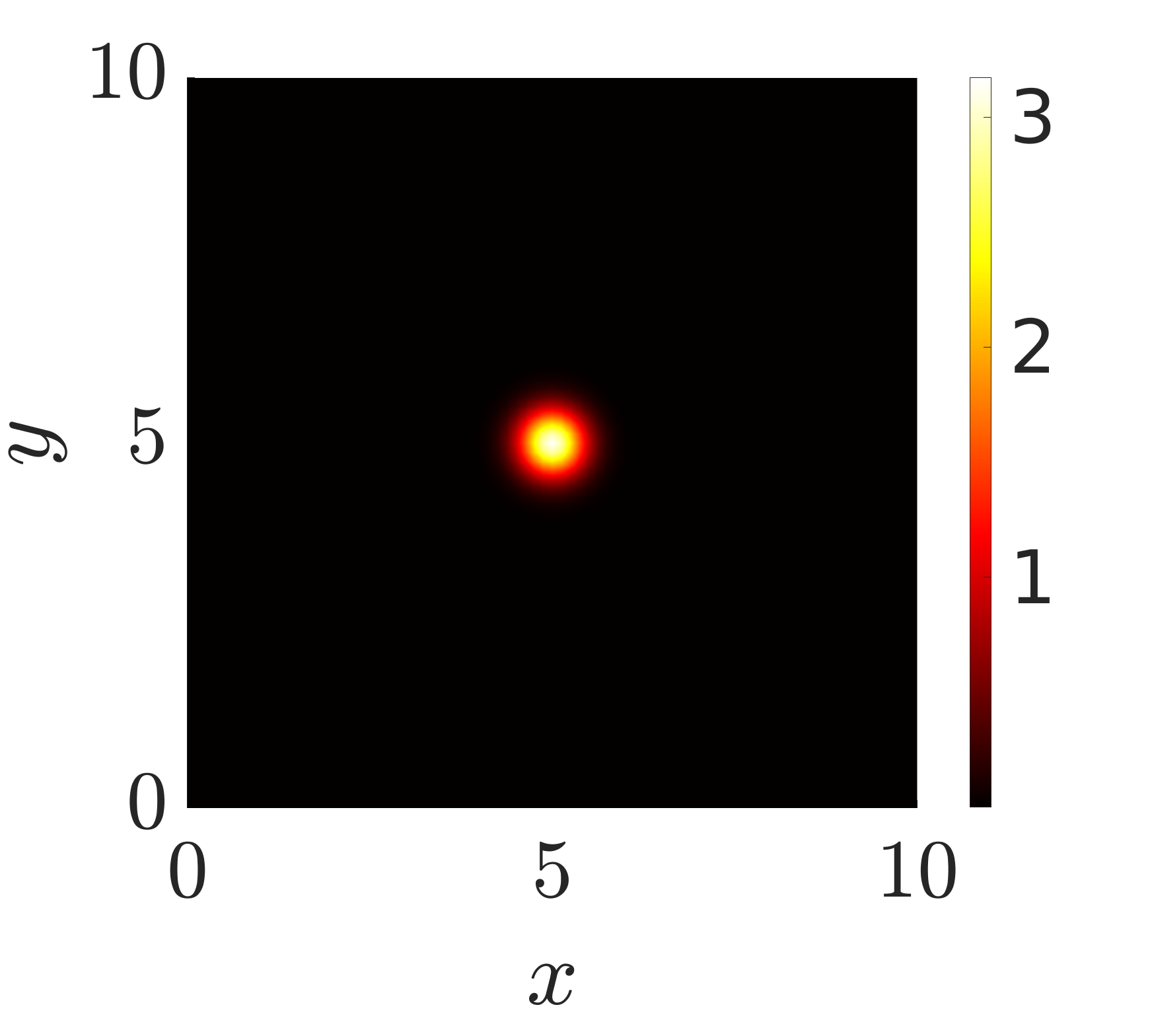}}
\subfigure[]{\includegraphics[scale=0.11]{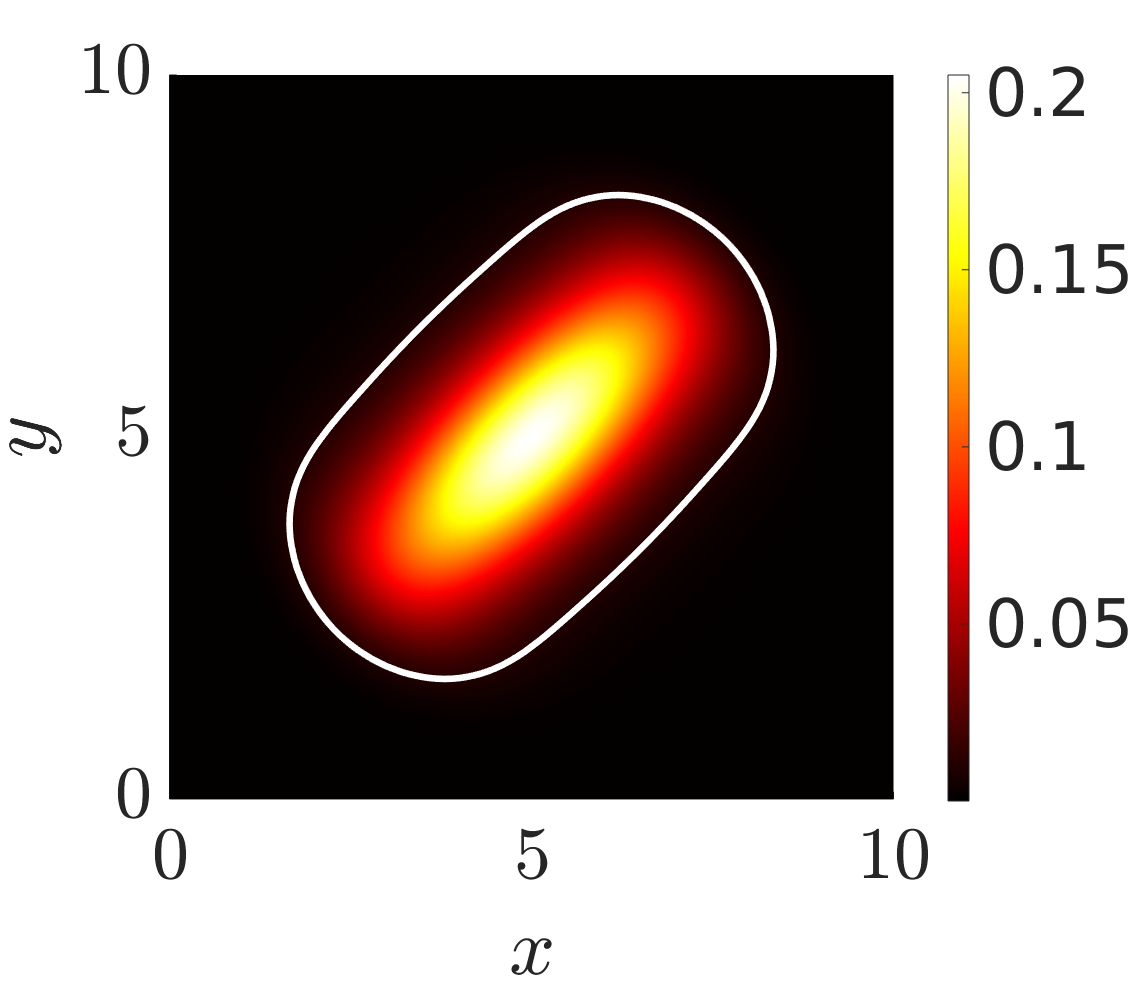}}
\subfigure[]{\includegraphics[scale=0.11]{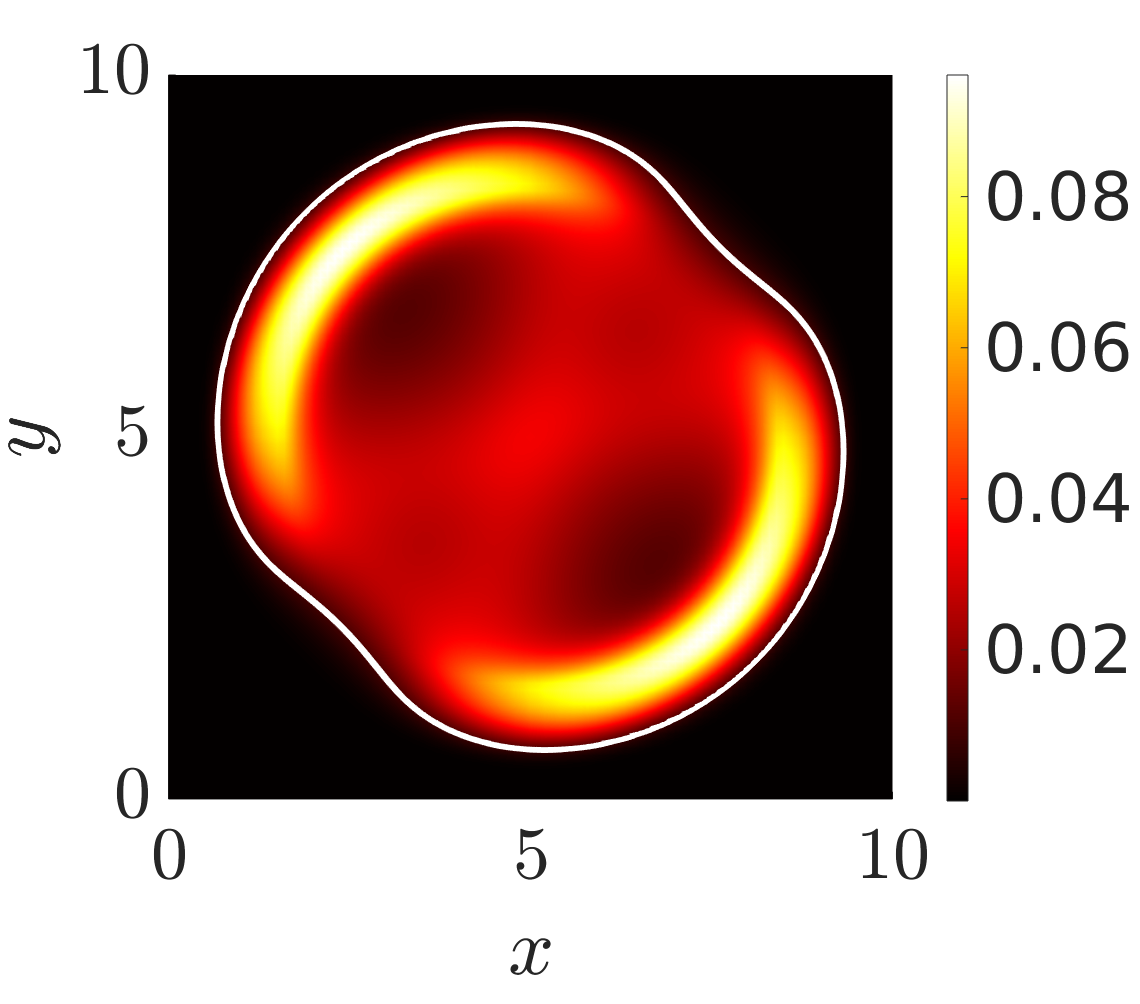}}\\
\subfigure[]{\includegraphics[scale=0.11]{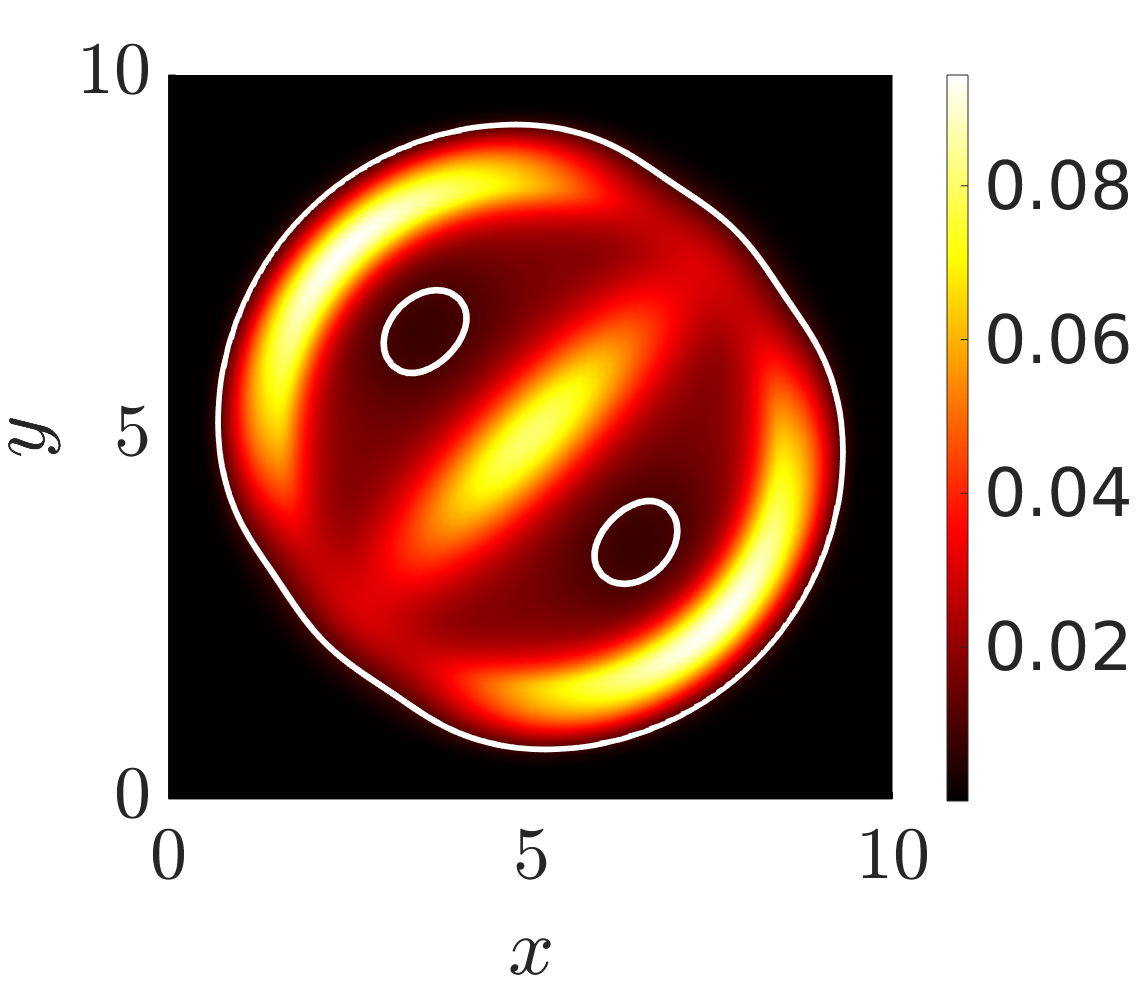}}
\subfigure[]{\includegraphics[scale=0.11]{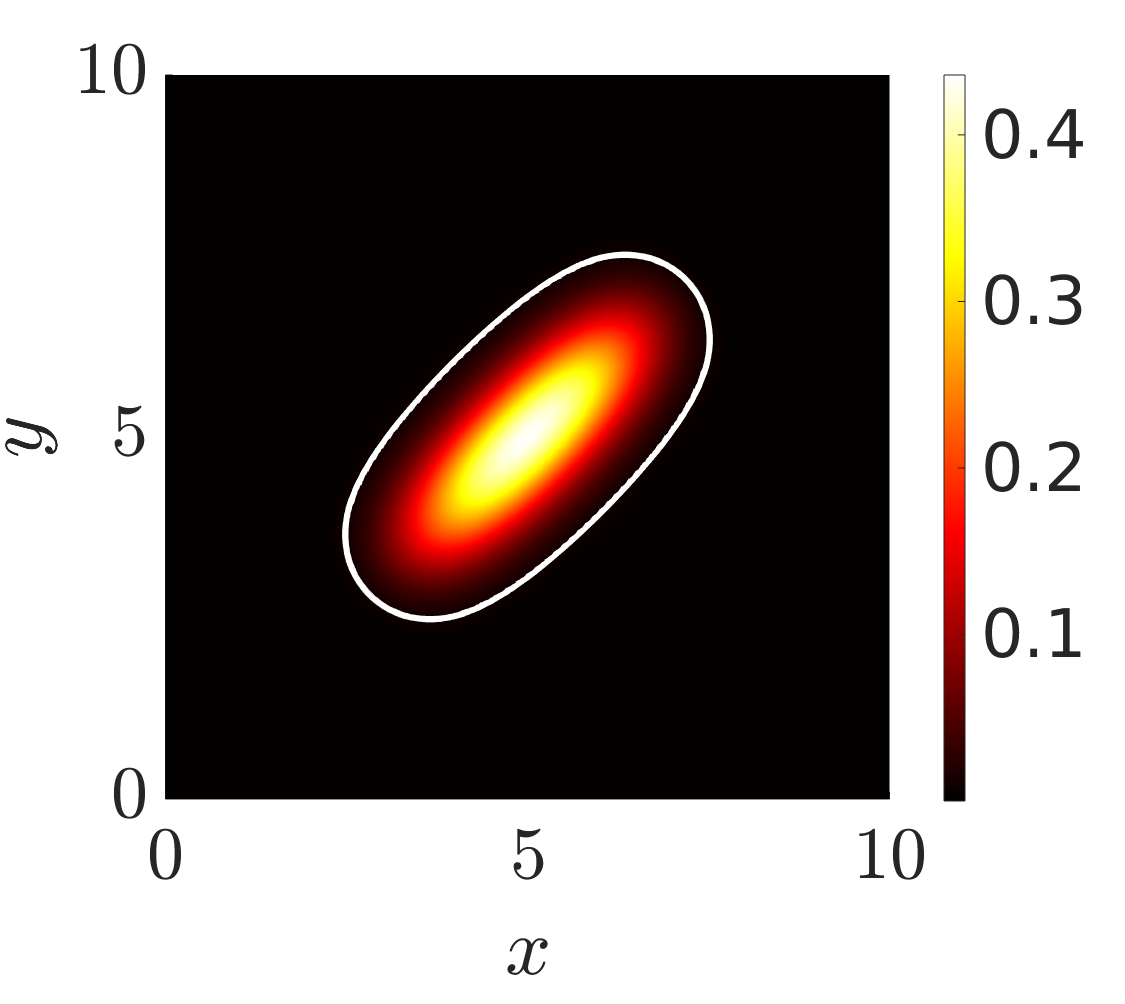}}
\subfigure[]{\includegraphics[scale=0.11]{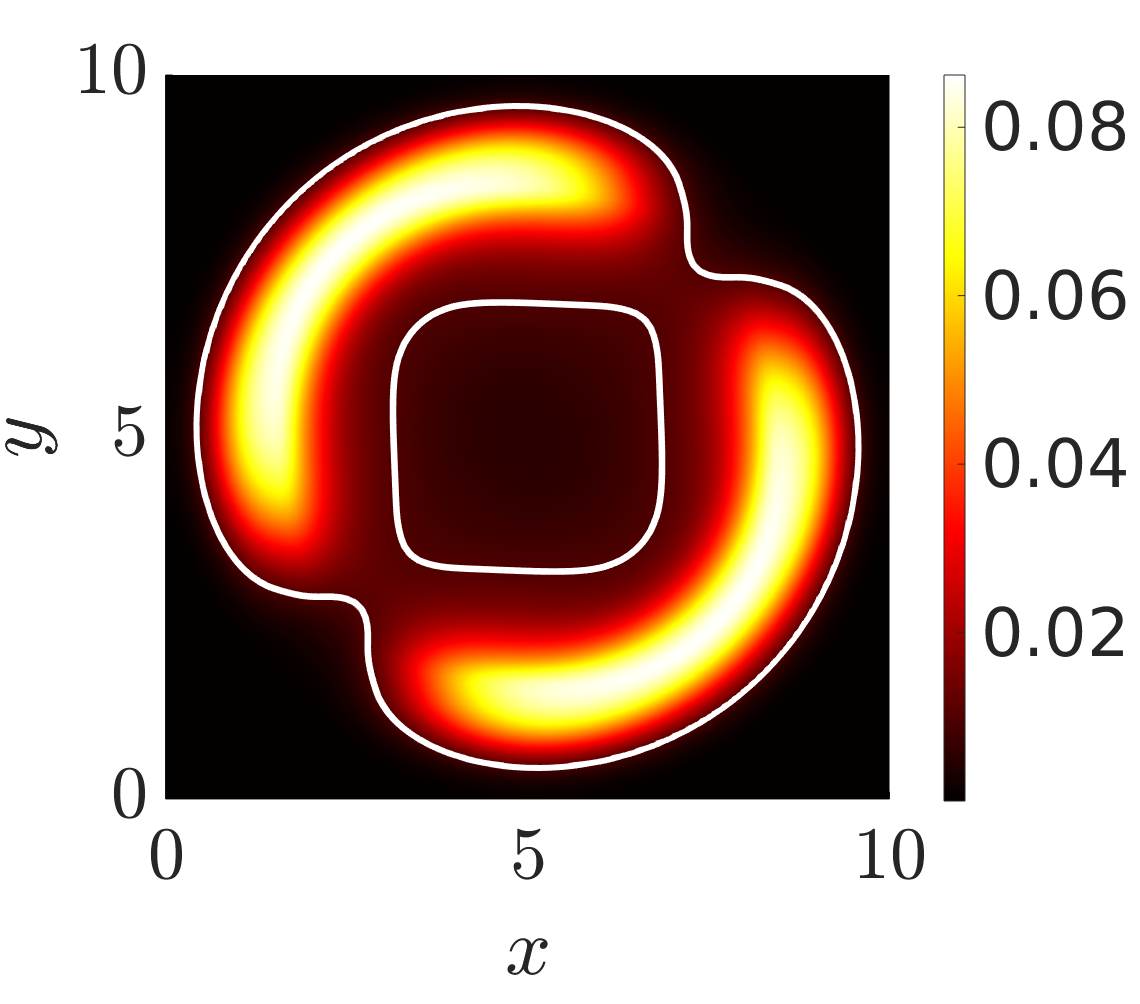}}
\caption{
{\bf Test 1}. Anisotropic/isotropic spread according to the choices of $q$ and $\mu$. See text for details. (a) Initial macroscopic density $\bar{p}(0,x)$, (b-f) macroscopic density $\bar{p}(t,x)$ at $t=7.5$. The white curve in (b-f) is the level set defined by $\bar{p}(t,x)=0.1$. In (b-e) $q$ is as described in the text and the turning rate is given by: (b) $\mu=1$, (c) $\mu=q$, (d) $\mu=\sqrt{q}$, (e) $\mu=1/q$. In (f) $q=1/2\pi$ while $\mu$ is given by \eqref{bvm}, with $\theta_{\mu}=\pi/4$ and $k_{\mu}=50e^{-0.25((x-5)^2+(y-5)^2)}$. }
\label{Test1}
\end{figure}

{\bf Test 2} was designed to show the taxis induced by the spatial variability of the turning rate $\mu$ 
and simulation results are reported in Fig. \ref{Test1} (c). Specifically, we shift the peak of the turning rate away from the centre of the domain, to the point represented by the green dot. There is a subsequent tendency of cells to avoid this new location, coherent with equation \eqref{adw.new} indicating greater diffusion with higher values of the turning frequency (Figure \ref{fig.2}).
\begin{figure}[!t]
\centering
\subfigure[$t=2.5$]{\includegraphics[scale=0.1]{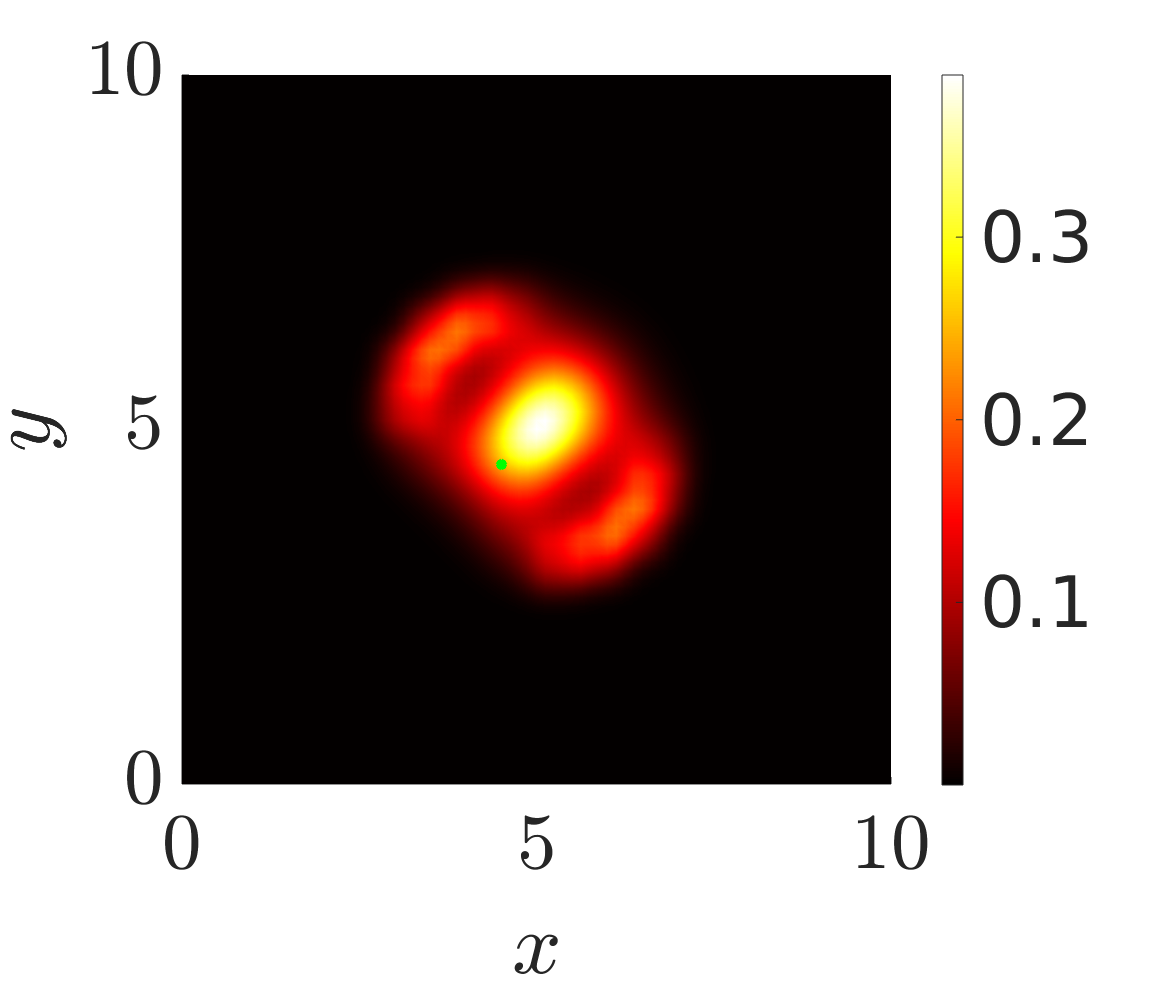}}
\subfigure[$t=5$]{\includegraphics[scale=0.1]{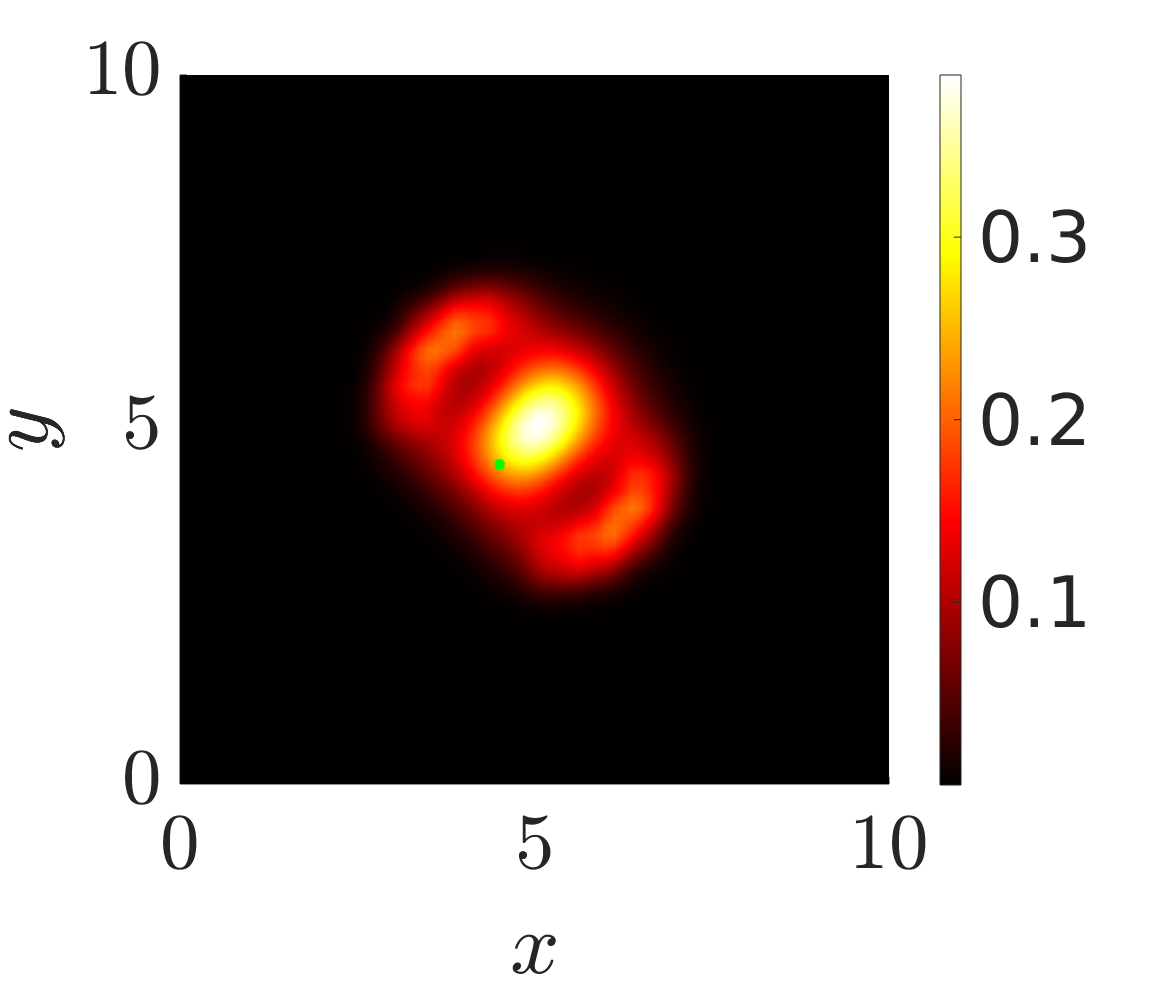}}
\subfigure[$t=7.5$]{\includegraphics[scale=0.1]{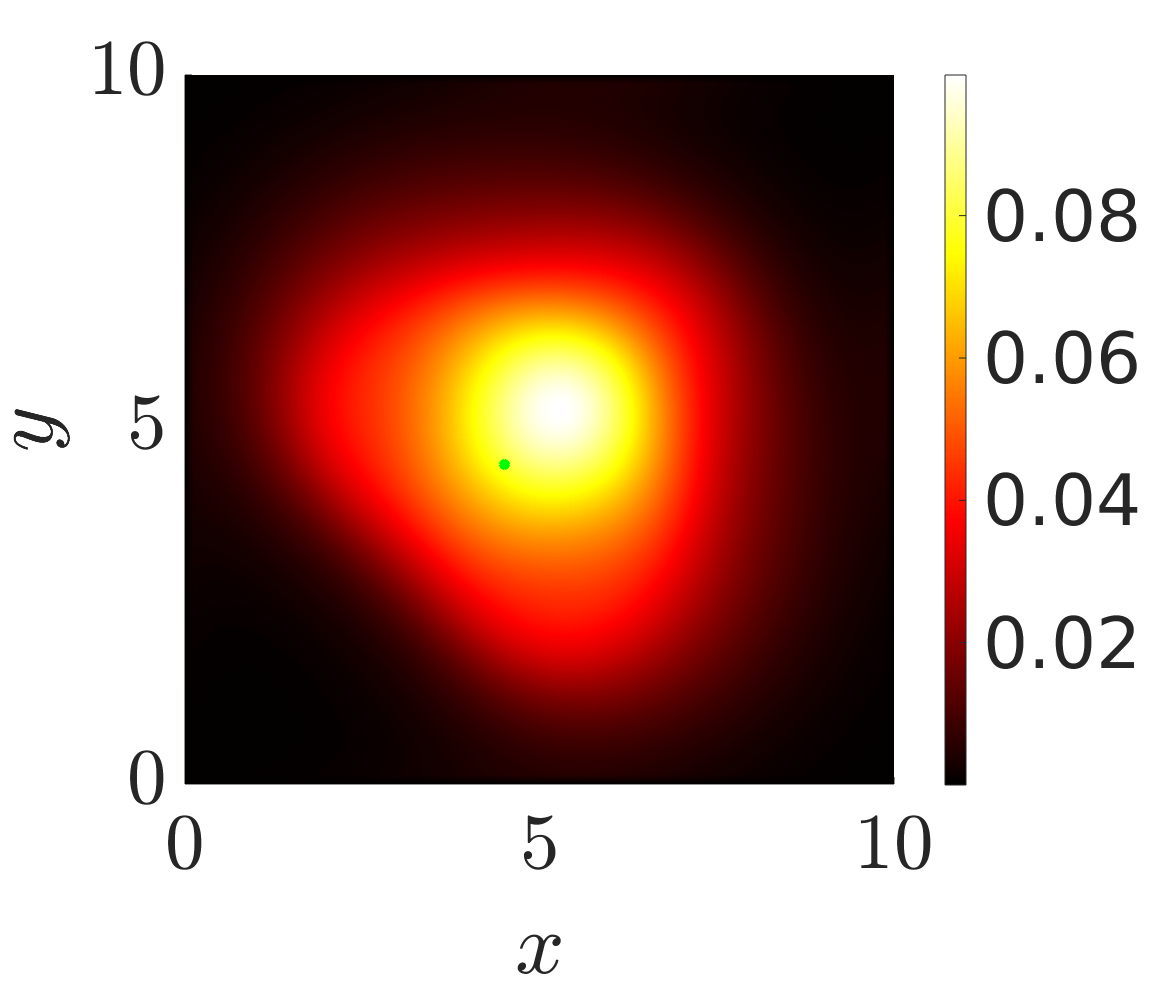}}
\caption{{\bf Test 2.} Taxis induced through variable turning rates. $q$ is as described in the left of Fig. \ref{cross} (see text for details), while the peak of the turning frequency $\mu$ is shifted to centre on the green dot. Simulations plot the macroscopic density at successive times $t = 2.5, 5, 7.5$.}
\label{fig.2}
\end{figure}

{\bf Test 3} extends these analyses to a more complicated environment, see Fig. \ref{cross}. Fig. \ref{fig.4}(a) plots the initial (macroscopic) cell distribution for the experiments (b) and (c), while Fig. \ref{fig.4}(d) shows the distribution used for (e) and (f). In Fig. \ref{fig.4}(b), we set $\mu \sim q$ so that cells orient and migrate along fibres but this is counterbalanced by turning more frequently when moving in those directions. Spread is subsequently inhibited by the cross arrangement. In Fig. \ref{fig.4}(c) we consider instead $\theta_\mu=\theta_q^{\bot}$, so that particles both follow the fibres and turn less frequently when moving in their direction. As expected, there is a clear tendency of cells to follow the cross structure. The second row performs the same simulations, but relocating the initial cell distribution (now centred at $(4,4)$) and the peak of $k_\mu$ to $(3,3)$. The latter generates an advection away from this point, a consequence of the decreasing gradient of $\mu$ experienced by cells. Fig. \ref{fig.4} (e) shows even more clearly the inhibition resulting from fibres at the centre of the cross, while in (f) cells are seen to spread rapidly along the arms once it has been reached. Note the different time scales between Fig. \ref{fig.4}(c) and Fig. \ref{fig.4}(f). 

\begin{figure}[!t]
\centering
\subfigure[$t=0$]{\includegraphics[scale=0.07]{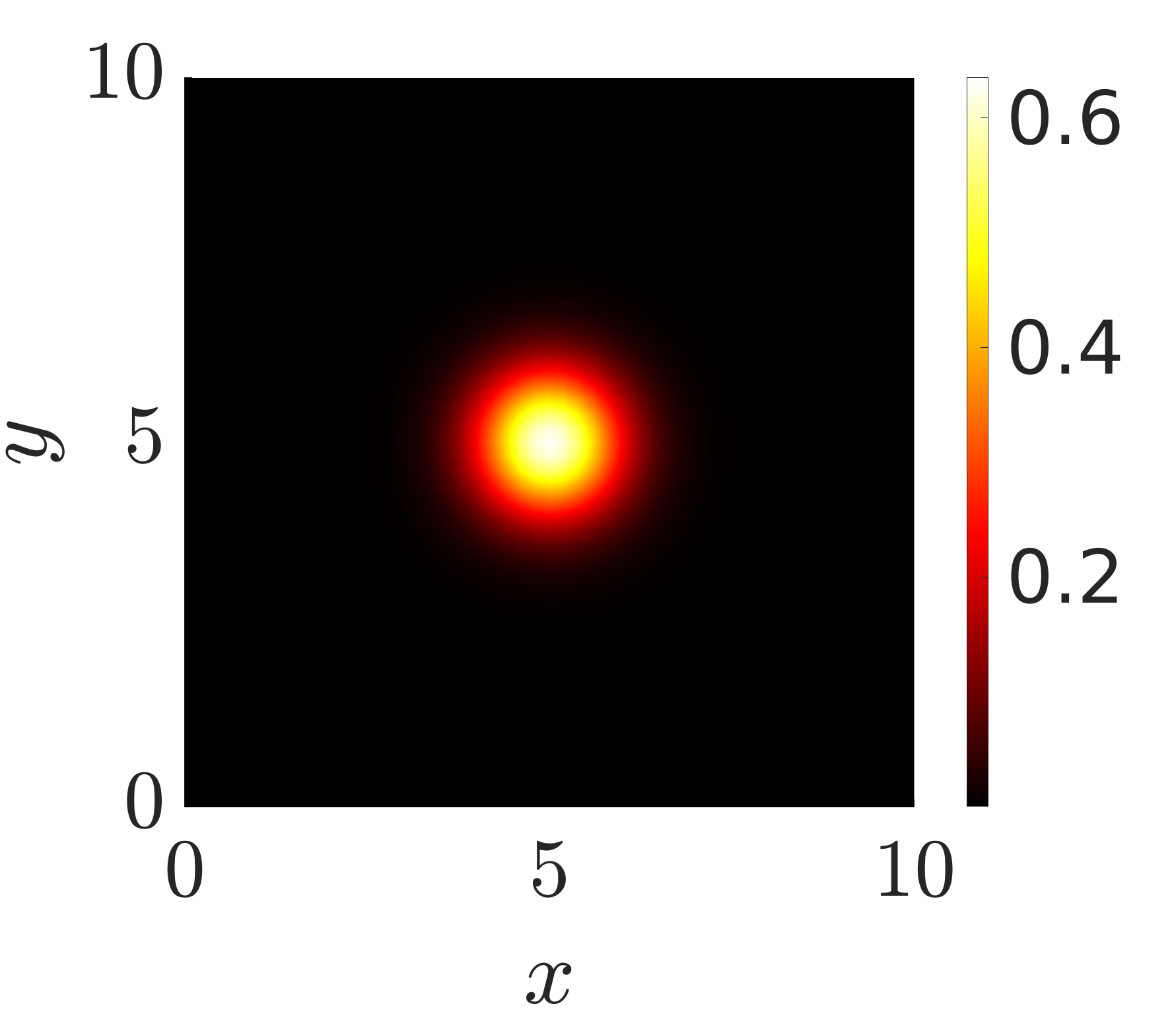}}
\subfigure[$t=7.5$]{\includegraphics[scale=0.07]{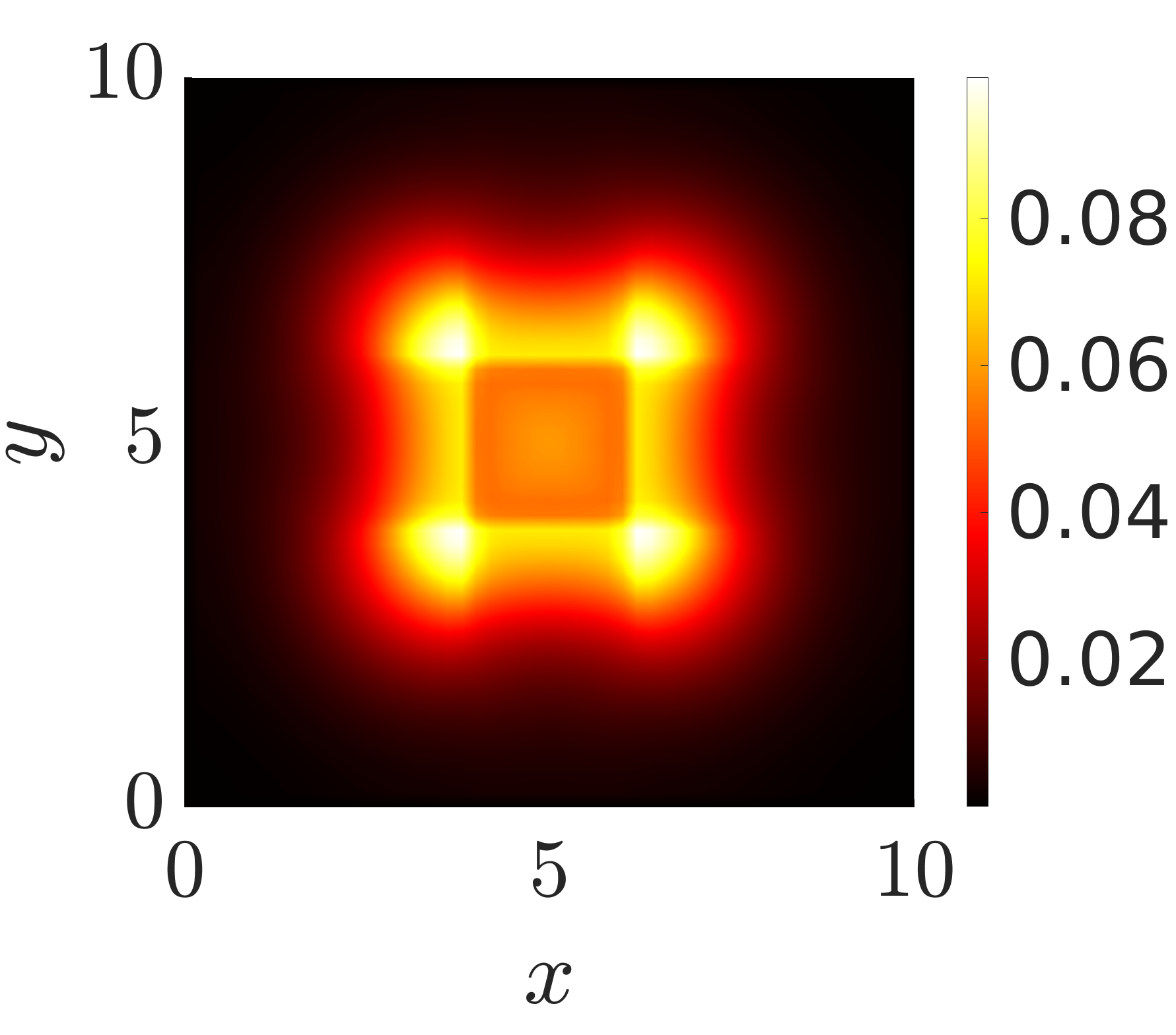}}
\subfigure[$t=15$]{\includegraphics[scale=0.07]{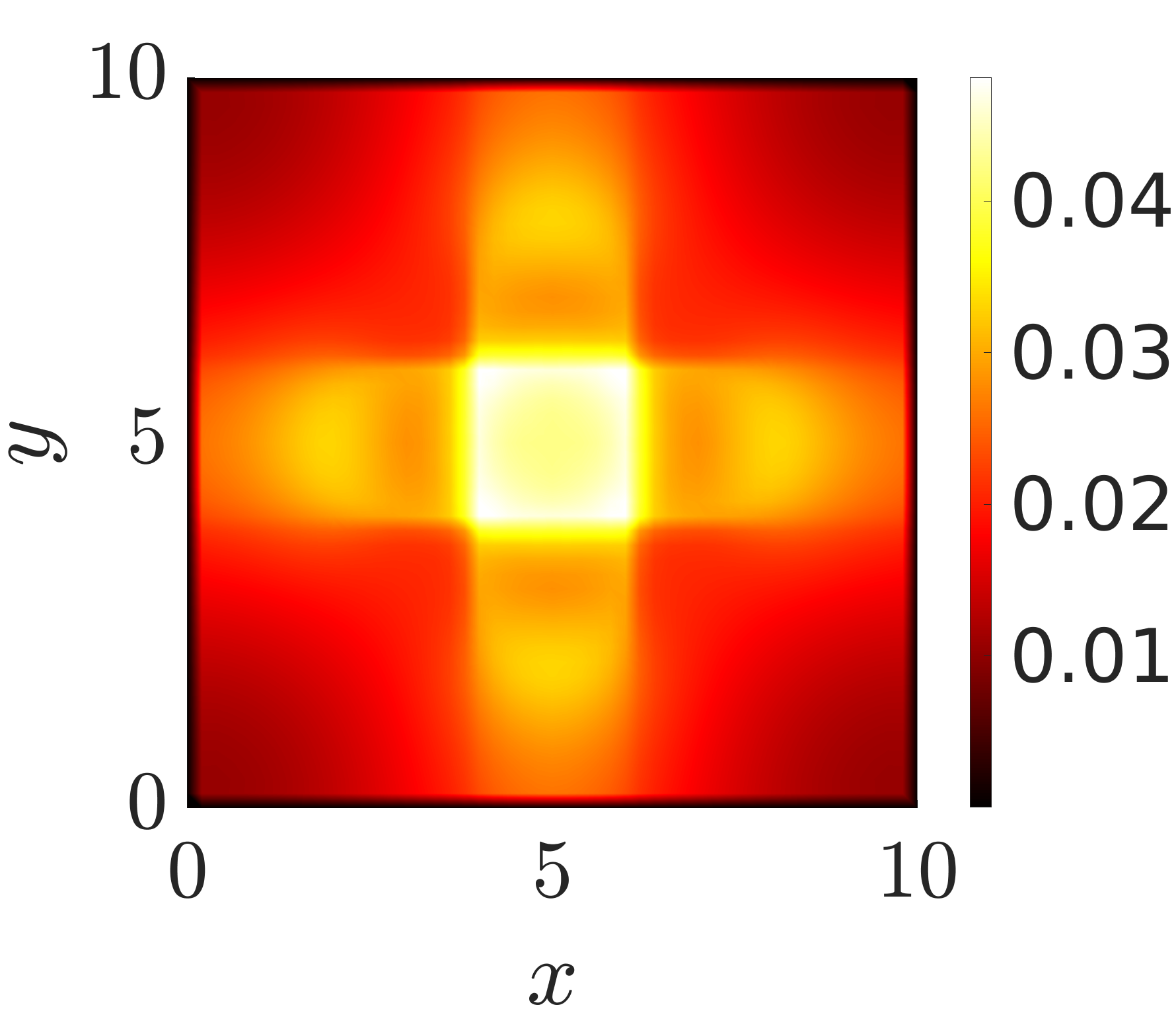}}\\
\subfigure[$t=0$]{\includegraphics[scale=0.07]{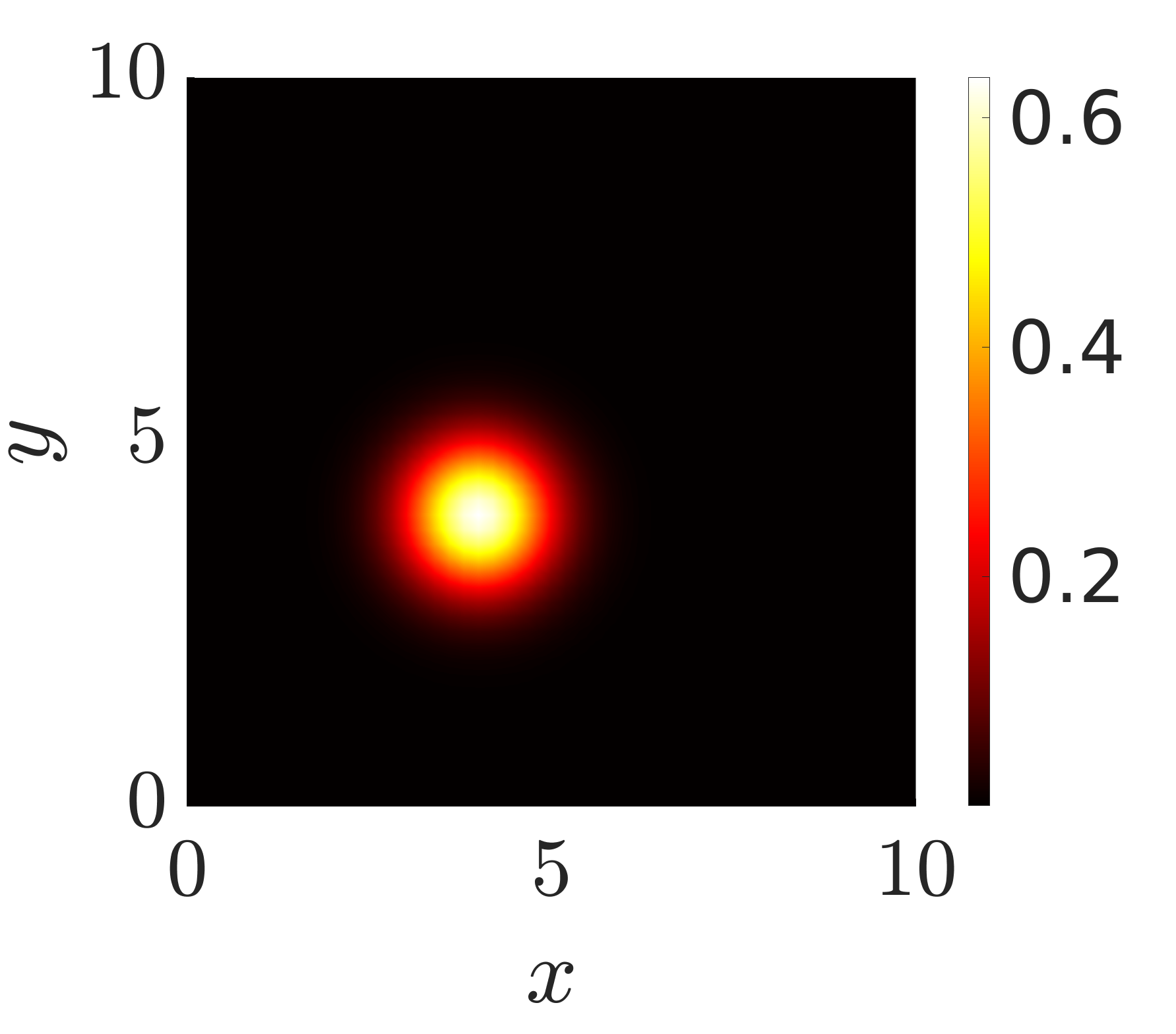}}
\subfigure[$t=7.5$]{\includegraphics[scale=0.07]{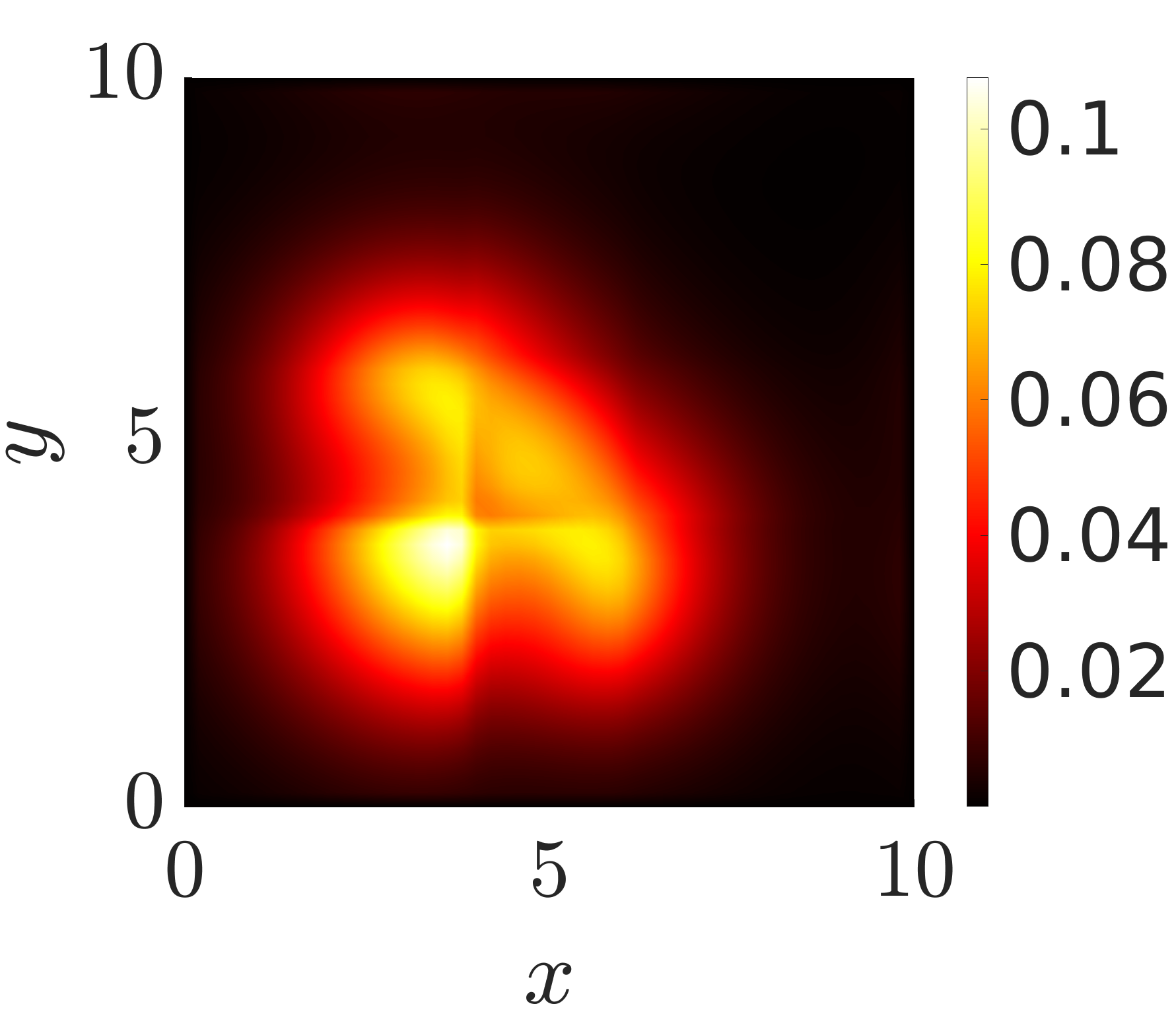}}
\subfigure[$t=7.5$]{\includegraphics[scale=0.07]{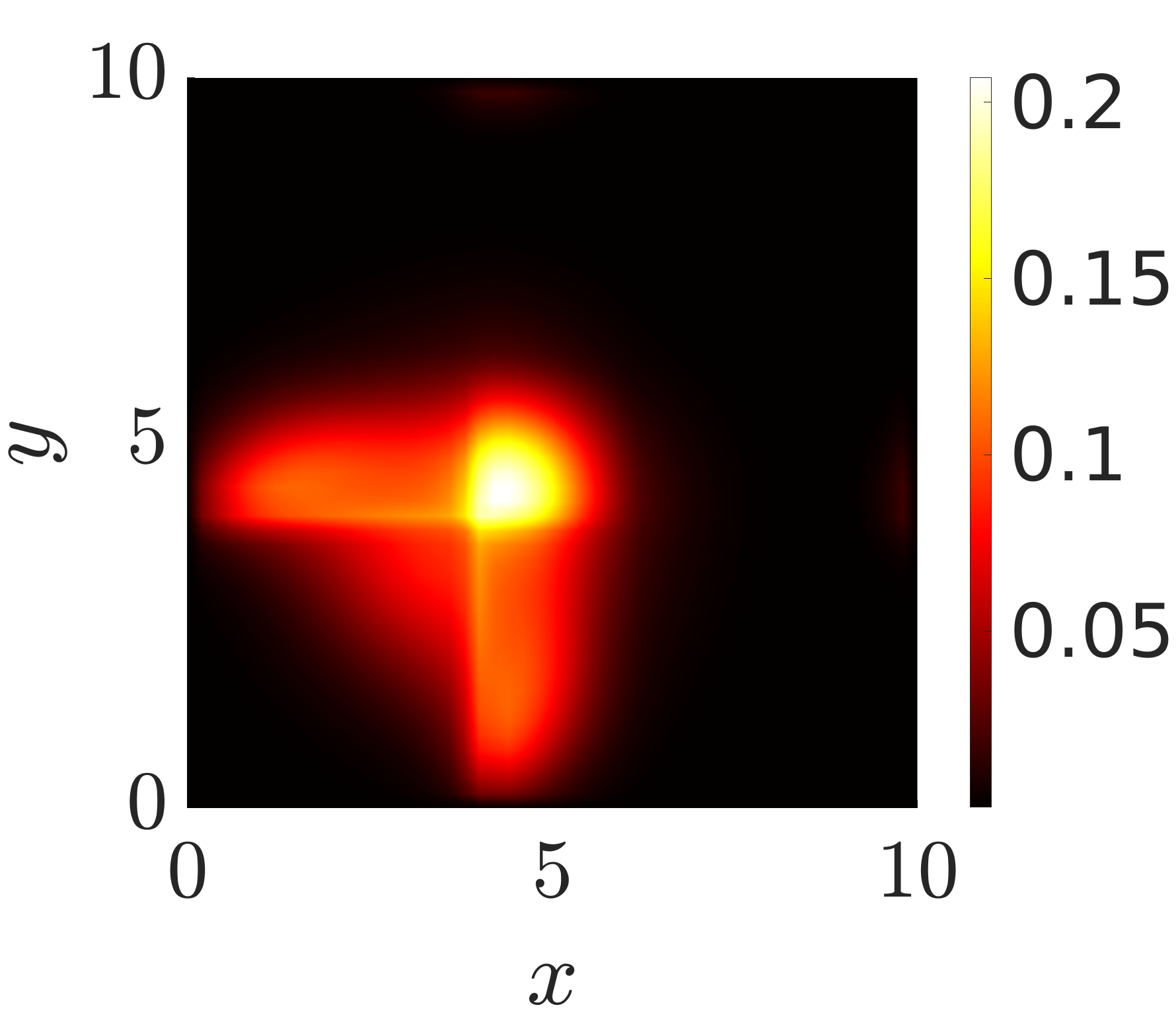}}
\caption{{\bf Test 3}. Dynamics for $q$ given by the cross structure in Fig. \ref{cross} (left). (a)  Initial macroscopic density (centered at $(5,5)$) for the simulations represented in (b,c); (d) Initial macroscopic density (centered at $(4,4)$) for simulations represented in (e,f). $q$ is given by \eqref{q.cross}, see text for details, while $\mu$ has a similar structure, but for (b,e) $\theta_{\mu} = \theta_{q}$ and for (c,f) $\theta_\mu=\theta_q^{\bot}$.}
\label{fig.4}
\end{figure}

\section{Application: movement on fabricated anisotropic surfaces}\label{s:Doyle}

As an application-oriented investigation we return to the cell migration studies of Doyle {\em et al} \cite{doyle2009}, illustrated in Figure \ref{doyleschema}. These experiments rely on a ``photopatterning'' technique, allowing fabrication of imprinted fibronectin micro-structures. Laying down parallel aligned stripes imitates aligned fibres and, confronted by such environments, migratory cells (fibroblasts and keratinocytes) orient accordingly, extending protrusions and forming the adhesive attachments that allows movement along the alignment axis. Significantly, highly aligned environments lead to 
a substantial increase in velocity, for example two-fold (for fibroblasts) or three-fold (for keratinocytes) 
over corresponding movements on an unaligned surface.

While the authors of \cite{doyle2009} do not explicitly measure the turning rate $\mu(x,w)$, they do measure net velocity and persistence. We note that the emphasis of the experiments in \cite{doyle2009} was on cell orientation and morphology, not so much turning rates, so available data remains sparse. Subsequently, rather than a detailed attempt of model fitting, we perform a qualitative comparison to explore how direction-dependent turning rates will impact on the movement patterns of cells on different surfaces.


We specifically focus on the ``transition'' experiment, schematically illustrated in Figure \ref{doyleschema}. Here quasi-1D regions were interrupted by two isotropic 2D forms: {\em Case A}, a {\em uniformly} isotropic region, or {\em Case B}, a region of perpendicular and criss-crossing stripes. As indicated earlier, cells that enter the isotropic region from a neighbouring quasi-1D region round-up and extend protrusions in multiple directions. For {\em Case A} the cell's net movement is dramatically reduced, losing its direction and subsequently performing what appears as an unbiased random walk. In {\em Case B} movement is also arrested and reorientation occurs, but there can be a subsequent significant movement along one of the two perpendicular directions, followed by further reorientations. Overall, the translocations of the cell in {\em Case B} seem to be significantly longer. Here we will show that a direction-dependent turning rate can generate this distinct behaviour.

We adopt a two-pronged approach for the analysis, computing first the macroscopic diffusion tensor for cells migrating in the completely unaligned tissue of case A or the criss-cross pattern of case B. While this is a macroscopic-level analysis (and the underlying experiments are mesoscopic) it will provide valuable evidence of variation in critical movement characteristics according to cell orientation/turning behaviour. We then provide simulations of the mesoscopic-level transport equation, indicating whether the model can indeed recapitulate the observations. Note that for convenience we assume an {\em a priori} rescaling that fixes the cell speed $s$, i.e. $V=s\mathbb{S}^1$.

\subsection{Control Case} 

As a control consider a constant turning rate $\mu$ and, in turn, a constant mean travel time $\tau$. Here the macroscopic diffusion tensor is computed from formula (\ref{diff.1}) as 
\[ \mathbb{D} = \frac{1}{\mu} \mathbb{V}_q.\]
Under {\em Case A} the middle region is completely non-oriented, hence $q=\frac{1}{2\pi}$ (the uniform distribution). Then, 
\begin{equation}\label{above}
\mathbb{V}_q = \int_V w\otimes w \frac{1}{2\pi} dw = \frac{2\pi}{2} \mathbb{I}\frac{1}{2\pi} = \frac{1}{2} \mathbb{I}, 
\end{equation}
where $\mathbb{I}$ denotes the identity matrix. The criss-cross configuration of {\em Case B} can be described by combining two bi-modal von-Mises distributions, in perpendicular directions $e_1 = (1,0)$ and $e_2 = (0,1)$ but with the same concentration parameter $k$:
\begin{equation}\label{bvm12}
q(w) = \frac{1}{2}\left( \bvmone + \bvmtwo\right).
\end{equation}
Following standard calculations (e.g. see \cite{HMPS}), we obtain 
\[ \mathbb{V}_q = \frac{1}{2}\left( 1 -\frac{I_2(k)}{I_0(k)}\right) \mathbb{I} + \frac{I_2(k)}{I_0(k)} \frac{1}{2}\left(e_1 e_1^T + e_2 e_2^T\right) .\]
Now,
\[ e_1 e_1^T = \left( \begin{array}{cc} 1 & 0 \\ 0 & 0 \end{array}\right), \qquad \mbox{and} \quad e_2 e_2^T = \left( \begin{array}{cc} 0 & 0 \\ 0 & 1 \end{array}\right),\]
so
\begin{equation}\label{Vperpenticular}
\mathbb{V}_q = \frac{1}{2}\mathbb{I} -\frac{I_2(k)}{2I_0(k)} \mathbb{I} + \frac{I_2(k)}{2I_0(k)} \mathbb{I}  = \frac{1}{2} \mathbb{I},
\end{equation}
which coincides exactly with the calculation for {\em Case A}, i.e. (\ref{above}). Therefore, under a constant turning rate there should be no macroscopic difference between Case A and Case B, even if cells bias their orientation along the
criss-crossed fibres.

\subsection{Direction-dependent turning}

We now consider $w$-dependence in the turning rate, i.e. $\mu(w)$. Specifically, we choose $\mu$ such that the rate of turning is reduced if cells are migrating along the direction of dominating alignment. For the analysis we focus only on the central region, so both {\em Case A} and {\em Case B} can be regarded as spatially homogeneous (i.e. not depending on $x$) and we can use equation (\ref{macroDp2}):
\[ \mathbb{D} = \int w\otimes w \frac{T}{\mu} dw, \qquad 
T = \frac{q}{C \mu}, \qquad C(x) = \int_V \frac{q}{\mu} dw. \]
We again choose $q$ to be a combination of bimodal von-Mises distributions in the two perpendicular directions $e_1$ and $e_2$, as given in (\ref{bvm12}). However, now we assume that $\mu\sim q^{-1}$, so that
\[
\mu=\frac{1}{2\pi} \dfrac{2}{\left({\bvmtwo}+\bvmone\right)}\,.
\]
where we chose the normalization constant to be $(2\pi)^{-1}$ such that in the isotropic limit of $k\to 0$ we obtain
\[ \lim_{k\to 0} \mu = \frac{1}{2\pi} \frac{2}{\frac{1}{2\pi}+\frac{1}{2\pi} } = 1.\]
Then 
\begin{equation}\label{thirdposer}
\frac{T}{\mu} = \frac{\pi^2}{2 C } \left({\bvmtwo}+\bvmone\right)^3.
\end{equation}

For this choice of $\mu$ and $q$ we can compute the normalization constant $C$ as 
\begin{eqnarray}
    C(x) &=& \int_V\frac{q}{\mu} dw  = \int_V\frac{\pi}{2} \left(\bvmtwo+\bvmone\right)^2 dw \nonumber \\
    &=& \frac{\pi}{2}\frac{1}{(4\pi I_0(k))^2} \int_V 4\pi I_0(2k) (\bvmm_{2k, e_1} +\bvmm_{2k, e_2}) + 4 \nonumber \\
    && \hspace*{2.6cm} + 8\pi I_0(\sqrt{2}k) \left(\bvmm_{\sqrt{2} k, \frac{e_1+e_2}{\sqrt{2}}} + \bvmm_{\sqrt{2}k, \frac{e_1-e_2}{\sqrt{2}}}\right) dw \nonumber \\ 
    &=&  \frac{1}{4 I_0(k)^2 }\bigl(I_0(2k) + 1 + 2 I_0(\sqrt{2} k) \bigr).\label{normalizationlonstant}
    \end{eqnarray}
We note the isotropic limit $\lim_{k\to 0} C(x) = 1$, since $I_0(0)=1$.

Next we compute the third power in (\ref{thirdposer})
\begin{eqnarray*}
\frac{T}{\mu} &=& \frac{\pi^2}{2 C}  \left( \frac{I_0(3k)}{I_0(k)} \Bigl[\bvmm_{3k,e_1}+\bvmm_{3k,e_2}\Bigr] + 3\Bigl[\bvmone+\bvmtwo\Bigr]\right.\\
&& \hspace*{1.6cm} + 3\Bigl[ \bvmm_{k, 2e_1+e_2} +\bvmm_{k,2e_1-e_2} + \bvmm_{k, e_1+2e_2} + \bvmm_{k,e_1-2e_2} \\
&& \hspace*{2.6cm} + 2(\bvmone + \bvmtwo)\Bigr]\Bigr)
\end{eqnarray*}
Notably, vectors $2e_2+ e_1$ {\em etc} are not unit vectors so we rescale:
 \[  \zeta_1=\frac{1}{\sqrt{5}}(2,1)^T\,,\qquad  \zeta_2=\frac{1}{\sqrt{5}}(1,-2)^T\,,\qquad  \xi_1=\frac{1}{\sqrt{5}}(2,-1)^T\,,\qquad  \xi_2=\frac{1}{\sqrt{5}}(1,2)^T.\]
For the example $2e_1+e_2$ this gives
 \[ \bvmm_{k,2e_1+e_2} = \frac{I_0(\sqrt{5} k)}{I_0(k)} \bvmm_{\sqrt{5}k, \zeta_1}, \]
and similar for the other terms. With unit vectors everywhere which, moreover, are pairwise perpendicular ($\zeta_1\cdot \zeta_2 =0, \, \xi_1\cdot\xi_2=0$) we have
\begin{eqnarray*}
\frac{T}{\mu} &=& \frac{\pi^2}{2 C}  \left( \frac{I_0(3k)}{I_0(k)} \Bigl[\bvmm_{3k,e_1}+\bvmm_{3k,e_2}\Bigr] + 3\Bigl[\bvmone+\bvmtwo\Bigr]\right.\\
&& \hspace*{1.5cm} + 9 (\bvmone+\bvmtwo)
+ 3\frac{I_0(\sqrt{5} k)}{I_0(k)} \left(\bvmm_{\sqrt{5}k, \zeta_1} +\bvmm_{\sqrt{5}k, \zeta_2}\right) \\
&&\hspace*{1.5cm} + 3\frac{I_0(\sqrt{5} k)}{I_0(k)} \left(\bvmm_{\sqrt{5}k, \xi_1} +\bvmm_{\sqrt{5}k, \xi_2}\right)\Bigr).
\end{eqnarray*}
Noting that the second moment of bimodal von Mises distributions with pairwise perpendicular unit vectors is $\frac{1}{2}\mathbb{I}$ (see (\ref{Vperpenticular})), we find a diffusion tensor 
\[ \mathbb{D} = d(k) \mathbb{I}, \qquad d(k) = \frac{\pi^2}{2 C (4\pi I_0(k))^2}   \left( \frac{I_0(3k)}{I_0(k)} + 6 \frac{I_0(\sqrt 5 k)}{I_0(k)} + 9\right).\]
Substituting the normalisation constant $C$ from (\ref{normalizationlonstant}) we obtain
\[d(k) =  \frac{I_0(3k) + 9 I_0(k) + 6 I_0(\sqrt{5}k) } {8 I_0(k) ( I_0(2k) + 2I_0(\sqrt{2} k) + 1)}. \]
Again we consider the isotropic limit 
\[ \lim_{k\to 0} d(k) = \frac{1 + 9 + 6}{ 8(1+2+1) } = \frac{1}{2}, \]
which has the correct scaling as for the isotropic case. The diffusion coefficient $d(k)$ is plotted in Figure \ref{fig:dofk}, where we observe that for small $k$ there is negligible change but for $k>3$ we see clear and sustained increase for the diffusion coefficient. 
\begin{figure}
    \centering
 \centerline{\includegraphics[width=4.5cm]{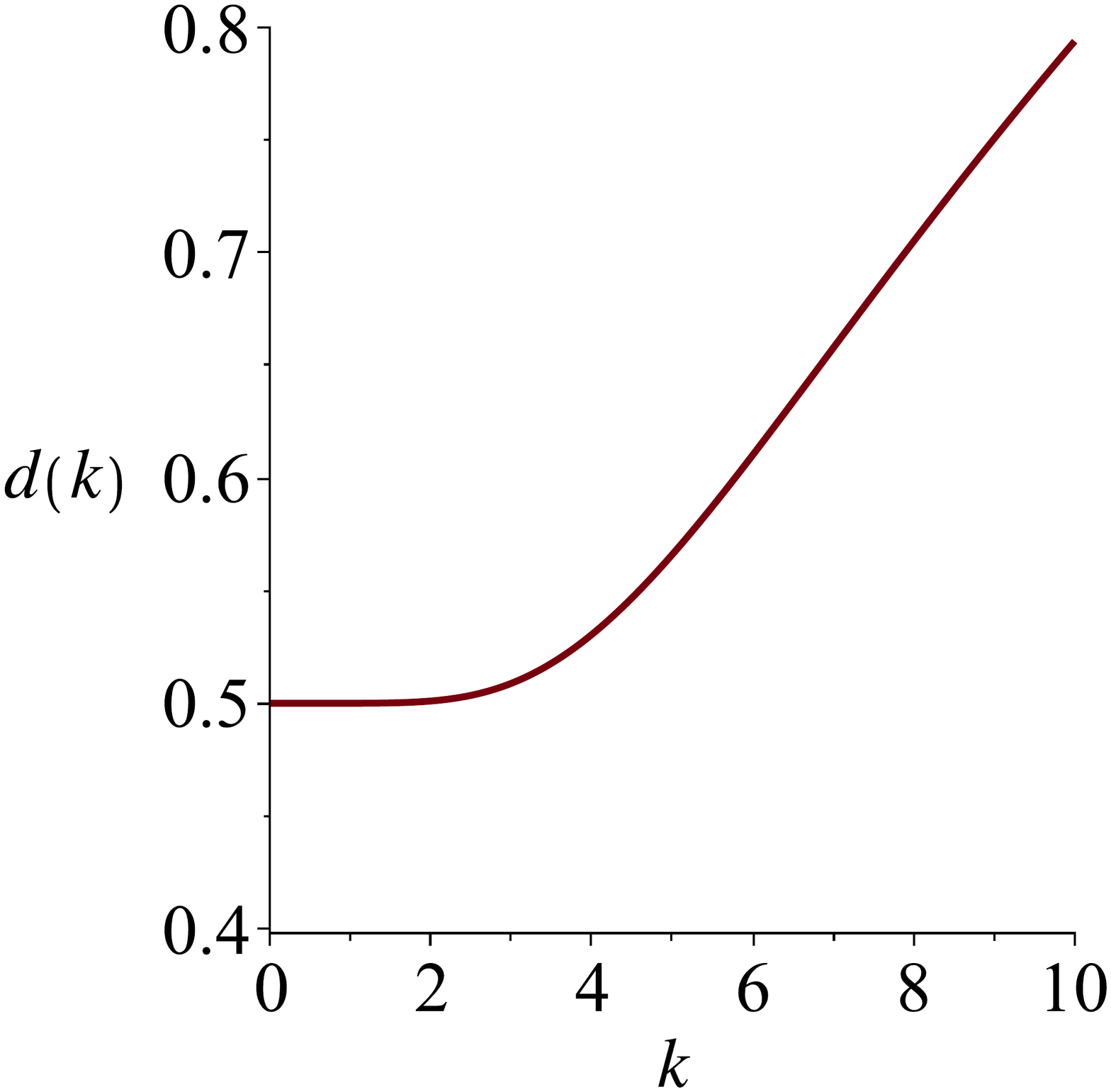}\hspace*{0.7cm}\includegraphics[width=4.5cm]{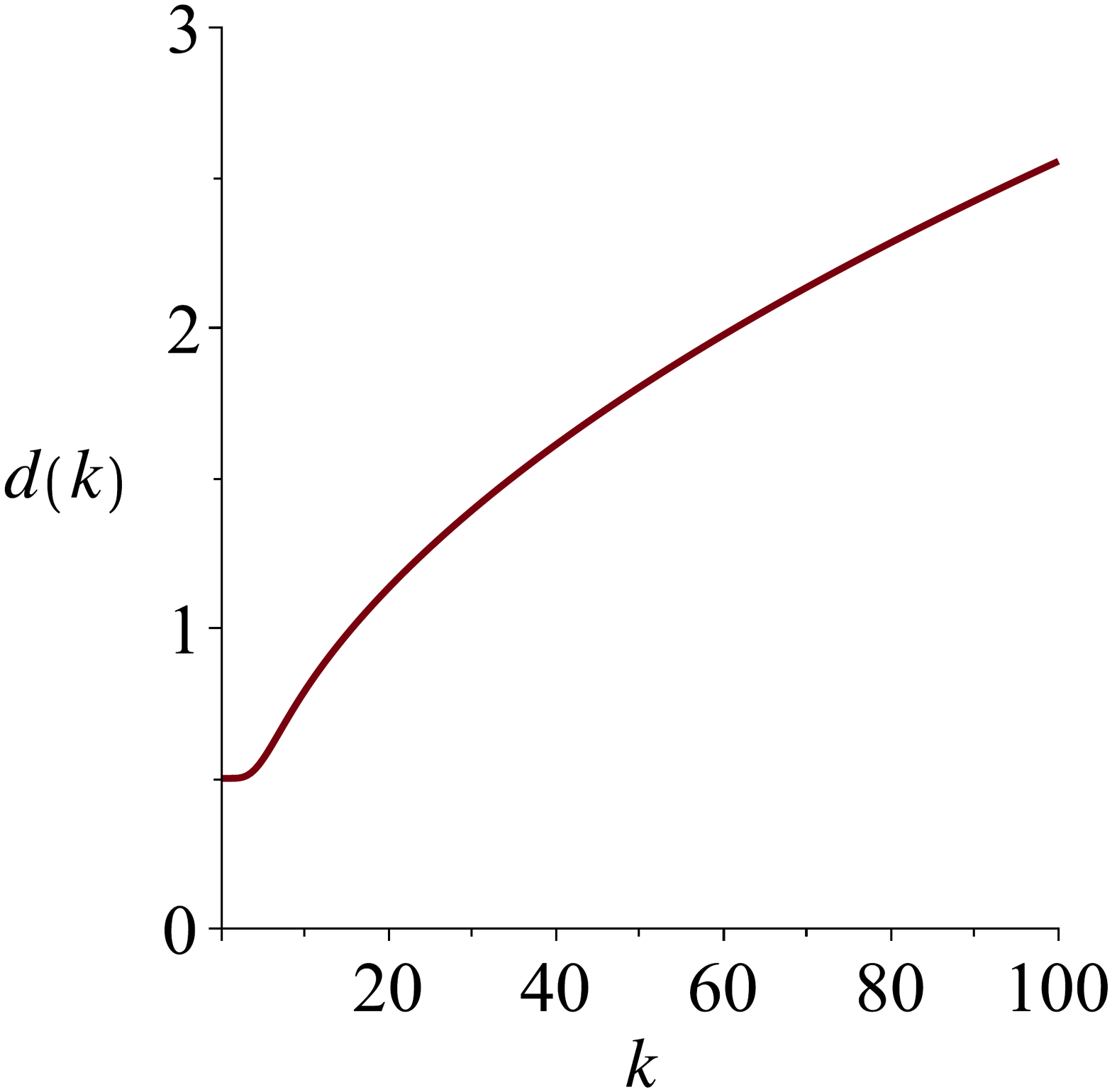}}
    \caption{Diffusion coefficient $d(k)$ as function of $k$ for $k\in[0,10]$ on the left and $k\in[0,100]$ on the right. The diffusion coefficient grows very slowly for small $k$, but then grows more rapidly for $k \gtrsim 3$. For large $k$ the increase is approximately $\propto \sqrt{k}$.} 
    \label{fig:dofk}
\end{figure}

\subsection{Transport model simulations}

The above analysis indicates that a direction-dependent turning rate coupled to a criss-cross fibre network substantially increases the diffusion, compared to either a constant turning rate or a completely isotropic network. To simulate this situation, we consider a rectangular domain $\Omega=[0,15]\times[0,5]$ with $V=\mathbb{S}^1$ ($s=1$) and define local environments corresponding to those illustrated in Figure \ref{doyleschema}. 
Specifically, to describe the parallel alignment in the left and right regions, we set $q(x,w)=e^{k w\cdot \theta}/2\pi I_0(k)$
with $k=25$ and $\mu(x,w)=1$ when $x<5$ or $x>10$. To replicate the completely isotropic scenario of \textit{Case A} we define the central region ($5\le x\le 10$) by $q(x,w)=1/2\pi$ while for the criss-cross network of \textit{Case B} we choose  $q(x,w)=\left(\bvmone+\bvmtwo\right)/2$ in the central region and take the anisotropy constant $k$ to be a variable model parameter. Consequently, for \textit{Case A} $\mu(w)=1$ for the full domain, while in \textit{Case B} we chose  $\mu(x,w)=1/(2\pi q)$ if $5\le x\le 10$. We initialise the density distribution as uniformly distributed in $\mathbb{S}^1$ with the initial macroscopic density ($\bar{p}_0$) as defined in \ref{Test1}(a), but centered at the coordinate $(3.5,2.5)$, i.e. inside the left-most region of highly aligned fibres. Simulations of the corresponding transport equation (\ref{transport.general})-\eqref{operator.2} are shown in Figure \ref{fig:Doyle}. The first row illustrates results for Case A, while subsequent rows are for Case B with increasing values $k = 0,3,10,25$, respectively. 

Under Case A (Figure \ref{fig:Doyle} first row), as cells reach the isotropic region we observe a gradual diffusive-like spread, consistent with the earlier analysis. Under Case B, but setting $k=0$ (Figure \ref{fig:Doyle} second row), generates equivalent behaviour: in line with the prediction that isotropic criss-cross networks do not alter the macroscopic dynamics {\em when the turning rate is constant}. For $k>0$, however, we observe distinct behaviour. This is minimal for small $k$ (e.g. $k=3$, Figure \ref{fig:Doyle} third row) but becomes clear for large $k$, e.g. $k=10$ (fourth row) and $k=25$ (fifth row), respectively. A noteworthy phenomenon lies in the ``droplet'' detaching from the main swarm for high anisotropy parameter values ($k=10, 25$). Here, a fraction of invaders maintain the left to right direction on reaching the central region, detaching from the main swarm. This observation is unexpected and could be of interest to confirm experimentally.

\begin{figure}[!htbp]
    \centering
    \includegraphics[scale=0.1]{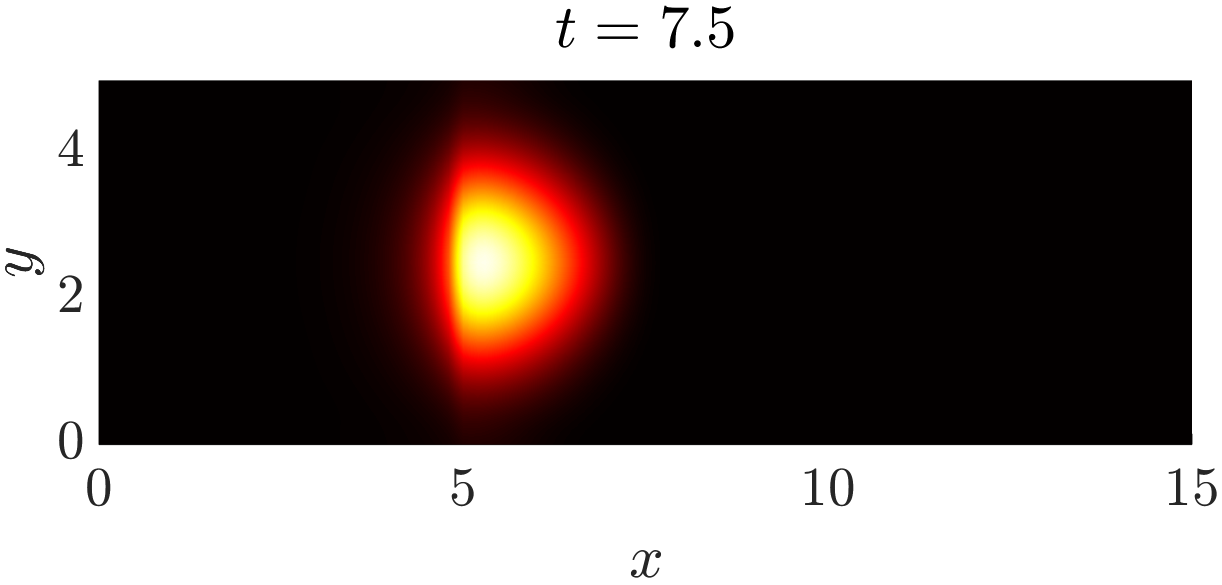}
    \includegraphics[scale=0.1]{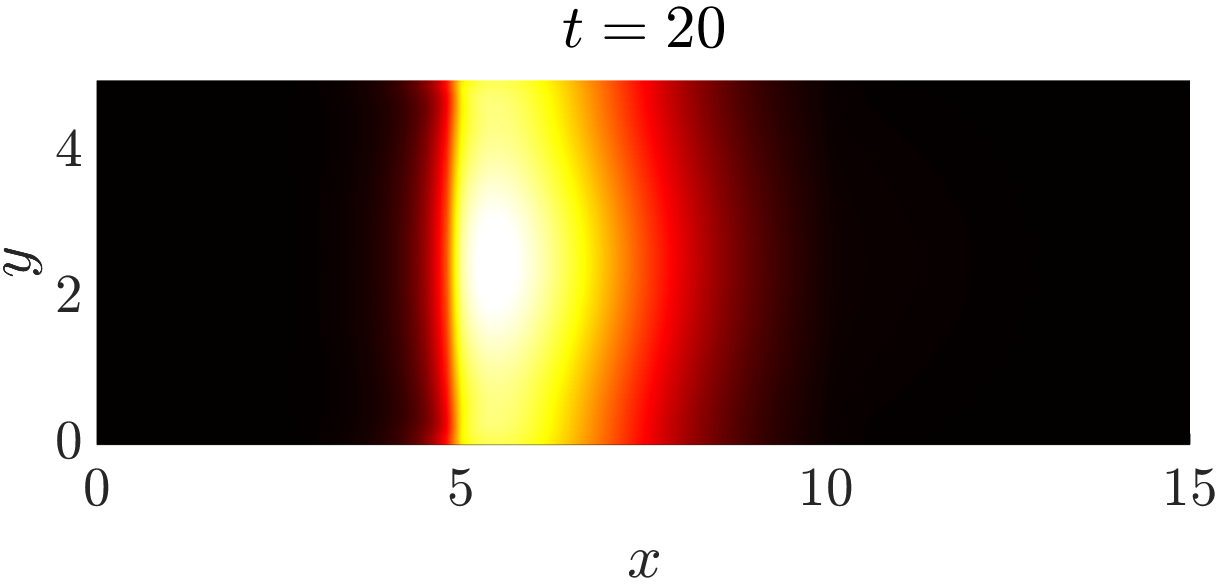}
    \includegraphics[scale=0.1]{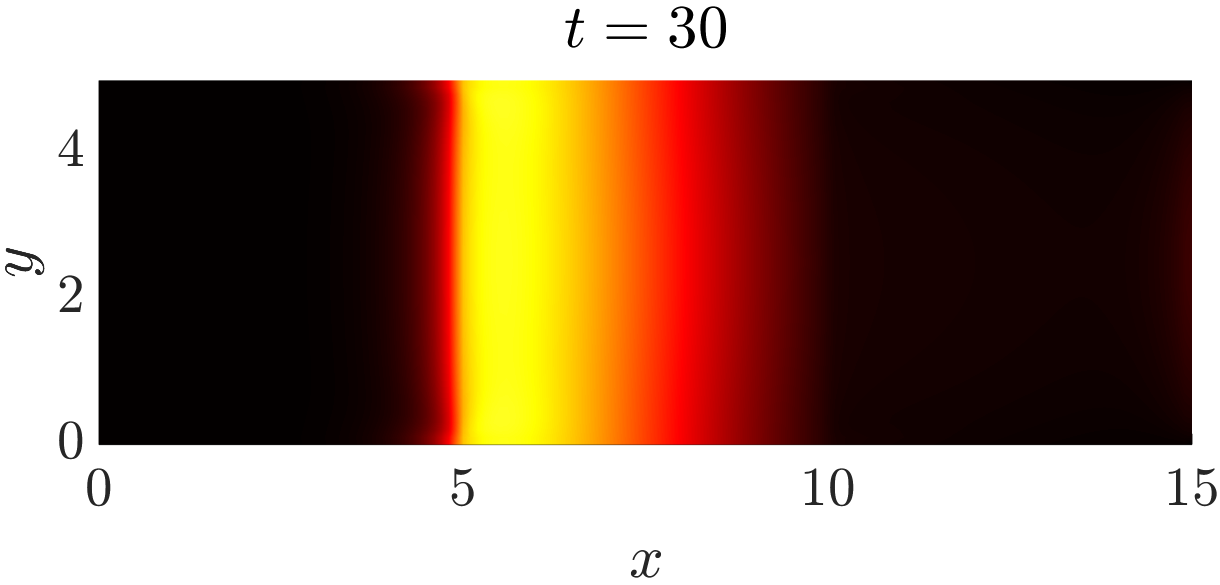}\\
    \includegraphics[scale=0.1]{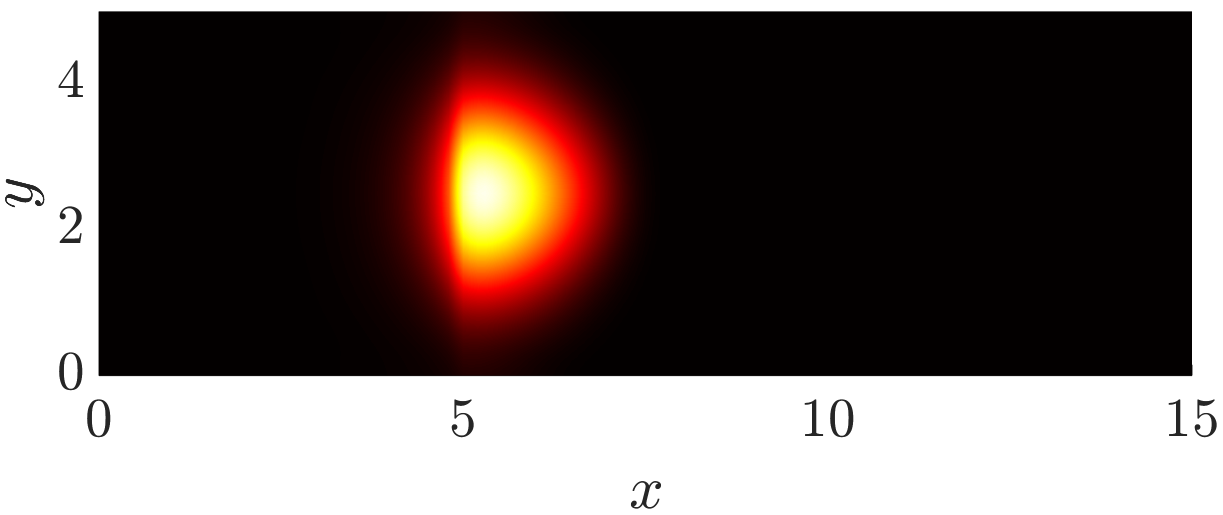}
   \includegraphics[scale=0.1]{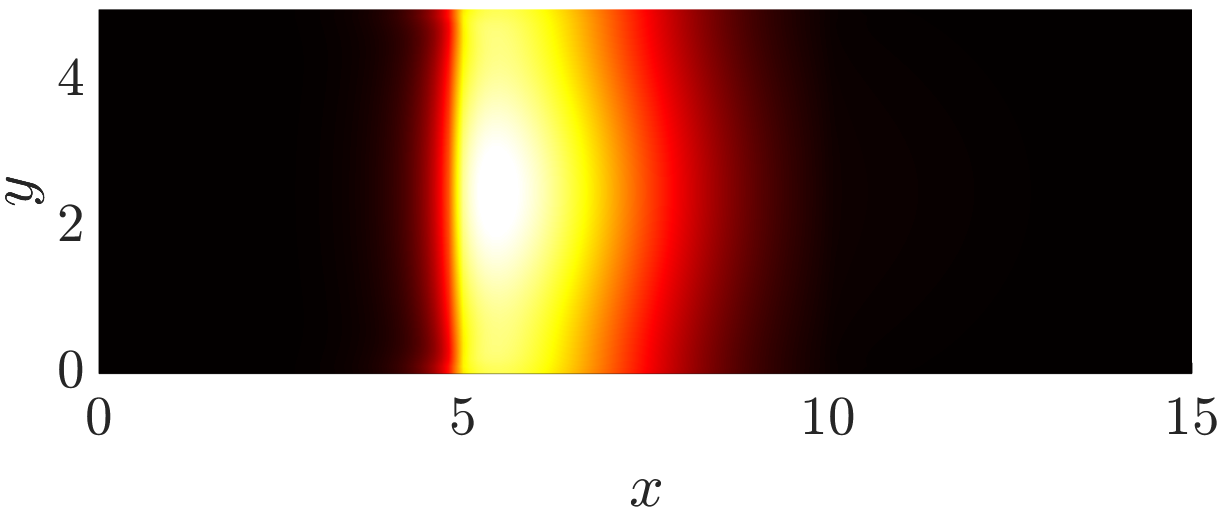}
    \includegraphics[scale=0.1]{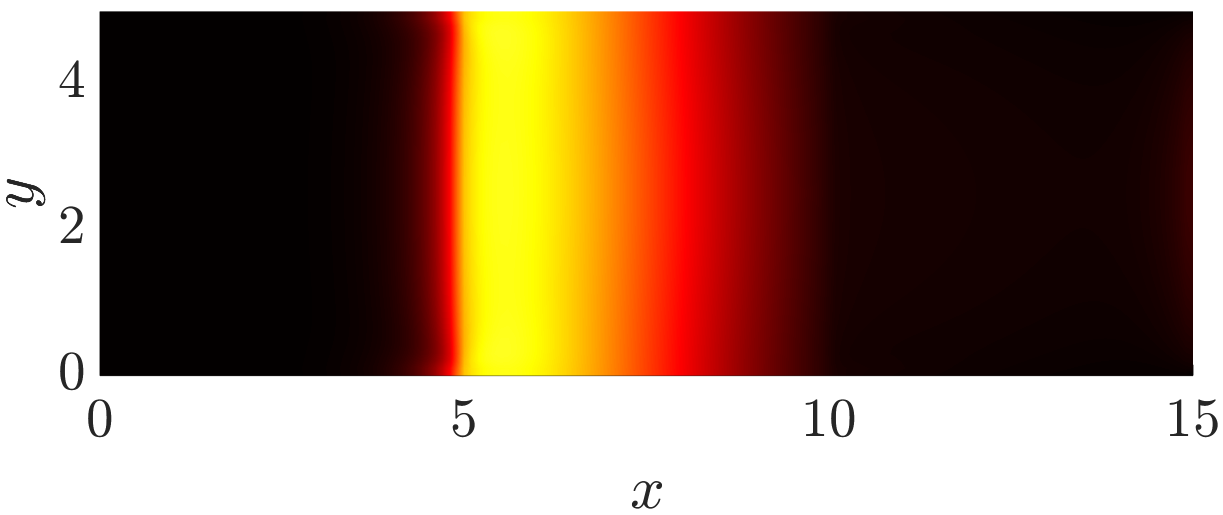}\\
    \includegraphics[scale=0.1]{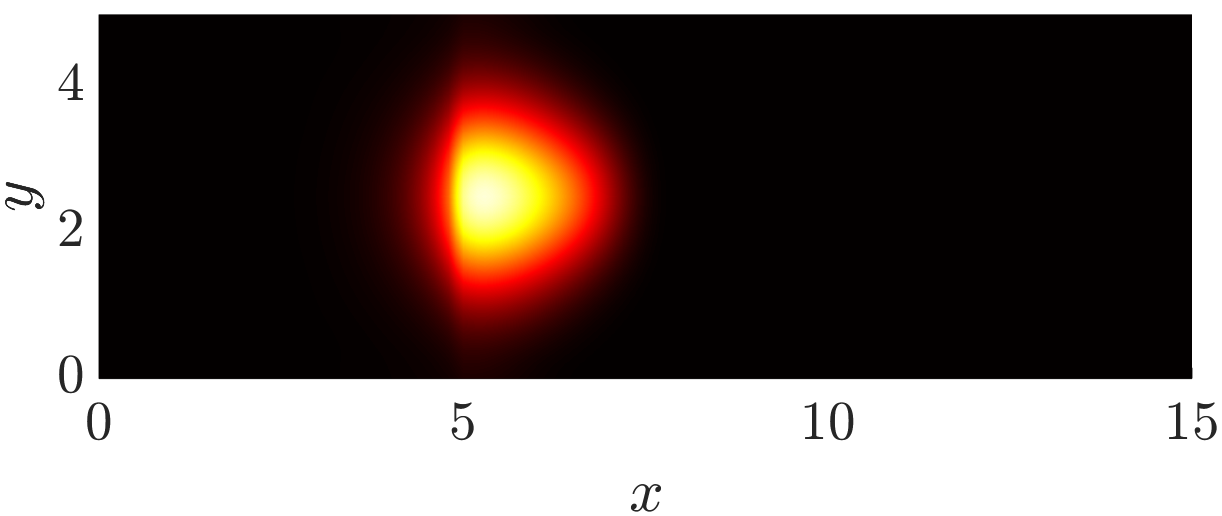}
    \includegraphics[scale=0.1]{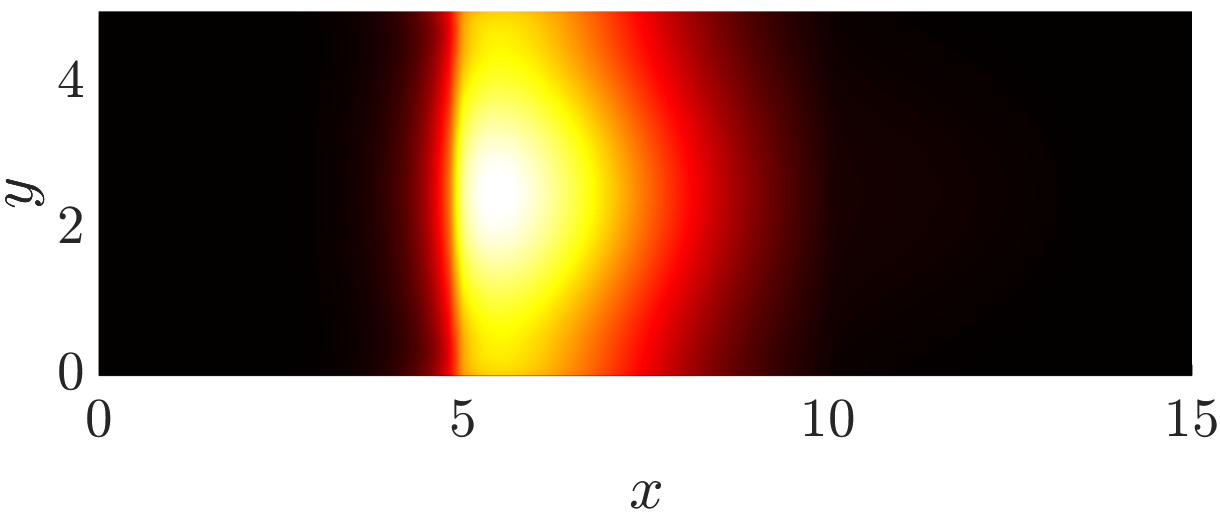}
    \includegraphics[scale=0.1]{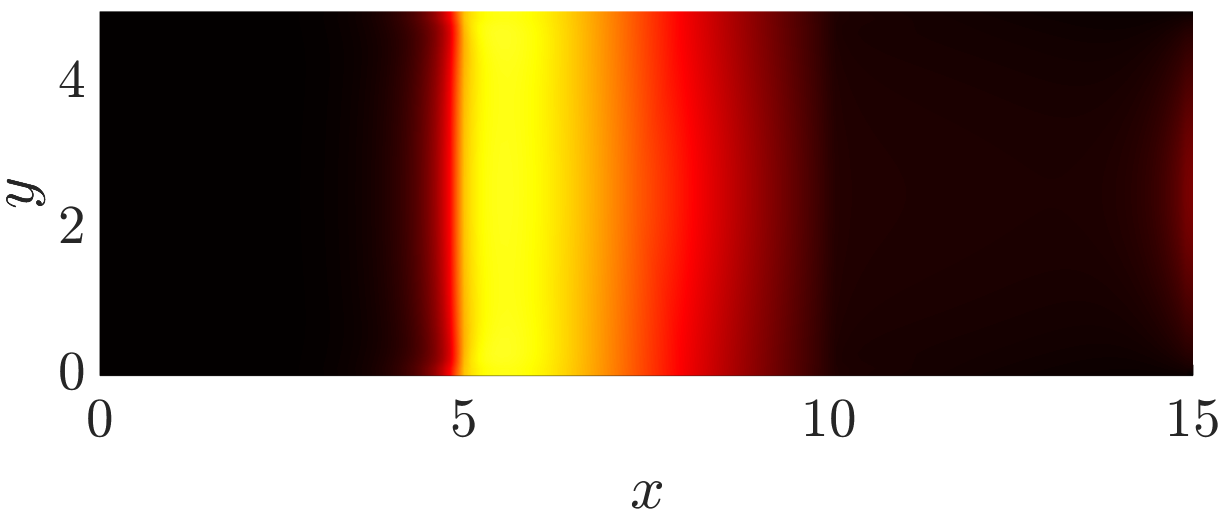}\\
    \includegraphics[scale=0.1]{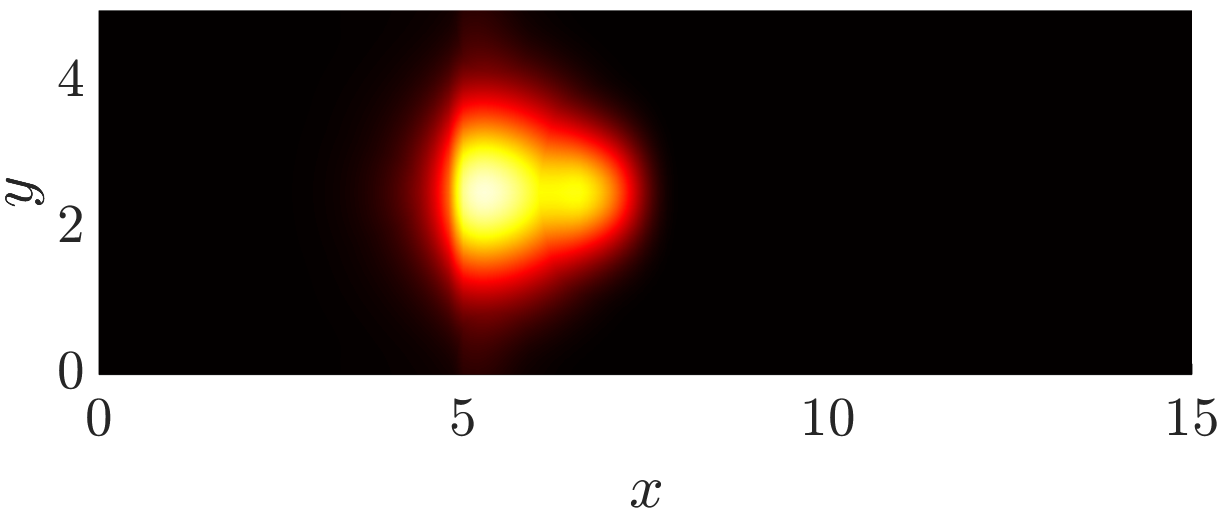}
    \includegraphics[scale=0.1]{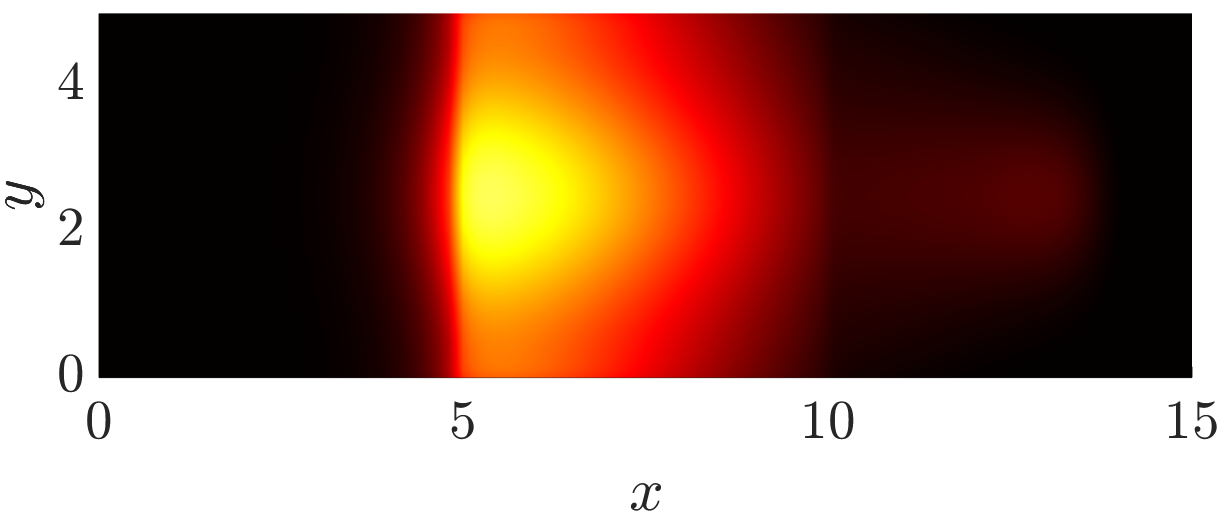}
    \includegraphics[scale=0.1]{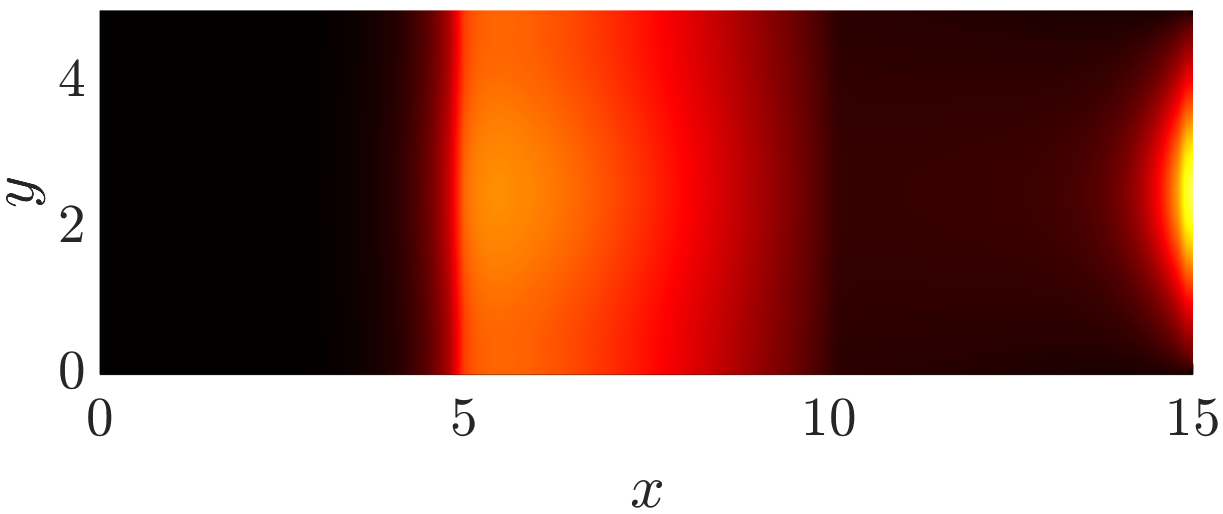}\\
    \includegraphics[scale=0.098]{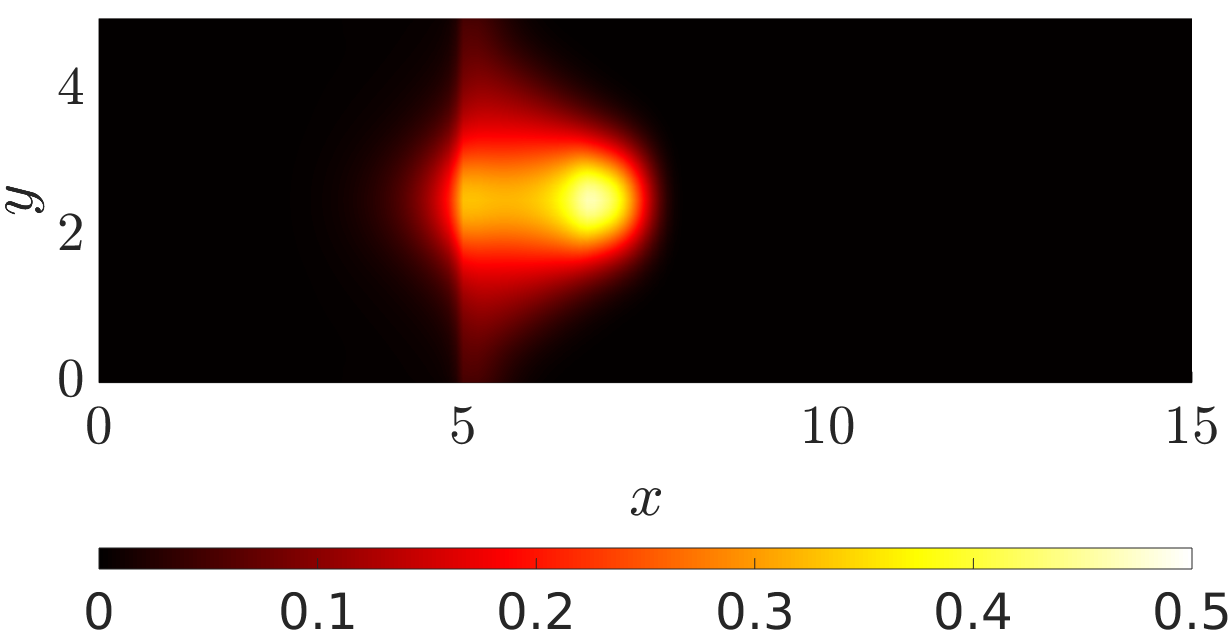}
    \includegraphics[scale=0.098]{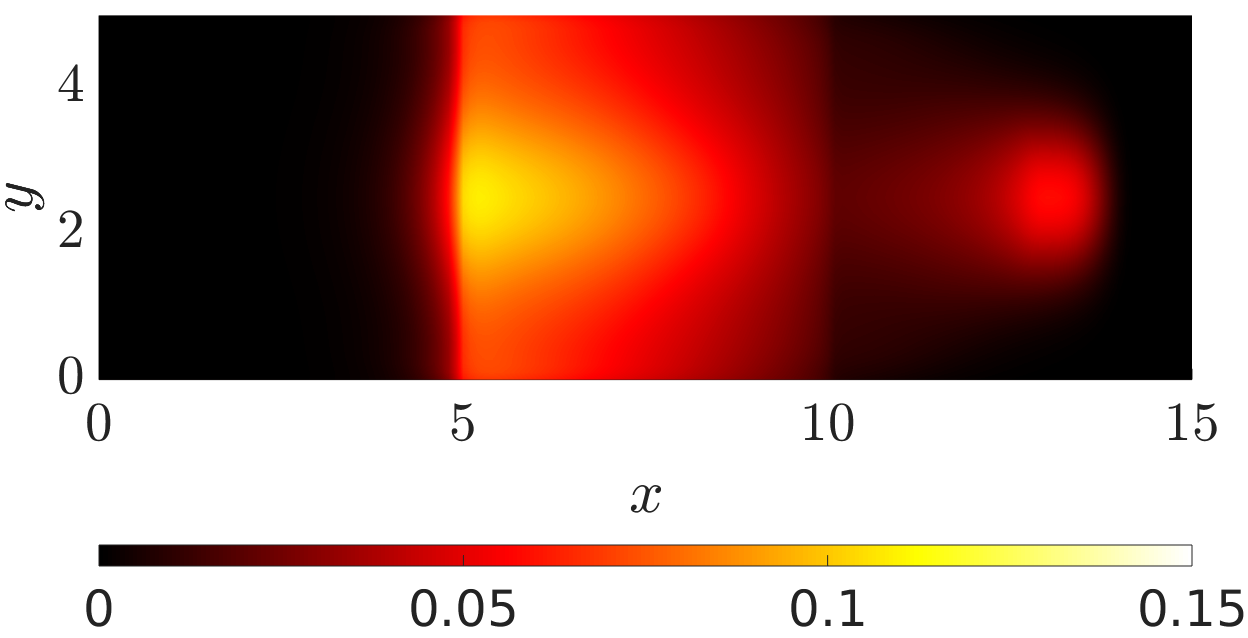}
   \includegraphics[scale=0.098]{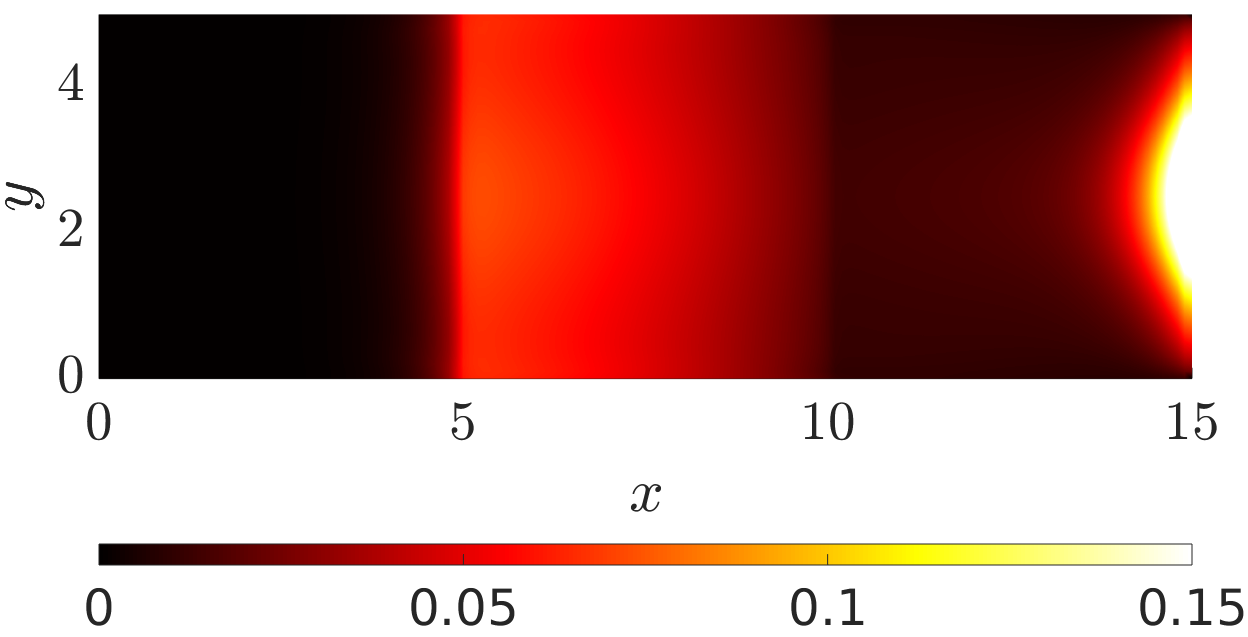}
    \caption{Cell movement on fabricated anisotropic surfaces. For all simulations we set $q(x,w)=e^{k w\cdot \theta}/(2\pi I_0(k))$ with $k=25$ and $\mu(x,w)=1$ for $x<5$ and $x>10$. First row: \textit{Case A} with $q(x,w)=1/(2\pi)$ if $5\le x\le 10$ and $\mu(w)=1$ on the full domain. Second to final row: $q(x,w)=\left(\bvmone+\bvmtwo\right)/2$ and $\mu(x,w)=1/(2\pi q)$ if $5\le x\le 10$, so that $s=1$. In particular: (second row) $k=0$, (third row) $k=3$, (fourth row) $k=10$, (final row) $k=25$.}
    \label{fig:Doyle}
\end{figure}

\section{Conclusions}\label{s:discussion}

In this paper we have analysed a transport equation for cell migration along oriented fibres, extending the model proposed in \cite{Hillen.05} to include a turning rate that depends on the microscopic velocity of the cells and, in turn, on the anisotropic structure of their environment. This key extension admits a more nuanced and realistic description for how a migratory cell population responds to alignment of the environment, with several recent studies investigating the impact of oriented collagen fibre networks on the orientation, speed and persistence of movement of cells (e.g. see \cite{riching2014,Ray2017,ray2018}).

Formally, the resulting equation is a non-homogeneous linear Boltzmann equation with a micro-reversible process in which the cross-section is factorised into the distribution of the fibres and the direction-dependent turning rate. The dependence of the turning rate on the orientation alters the entire mathematical set up with respect to \cite{Hillen.05}. First, the equilibrium distributions of the system now depend on both the turning rate and on the transition probability. Consequently, the null-space of the turning operator needs to be equipped with a direction-dependent integration weight. Moreover, the average of the equilibrium state, defined in Eq. \eqref{mean.dir.fr}, becomes linked to both the fibre distribution and the turning frequency. This implies new solvability conditions for the parabolic and hyperbolic limits. We study many special cases, some of which are relevant for applications while others provide a theoretical connection to previous results. 

Further, the orientation-dependent turning rate permits the definition of a new adjoint persistence \eqref{adj.per.gen}, taking into account the cell persistence encoded in the turning rate. Consequently, direction-dependent turning rates are capable of generating a persistent random walk even under a symmetric configuration of fibres. In this framework, there is no directional persistence if \eqref{mean.dir.fr} vanishes. This holds true, for example, if Eq. \eqref{nec.diff.cond.} holds true, which means that fibres are bi-directed and cells having a certain direction have the same turning frequency regardless of the sense in which they are travelling on that direction.

To illustrate the broader relevance of the extended framework, we considered an application to cell migration across precisely engineered network arrangements, for example as fabricated in the studies of \cite{doyle2009}. Under the the original framework of \cite{Hillen.05} the two 
configurations shown in Figure \ref{doyleschema} generate equivalent behaviour, yet extending to direction-dependent turning rates could result in a markedly different response. Specifically, under the criss-cross network a markedly faster passage could be observed which, translated to the macroscopic level, yielded enhanced diffusion. Clearly, such behaviour has potential to significantly alter the predictions from modelling invasion pathways in complex anisotropic tissues, for example glioma invasion in the central nervous tissue \cite{Painter2013,Engwer_Hillen_Surulescu.15,Swan.17}.

Within the current work we have restricted to movement under negligible modification of the network, as in amoeboid movement or cell migration on engineered fibronectin strips. Under {\em in vivo} mesenchymal migration, however, contact-guided movement can be coupled to significant matrix remodelling, for example fibres becoming aligned along the migratory path; simulations of the simpler transport model in this scenario revealed symmetry breaking behaviour, with cells forming and migrating along a network of aligned ``cellular highways'' \cite{Hillen.05,Painter.09}. The addition of direction-dependent turning has clear potential to impact on such pattern formation scenarios and a key aim of future studies will be to investigate this phenomenon.\\

\noindent{\bf Acknowledgements:} {\small This work was stimulated by the PhD thesis of Amanda Swan \cite{AmandaThesis}, who started looking into the parabolic scaling of a velocity dependent turning rate. TH is grateful to support through the Natural Sciences and Engineering Research Council of Canada. NL is recipient of a Post-Doc grant of the Italian National Institute of High Mathematics (INdAM) and ackowledges support by the Italian Ministry for Education,
University and Research (MIUR) through the ``Dipartimenti di Eccellenza'' Programme (2018-
2022), Department of Mathematical Sciences, G. L. Lagrange, Politecnico di Torino (CUP:
E11G18000350001), and support from ``Compagnia di San Paolo'' (Torino, Italy).  }

\bibliography{references}

\label{lastpage}

\end{document}